\crefname{claim}{Claim}{Claims}
\tikzset{>=stealth'} %
\tikzstyle{every picture} = [style=semithick]
\tikzstyle{every node}    = [font=\small]
\tikzstyle{every state}   = [thick, minimum size=1mm, inner sep=2pt]
\tikzstyle{initial}   = [initial   by arrow, initial   text=, initial   distance=4mm]
\tikzstyle{accepting} = [accepting by arrow, accepting text=, accepting distance=4mm]
\newcommand{\setN}{\mathbb{N}}
\newcommand{\setZ}{\mathbb{Z}}
\newcommand{\interval}[2]{[#1 , #2]}
\renewcommand{\vec}[1]{{\mathbf #1}}
\newcommand{\vass}{\mathcal{V}}
\newcommand{\ztest}{\mathtt{tst}}
\newcommand{\lps}{L}
\newcommand{\vloop}{\updownarrow}
\newcommand{\pvloop}{\uparrow}
\newcommand{\nvloop}{\downarrow}
\newcommand{\woca}{\mathcal{A}}
\newcommand{\weight}{\lambda}
\newcommand{\wrightarrow}[2]{%
  \xrightarrow[{\raisebox{1.25ex-\heightof{$\scriptstyle#2$}}[0pt]{$\scriptstyle#2$}}]{#1}%
}
\newcommand{\var}[1]{\mathtt{#1}}
\newcommand{\eqdef}{\stackrel{\scriptscriptstyle \mathrm{def}}{=}}
\newcommand{\rcomp}{\fatsemi}
\newcommand{\norm}[1]{\|#1\|}
\newcommand{\disp}[1]{\operatorname{disp}(#1)}
\title{%
  Reachability in Two-Dimensional Vector Addition Systems with States: One Test is for Free
}
\titlerunning{%
  Reachability in 2-dim VASS: One Test is for Free
}
\author{Jérôme Leroux}{LaBRI, Univ. Bordeaux, CNRS, Bordeaux-INP, Talence, France}{jerome.leroux@labri.fr}{}{}
\author{Grégoire Sutre}{LaBRI, Univ. Bordeaux, CNRS, Bordeaux-INP, Talence, France}{gregoire.sutre@labri.fr}{}{}
\authorrunning{J. Leroux and G. Sutre}
\keywords{%
  Counter machine,
  Vector addition system,
  Reachability problem,
  Formal verification,
  Infinite-state system
}
\begin{document}

\maketitle

\begin{abstract}
  Vector addition system with states is an ubiquitous model of
  computation with extensive applications in
  computer science.
  The
  reachability problem for vector addition systems is central since
  many other problems reduce to that question.
  The problem is decidable and it was recently proved that the dimension of the vector addition
  system is an important parameter of the complexity. In fixed
  dimensions larger than two, the complexity is not known (with huge
  complexity gaps).
  In dimension two, the reachability problem
  was shown to be PSPACE-complete by Blondin et al. in 2015.
  We consider an extension of this model, called 2-TVASS,
  where the first counter can be tested
  for zero. This model naturally extends the classical model of one
  counter automata (OCA).
   We show that reachability is still solvable in polynomial space for 2-TVASS.
  As in the work Blondin et al.,
  our approach relies on the existence of small reachability certificates
  obtained by concatenating polynomially many cycles.
\end{abstract}

\section{Introduction}
\label{sec:introduction}
\subparagraph{Context}
Vector addition systems with states (VASS for short) is an
ubiquitous model of computation with extensive applications
in computer science. This model, equivalent to Petri nets, is defined as a finite state automaton
with transitions acting on a set of counters ranging over the
nonnegative integers by adding integers. The number of
counters is called the dimension and we write $d$-VASS for a VASS with
$d$ counters.
The central problem on VASS is the reachability problem
since many other problems are reducible to reachability
questions. This problem was first proved to be hard for the
exponential-space complexity by Lipton~\cite{lipton76} in 1976. At that
time, the decidability of the problem was open. Three years
later~\cite{HP79}, the reachability problem for $2$-VASS was proved to
be decidable
by Hopcroft and Pansiot by observing that reachability sets of
$2$-VASS are \emph{semilinear}. Dimension two
is a special case since in contrast reachability sets of
$3$-VASS are not semilinear in general.
A few years later, the
reachability problem for VASS was proved to be decidable in any
dimension by Mayr~\cite{Mayr81,Mayr84} thanks
to an algorithm simplified later by Kosaraju~\cite{Kosaraju82} and
Lambert~\cite{Lambert92}. Recently, the problem was revisited by
Leroux~\cite{Leroux10,Leroux11,Leroux12} by observing that the
reachability problem can be decided with a simple algorithm based on
semilinear inductive invariants. Despite recent improvements on
the reachability problem, the exact complexity is still
open; the known lower-bound is
Tower-hard~\cite{DBLP:conf/stoc/CzerwinskiLLLM19} and the known
upper-bound is Ackermannian-easy~\cite{DBLP:conf/lics/LerouxS19}.

\smallskip

When adding to VASS the ability to test counters for zero, the
reachability problem becomes undecidable in dimension two via a
direct simulation of two-counters Minsky
machines~\cite[Chapter~14]{minsky1967computation}.
In dimension one, the
class of VASS that we obtain by adding zero-tests are usually called one counter automata (OCA for
short). The reachability problem for that class was proved to be
NP-complete in~\cite{DBLP:conf/concur/HaaseKOW09}. The class of OCA
can be naturally extended by introducing the class of $d$-TVASS (or
just TVASS when the dimension $d$ is not fixed)
corresponding to a $d$-VASS extended with zero-tests on
the first counter. In that context, a OCA is just a $1$-TVASS.
The
reachability problem is known to be decidable for TVASS in any
dimension~\cite{DBLP:journals/entcs/Reinhardt08,DBLP:conf/mfcs/Bonnet11}, but
the complexity is open, even in dimension two.

\smallskip

In dimension two, the reachability problem for VASS is known to be
PSPACE-complete. This result was obtained thanks to a series of
results from several authors. The PSPACE lower-bound was proved
in~\cite{DBLP:journals/iandc/FearnleyJ15} and PSPACE membership
was obtained as follows (notice that the problem was recently
revisited in~\cite{DBLP:conf/mfcs/CzerwinskiLLP19}).
First of all, the reachability
relation was proved to be semilinear
in~\cite{DBLP:conf/concur/LerouxS04} by observing that it is
\emph{flattenable}, meaning that the
reachability relation can be captured by a finite set of regular
expressions, so called \emph{linear path schemes}, of the form
$\alpha_0\beta_1^*\alpha_1\cdots \beta_k^*\alpha_k$ where
$\alpha_0,\ldots,\alpha_k$ are paths and $\beta_1,\ldots,\beta_k$ are
cycles in the underlying graph of the
VASS. 
It was then proved in~\cite{BFGHM-lics15} that these regular
expressions can be exponentially bounded, and $k$ is bounded by a polynomial
in the number of states. By introducing a system of inequalities over some variables $n_1,\ldots,n_k$
counting the number of times the cycles $\beta_1,\ldots,\beta_k$ are
iterated, an exponential bound on small paths witnessing
reachability was derived from a small solution
theorem~\cite{pottier91,DBLP:journals/siammax/BoroshT92}. From such a
bound, it follows that the reachability problem is decidable in
PSPACE.

\subparagraph{Our contribution}
In this paper, we are interested in the complexity of the reachability
problem for $2$-TVASS. We successfully follow the approach used for $2$-VASS
and outlined above.
This approach is not easily lifted to $2$-TVASS,
because the presence of zero-tests breaks a fundamental property of VASS,
namely \emph{monotonicity}.
By means of new proof techniques to deal with zero-tests on a single counter,
we obtain the following results:
\begin{itemize}
\item  We show that the reachability relation of a $2$-TVASS is
flattenable. Our proof does not provide by itself any complexity bound
but it is direct and simple, and it provides a description of the reachability relation by
\emph{linear path schemes} $\alpha_0\beta_1^*\alpha_1\cdots \beta_k^*\alpha_k$.
\item We prove that these linear path schemes can be exponentially
bounded, and the number $k$ can be polynomially bounded in the number
of states of the $2$-TVASS. This bound is obtained via a detour through the
class of \emph{weighed one counter automata} (WOCA for short).
We believe that our results on WOCA may be of independent interest.
\item  We derive an exponential bound on paths witnessing
reachability thanks to a small solution theorem. From that bound, we
deduce that the reachability problem for $2$-TVASS is decidable in
polynomial space, and so is PSPACE-complete.
This is, to our knowledge, one of the few problems on extended VASS
whose precise complexity is known.
\end{itemize}

\subparagraph{Related work}
TVASS are
naturally related to other classical extensions of VASS by observing
that a reset is a ``weak test'', and a testable counter is a ``weak stack''.

By extending $d$-VASS with resets on the two first
counters, we obtain the class of $d$-RRVASS. It is known that
the reachability problem for this class is undecidable if $d\geq 3$ while it is
decidable for $d=2$~\cite{dufourd98}. Since a reset can be simulated
by a test, the class of $2$-TRVASS obtained from $2$-VASS by allowing tests on the first
counter and resets on the second
one, contains the $2$-RRVASS. In~\cite{DBLP:conf/fsttcs/FinkelLS18}, we proved that the
reachability problem for $2$-TRVASS is decidable by proving that the
reachability relation is effectively semilinear. It worth noticing
that this relation is \emph{not} flattenable, and the complexity of the
reachability problem for $2$-RRVASS and $2$-TRVASS is still open.
When dealing with the
\emph{lossy semantics} (i.e., when counters can be decreased
arbitrarily at any step of the execution), tests and resets have
exactly the same behavior. In that case, the reachability problem for
lossy Minsky machines of arbitrary dimension becomes decidable and the exact complexity is
Ackermannian complete~\cite{DBLP:conf/mfcs/Schnoebelen10}.

The class of TVASS is also related to the class of pushdown VASS (PVASS for
short) obtained by extending VASS with a stack over a
finite alphabet. A PVASS can easily simulate any TVASS since a
testable counter can be simulated with a stack. The decidability of the reachability
problem is open even for $1$-PVASS. We
proved in~\cite{DBLP:conf/icalp/LerouxST15} that the control-state reachability problem for $1$-PVASS is
decidable. The complexity is still open.

\smallskip

Due to space constraints,
some proofs are missing and some proofs are only sketched.
Detailed proofs can be found in appendix.

\section{Flattenability of 2-TVASS}
\label{sec:flattenability-of-2-tvass}
\subparagraph{Preliminaries}

The usual sets of \emph{integers} and \emph{nonnegative integers} are denoted by $\setZ$ and $\setN$,
respectively.
For any $a, b \in \setZ$, we let $\interval{a}{b} \eqdef \{z \in \setZ \mid a \leq z \leq b\}$.
A $d$-dimensional \emph{vector} of integers is a tuple $\vec{v} = (v_1, \ldots, v_d)$ in $\setZ^d$.
Its $i$th \emph{component} $v_i$ is also written $\vec{v}(i)$.
We denote by $\norm{\vec{v}}$ its \emph{infinity norm} $\max \{|v_1|, \ldots, |v_d|\}$.
A \emph{word} over some alphabet $\Sigma$ is a finite sequence $w = a_1 \cdots a_n$ of elements $a_i \in \Sigma$.
The \emph{length} of $w$ is $|w| \eqdef n$.
Given two binary relations $R$ and $S$ over some set,
we let $R \rcomp S \eqdef \{(x, z) \mid \exists y : x \,R\, y \,S\, z\}$
denote their \emph{relational composition}.
The \emph{powers} of a binary relation $R$ are inductively defined by
$R^1 \eqdef R$ and $R^{n+1} \eqdef R \rcomp R^n$.

\subparagraph{Vector Addition Systems with States and One Test}

A TVASS is a vector addition system with states (VASS) such that
the first counter can be tested for zero.
Formally, a \emph{$d$-dimensional TVASS} (shortly called a \emph{$d$-TVASS}),
is a triple $\vass = (Q, \Sigma, \Delta)$ where
$Q$ is a finite nonempty set of \emph{states},
$\Sigma \subseteq \setZ^d \cup \{\ztest\}$ is a finite set of \emph{actions},
and $\Delta \subseteq Q \times \Sigma \times Q$ is a finite set of \emph{transitions}.
Even though they are not mentioned explicitly,
$\vass$ implicitly comes with $d$ counters $\var{c}_1, \ldots, \var{c}_d$
whose values range over nonnegative integers.
Actions in $\Sigma$ are
either \emph{addition} actions $\vec{a} \in \setZ^d$ or
the \emph{zero-test} action $\ztest$.
Intuitively,
an addition action $\vec{a} = (a_1, \ldots, a_d)$ performs the instruction
$(\var{c}_1, \ldots, \var{c}_d) \leftarrow (\var{c}_1 + a_1, \ldots, \var{c}_d + a_d)$,
provided that all counters remain nonnegative ;
the zero-test action $\ztest$ tests the first counter for zero and leaves all counters unchanged.
We let
$A \eqdef \{(p, \sigma, q) \in \Delta \mid \sigma \in \setZ^d\}$
and
$T \eqdef \{(p, \sigma, q) \in \Delta \mid \sigma = \ztest\}$
denote the sets of \emph{addition} transitions and \emph{zero-test} transitions,
respectively.
The notation $\norm{\Sigma}$ stands for $\max_{\vec{a}} \norm{\vec{a}}$ where
$\vec{a}$ ranges over addition actions (or $\{\vec{0}\}$ if there are none).
A \emph{$d$-dimensional VASS} (shortly called a \emph{$d$-VASS}) is a $d$-TVASS
whose set of actions $\Sigma$ excludes $\ztest$, i.e., $\Sigma \subseteq \setZ^d$.

\smallskip

We define the operational semantics of a $d$-TVASS
$\vass = (Q, \Sigma, \Delta)$ as follows.
A \emph{configuration} of $\vass$ is a pair $(q, \vec{x})$
where $q \in Q$ is a state and
$\vec{x} \in \setN^d$ is a vector
denoting the contents of the counters $\var{c}_1, \ldots, \var{c}_d$.
For the sake of readability,
configurations $(q, \vec{x})$ are written $q(\vec{x})$ in the sequel.
For each transition $\delta \in \Delta$,
we let $\xrightarrow{\delta}$ denote the least binary relation
over configurations satisfying the following rules:
$$
\begin{array}{@{}c@{\qquad\quad}c@{}}
  \inference
    {\delta = (p, \vec{a}, q) \qquad \vec{x} \in \setN^d \qquad \vec{x} + \vec{a} \geq \vec{0}}
    {p(\vec{x}) \xrightarrow{\delta} q(\vec{x} + \vec{a})}
  &
  \inference
    {\delta = (p, \ztest, q) \qquad \vec{x} \in \setN^d \qquad \vec{x}(1) = 0}
    {p(\vec{x}) \xrightarrow{\delta} q(\vec{x})}
\end{array}
$$

Given a word $\pi = \delta_1 \cdots \delta_n$ of transitions $\delta_i \in \Delta$,
we denote by $\xrightarrow{\pi}$ the binary relation over configurations
defined as the relational composition
$\xrightarrow{\delta_1} \rcomp \cdots \rcomp \xrightarrow{\delta_n}$.
The relation $\xrightarrow{\varepsilon}$ denotes the identity relation on configurations.
Given a subset $L \subseteq \Delta^*$,
we let $\xrightarrow{L}$ denote the union $\bigcup_{\pi \in L} \xrightarrow{\pi}$.
The relation $\xrightarrow{\Delta^*}$, also written $\xrightarrow{*}$,
is called the \emph{reachability relation} of $\vass$.
Observe that $\xrightarrow{*}$ is the reflexive-transitive closure of
the \emph{step} relation ${\rightarrow} \eqdef {\xrightarrow{\Delta}}$.

\smallskip

A \emph{run} is a finite, alternating sequence
$(q_0(\vec{x}_0), \delta_1, q_1(\vec{x}_1), \ldots, \delta_n, q_n(\vec{x}_n))$
of configurations and transitions,
satisfying $q_{i-1}(\vec{x}_{i-1}) \xrightarrow{\delta_i} q_i(\vec{x}_i)$
for all $i \in \interval{1}{n}$.
Note that this condition entails that
$q_0(\vec{x}_0) \xrightarrow{\delta_1 \cdots \delta_n} q_n(\vec{x}_n)$.
The word $\delta_1 \cdots \delta_n$ is called the \emph{trace} of the run
and $n$ is its \emph{length}.

\begin{figure}[t]
  \centering
  \begin{tikzpicture}[node distance=2.5cm, text centered, ->, bend angle=30]
    \node[state] (A)                           {$A$};
    \node[state] (B) [right of=A, xshift=10mm] {$B$};

    \draw[->]
      (A) edge [bend left] node [above] {$\var{c}_1 == 0$} (B)
      (B) edge [bend left] node [below] {$\var{c}_1 \leftarrow \var{c}_1 + 1$} (A)
    ;

    \draw[->]
      (A) edge [loop left, looseness=10, out=-150, in=150] node [left]
      {$(\var{c}_1, \var{c}_2) \leftarrow (\var{c}_1 - 3, \var{c}_2 + 4)$} (A)
    ;

    \draw[->]
      (B) edge [loop right, looseness=10, out=30, in=-30] node [right]
      {$(\var{c}_1, \var{c}_2) \leftarrow (\var{c}_1 + 1, \var{c}_2 - 1)$} (B)
    ;
  \end{tikzpicture}
  \caption{%
    A simple $2$-dimensional TVASS with actions written in verbose pseudo-code
    (to help the reader).
    Instructions of the form
    $(\var{c}_1, \var{c}_2) \leftarrow (\var{c}_1 + a_1, \var{c}_2 + a_2)$
    stand for addition actions $(a_1, a_2)$.
    The instruction $\var{c}_1 == 0$ stands for the zero-test action $\ztest$.
  }
  \label{fig:2TVASS-example}
\end{figure}
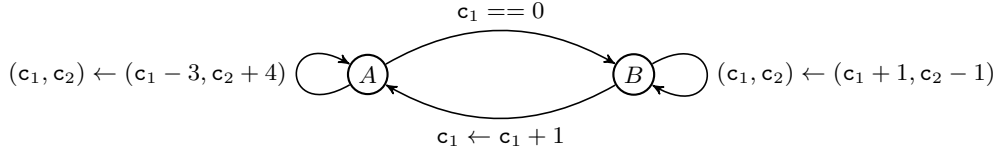

\begin{example}
  \label{ex:2TVASS}
  Consider the $2$-TVASS depicted in Figure~\ref{fig:2TVASS-example}.
  There are two states, namely $A$ and $B$, and four transitions, namely:
  $$
  \begin{array}[b]{@{}rcl@{\,}c@{\,}c@{\,}c@{\,}r@{\qquad\qquad}rcl@{\,}c@{\,}c@{\,}c@{\,}r@{}}
    \delta_{AA} & = & (A &,& (-3, 4) &,& A) & \delta_{BB} & = & (B &,& (1, -1) &,& B)\\
    \delta_{AB} & = & (A &,& \ztest &,&  B) & \delta_{BA} & = & (B &,& (1, 0)  &,& A)
  \end{array}\:.
  $$
  Starting from the configuration $A(3, 5)$,
  the zero-test transition $\delta_{AB}$ cannot be taken as the first counter
  is not zero.
  But we can take the loop on $A$ and reach the configuration $A(0, 9)$,
  which is formally written as the step $A(3, 5) \xrightarrow{\delta_{AA}} A(0, 9)$.
  We may then move to $B$ via the zero-test transition,
  take the loop on $B$ four times, and get back to $A$.
  This yields the run
  $\rho = (A(3, 5), \delta_{AA}, A(0, 9), \delta_{AB}, B(0, 9), \delta_{BB}, B(1, 8), \ldots, \delta_{BB}, B(4, 5), \delta_{BA}, A(5, 5))$.
  The trace of this run is $\pi = \delta_{AA} \delta_{AB} (\delta_{BB})^4  \delta_{BA}$,
  and so we have $A(3, 5) \xrightarrow{\pi} A(5, 5)$.

  \smallskip

  The run $\rho$ witnesses the fact that $A(3, 5) \xrightarrow{*} A(5, 5)$.
  In a standard $2$-VASS, i.e., without zero-test,
  the run $\rho$ could be ``replayed'' from the larger configuration $A(5, 5)$.
  More precisely, we would have $A(3, 5) \xrightarrow{\pi} A(5, 5) \xrightarrow{\pi} A(7, 5)$.
  This is not the case in our $2$-TVASS.
  Even though $A(3, 5) \xrightarrow{\pi} A(5, 5)$,
  it does not hold that $A(5, 5) \xrightarrow{\pi} A(7, 5)$.
  Indeed,
  $A(5, 5) \xrightarrow{\delta_{AA}} A(2, 9)$ and the
  zero-test transition $\delta_{AB}$ cannot be taken from $A(2, 9)$.

  \smallskip

  We cannot replay $\rho$ from the larger configuration $A(5, 5)$.
  Nonetheless,
  the configuration $A(7, 5)$ is reachable from $A(3, 5)$ in our $2$-TVASS.
  In fact,
  it holds that $A(3, 5) \xrightarrow{*} A(3 + 2k, 5)$ for every $k \in \setN$.
  This property will be shown in \cref{ex:LPS}.
  \qed
\end{example}

\subparagraph{Linear Path Schemes and Flattenability}

Consider a $d$-TVASS $\vass = (Q, \Sigma, \Delta)$.
A \emph{path} from a state $p \in Q$ to a state $q \in Q$ is
either the empty word $\varepsilon$
or a nonempty word $\delta_1 \cdots \delta_n$ of transitions,
with $\delta_i = (p_i, \sigma_i, q_i)$,
such that $p_0 = p$, $q_n = q$ and $q_{i-1} = p_i$
for all $i \in \interval{1}{n}$.
Note that for every word $\pi \in \Delta^*$,
if the relation $\xrightarrow{\pi}$ is not empty then $\pi$ is a path.
The converse does not hold in general
(but it holds if $\pi \in A^*$, i.e., if no zero-test occurs in $\pi$).
A \emph{cycle} on a state $q \in Q$ is a path from $q$ to $q$.

\smallskip

A \emph{linear path scheme}
from a state $p \in Q$ to a state $q \in Q$
is a regular expression $\lps$ of the form
$
\lps \ = \ \alpha_0 \beta_1^* \alpha_1 \cdots \beta_k^* \alpha_k
$
where $\alpha_0 \beta_1 \alpha_1 \cdots \beta_k \alpha_k$ is a path from $p$ to $q$
and each $\beta_i$ is a cycle.
We call $\beta_1, \ldots, \beta_k$ the \emph{cycles} of $\lps$.
Its \emph{length} is
$|\lps| \eqdef |\alpha_0 \beta_1 \alpha_1 \cdots \beta_k \alpha_k|$ and
its \emph{$*$-length} is $|\lps|_* \eqdef k$.
We slightly abuse notation and also write $\lps$ for
the language associated to a linear path scheme $\lps$.

\begin{example}
  \label{ex:LPS}
  We claimed at the end of Example~\ref{ex:2TVASS} that
  $A(3, 5) \xrightarrow{*} A(3 + 2k, 5)$ for every $k \in \setN$.
  The case where $k = 0$ is trivial,
  so let us prove this property assuming that $k > 0$.
  Consider the linear path scheme
  $$
  \lps \ = \
  \delta_{AA}
  \cdot
  (\delta_{AB} \delta_{BB} \delta_{BB} \delta_{BA} \delta_{AA})^*
  \cdot
  \delta_{AB}
  \cdot
  (\delta_{BB})^*
  \cdot
  \delta_{BA}
  \:.
  $$
  Note in passing that $\lps$ has length $|\lps| = 9$ and $*$-length $|\lps|_* = 2$.
  To simplify notation,
  let $\pi \eqdef \delta_{AB} \delta_{BB} \delta_{BB} \delta_{BA} \delta_{AA}$.
  Observe that $A(0, x) \xrightarrow{\pi} A(0, x + 2)$
  for every $x \geq 2$.
  It follows that
  $
  A(3, 5)
  \xrightarrow{\delta_{AA}}
  A(0, 9)
  \xrightarrow{\pi^{k-1}}
  A(0, 2k + 7)
  \xrightarrow{\delta_{AB}}
  B(0, 2k + 7)
  \xrightarrow{(\delta_{BB})^{2k + 2}}
  B(2k + 2, 5)
  \xrightarrow{\delta_{BA}}
  A(2k + 3, 5)
  $.
  We have shown that $A(3, 5) \xrightarrow{\lps} A(3 + 2k, 5)$ for every $k > 0$.
  \qed
\end{example}

A binary relation $R$ over configurations is called
\emph{flattenable}\footnote{%
  The same notion is often called \emph{flattable} in the literature.
  It was simply called \emph{flat} in~\cite{DBLP:conf/concur/LerouxS04}.
} if there exists a finite set $\Lambda$ of linear path schemes such that
$R \subseteq \bigcup_{L \in \Lambda} {\xrightarrow{L}}$.
It is readily seen that the class of flattenable binary relations is
closed under union and relational composition.
We say that a $d$-TVASS $\vass$ is \emph{flattenable} when
its reachability relation $\xrightarrow{*}$ is flattenable.

\subparagraph{Flattenability of TVASS in Dimension Two}

We showed sixteen years ago in~\cite{DBLP:conf/concur/LerouxS04} that
$2$-VASS are flattenable.
Our approach was refined ten years later by Blondin et al. to provide
bounds on the resulting linear path schemes and
to show that the reachability problem for $2$-VASS is solvable in
polynomial space~\cite{BFGHM-lics15}.

\begin{theorem}[\cite{DBLP:conf/concur/LerouxS04,BFGHM-lics15}]
  \label{thm:2-vass-flat}
  Every $2$-VASS is flattenable.
  Furthermore,
  for every configurations $p(\vec{x})$ and $q(\vec{y})$
  of a $2$-VASS $\vass = (Q, \Sigma, \Delta)$ such that
  $p(\vec{x}) \xrightarrow{*} q(\vec{y})$,
  there exists a linear path scheme $L$ with
  $|L| \leq (|Q| + \norm{\Sigma})^{O(1)}$ and
  $|L|_* \leq O(|Q|^2)$
  such that $p(\vec{x}) \xrightarrow{L} q(\vec{y})$.
\end{theorem}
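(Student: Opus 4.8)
Since the quantitative statement implies flattenability, I would aim directly at the bounds, following a geometric decomposition of a witnessing run according to the magnitudes of the two counters. The guiding principle is that the nonnegativity constraints are only active near the axes: if one fixes a threshold $B = (|Q| + \norm{\Sigma})^{O(1)}$, then any run all of whose configurations keep both counters $\geq B$ is automatically valid when the counters are relaxed to range over $\setZ$, since no guard can ever fail. So I would split the analysis into an \emph{interior} regime, governed by integer reachability, and a \emph{boundary} regime in which one counter stays bounded by $B$.

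For the interior regime I would first solve reachability in the $\setZ^2$-relaxation. There, $p(\vec{x}) \xrightarrow{*} q(\vec{y})$ holds iff some path from $p$ to $q$ has total effect $\vec{y} - \vec{x}$, i.e.\ iff $\vec{y} - \vec{x}$ equals the effect of a connecting path plus a nonnegative combination of cycle effects reachable along it. Since the control graph has only $|Q|$ states, a connecting path of length $O(|Q|)$ together with a set of $O(|Q|)$ simple cycles of length $O(|Q|)$ suffices to present the relevant effects, so these assemble into a linear path scheme $\alpha_0 \beta_1^* \cdots \beta_k^* \alpha_k$ with $k = O(|Q|)$. To realise a \emph{specific} displacement I would invoke a small-solution theorem for the integer system counting cycle iterations, bounding the iteration counts, and hence the instantiated length, by $(|Q| + \norm{\Sigma})^{O(1)}$; the threshold $B$ is chosen large enough that inserting and iterating these cycles never drives a counter below zero.

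I would then lift to $\setN^2$ by cutting a general run into maximal segments that are either interior (both counters $\geq B$ throughout) or boundary (one counter $< B$). On a boundary segment one counter is pinned in $\interval{0}{B}$, so projecting it away leaves a one-counter system over $Q \times \interval{0}{B}$, whose reachability relation is flattenable by short linear path schemes via the classical one-dimensional analysis. Because flattenable relations are closed under relational composition and union, concatenating the schemes for the successive segments yields a single scheme for the whole run; the quadratic bound $|L|_* = O(|Q|^2)$ arises as the product of $O(|Q|)$ segments with $O(|Q|)$ cycles each, and the length bound $(|Q| + \norm{\Sigma})^{O(1)}$ follows from the per-segment length bounds.

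The main obstacle is the boundary bookkeeping: a priori a run may dip near an axis, return to the interior, and dip again arbitrarily many times, so the number of segments is not obviously polynomial. Controlling it is exactly where \emph{monotonicity} of VASS is indispensable — a cycle executable from a configuration can be replayed from any pointwise-larger configuration — which lets me reroute boundary excursions through the interior, merge consecutive interior segments, and bound the number of essential boundary visits (by a pumping argument on repeated states) by $O(|Q|)$. Simultaneously keeping both the $*$-length and the counter values under control, so that iterating interior cycles never forces a negative counter, is the delicate heart of the argument. It is also precisely the property that fails for $2$-TVASS, where a zero-test cannot be replayed from a larger configuration — which is what forces the new techniques developed in the remainder of the paper.
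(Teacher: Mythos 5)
The paper does not prove this theorem: it is imported verbatim from the cited works \cite{DBLP:conf/concur/LerouxS04,BFGHM-lics15}, so there is no in-paper proof to compare against. Your outline does follow the broad strategy of Blondin et al.: split a witnessing run into regimes according to whether the counters are above or below a polynomial threshold $B$, handle the interior via the $\setZ^2$-relaxation together with a small-solution theorem for the cycle-iteration counts, handle the boundary by freezing the small counter into the control state and invoking flattenability of one-counter nets, and glue the pieces using monotonicity. That is the right skeleton.

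Two steps in your sketch would fail as written. First, in the interior regime you cannot make iterated cycles safe merely by ``choosing $B$ large enough'': the small-solution theorem bounds the iteration counts polynomially in $\norm{\vec{x}}$ and $\norm{\vec{y}}$, which are inputs and hence unbounded, so a cycle with a negative effect on some counter, iterated $n$ times, drops that counter by at least $n$ and eventually violates nonnegativity for any fixed threshold. The actual argument must reorder the run so that counter-increasing cycles are taken before counter-decreasing ones (the sign-reflecting decomposition of Blondin et al.); it is this reordering, licensed by monotonicity, that keeps the counters nonnegative, not the choice of $B$. Second, your boundary analysis works over the product control set $Q \times \interval{0}{B}$, which has $(|Q| + \norm{\Sigma})^{O(1)}$ states; a black-box application of one-counter flattenability there yields a $*$-length polynomial in $|Q| + \norm{\Sigma}$, not $O(|Q|^2)$. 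Obtaining a $*$-length depending on $|Q|$ alone requires showing that the cycles of the resulting schemes can be chosen among ones determined by the original states rather than the product states, which is a separate and nontrivial part of the cited proof. Relatedly, the claim that $O(|Q|)$ simple cycles of length $O(|Q|)$ suffice in the interior needs justification: there are exponentially many simple cycles, and the reduction to few of them exploits the two-dimensional geometry of the cone of cycle effects. None of this affects the present paper, which uses the theorem only as a black box.
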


The remainder of this section is devoted to the extension of
the first part of \cref{thm:2-vass-flat} to $2$-TVASS.
The existence of small linear path schemes witnessing flattenability
will be shown in
\cref{sec:weighted-one-counter-automata,sec:succint-flattenability-of-2-tvass},
and ensuing complexity results will be presented in
\cref{sec:linear-path-scheme-to-equations,sec:complexity-results}.

\smallskip

Consider a $2$-TVASS $\vass = (Q, \Sigma, \Delta)$.
We introduce, for each state $q \in Q$,
the binary relation $\vloop_q$ over configurations defined by
$
{\vloop_q} = \{(q(0, x), q(0, y)) \mid q(0, x) \xrightarrow{*} q(0, y)\}
$.
This relation is called the \emph{vertical loop relation} on $q$.
We let $\vloop$ denote the union $\bigcup_{q \in Q} {\vloop_q}$.
We first provide a decomposition of $\xrightarrow{*}$ in terms of
$\xrightarrow{A^*}$, $\xrightarrow{T}$ and $\vloop$.
Recall that $A$ and $T$ are the sets of addition transitions and
zero-test transitions, respectively.

\begin{restatable}{lemma}{vloopflatnesstoflatness}
  \label{lem:vloop-flatness-to-flatness}
  It holds that
  $
  {\xrightarrow{*}} \subseteq
  \left( {\xrightarrow{A^*}} \cup {\xrightarrow{T}} \cup {\vloop} \right)^{2|Q| + 1}
  $.
\end{restatable}

The binary relations $\xrightarrow{A^*}$ and $\xrightarrow{T}$ are already
known to be flattenable.
This is a consequence of \cref{thm:2-vass-flat} for the former,
and flattenability is obvious for the latter.
As flattenable binary relations are closed under union and relational composition,
it remains to show that $\vloop$ is flattenable.
Vertical loops $q(0, x) \xrightarrow{*} q(0, y)$
either increase the second counter (i.e., $y > x$),
or decrease it (i.e., $y < x$),
or leave it unchanged (i.e., $y = x$).
Let $\pvloop_q$ and $\nvloop_q$ denote the subrelations of
$\vloop_q$ that correspond to the first and second cases, respectively.
We first prove that the relations $\pvloop_q$ are flattenable.

\smallskip

Fix a state $q \in Q$ and assume that $\pvloop_q$ is not empty
(otherwise it is trivially flattenable).
We introduce the sequence $(D_x)_{x \in \setN}$ of subsets of $\setN$ defined by
$$
D_x \ = \ \{d \in \setN \mid q(0, x) \xrightarrow{*} q(0, x + d)\}
$$
We derive from the monotonicity of $2$-TVASS with respect to the second counter
that $D_0 \subseteq D_1 \subseteq D_2 \cdots$ and that\footnote{%
  The sum $A + B$ of two subsets $A, B \subseteq \setZ$ is defined as
  $\{a + b \mid a \in A \wedge b \in B\}$.
}
$(D_x + D_x) \subseteq D_x$ for every $x \in \setN$.
These two properties entail that the sequence $(D_x)_{x \in \setN}$ is
ultimately stationary.
Indeed, suppose by contradiction that there is an infinite increasing
subsequence $\{0\} \subsetneq D_{x_0} \subsetneq D_{x_1} \subsetneq D_{x_2} \subsetneq \cdots$.
We may extract $c, d_1, d_2, \ldots$ such that
$c \in D_{x_0}$, $c > 0$, and
$d_i \in D_{x_i} \setminus D_{x_{i-1}}$ for all $i > 0$.
By the pigeonhole principle,
some congruence class modulo $c$ contains infinitely many $d_i$.
So there exists $0 < i < j$ and $k \in \setN$ such that $d_j = d_i + k c$.
As $d_i$ and $c$ are both in $D_{x_i}$,
we get from $(D_{x_i} + D_{x_i}) \subseteq D_{x_i}$ that $d_j \in D_{x_i}$,
which is impossible since $D_{x_i} \subseteq D_{x_{j-1}}$ and $d_j \not\in D_{x_{j-1}}$.
We have shown that there exists $t \in \setN$ such that
$D_x = D_t$ for all $x \geq t$.

\smallskip

Recall that $\pvloop_q$ was assumed to be nonempty.
So there exists $h \geq 0$, $m > 0$ and a run from $q(0, h)$ to $q(0, h + m)$.
Let $\beta$ denote the trace of this run.
Note that $\beta$ is a nonempty cycle on $q$.
We derive from $q(0, h) \xrightarrow{\beta} q(0, h + m)$ that
$(d + m) \in D_x$ for all $x \in \setN$ and $d \in D_x$ with $d \geq h$.
It follows that each $D_x$ may be decomposed into $D_x = F_x \cup (B_x + \setN m)$
where $F_x$ and $B_x$ are finite subsets of $\setN$ such that $b \geq h$ for all $b \in B_x$.
For every $d \in D_x$,
let $\alpha_{x, d}$ denote the trace of some run from $q(0, x)$ to $q(0, x + d)$.
Consider the finite set $\Lambda$ of linear path schemes defined by
$$
\Lambda \ = \ \bigcup_{x \leq t} \Lambda_x
\qquad\qquad
\text{and}
\qquad\qquad
\Lambda_x \ = \ \{\alpha_{x, f} \mid f \in F_x\} \ \cup \ \{\alpha_{x, b} \beta^* \mid b \in B_x\}\:.
$$
Observe that $\Lambda$ is finite as it collects the linear path schemes in $\Lambda_x$
only for $x \leq t$.
The linear path schemes in $\Lambda_x$ with $x > t$ are redundant
because of the above-established stabilization property of $(D_x)_{x \in \setN}$.
We obtain the following lemma by construction.

\begin{restatable}{lemma}{increasingvloopflattenable}
  \label{lem:increasing-vloop-flattenable}
  It holds that ${\pvloop_q} \subseteq \bigcup_{L \in \Lambda} {\xrightarrow{L}}$,
  hence, the relation $\pvloop_q$ is flattenable.
\end{restatable}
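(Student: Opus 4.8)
The plan is to fix an arbitrary pair $(q(0,x), q(0,x+d)) \in \pvloop_q$ — so that $d > 0$ and $d \in D_x$ — and to exhibit a single linear path scheme $L \in \Lambda$ witnessing $q(0,x) \xrightarrow{L} q(0,x+d)$. The first move is to reduce to a \emph{bounded} starting height. Setting $x' = \min(x,t)$, I would first check that $d \in D_{x'}$: if $x \leq t$ this is immediate since $x' = x$, and if $x > t$ it follows from the stabilization $D_x = D_t = D_{x'}$ established above. Since $x' \leq t$ in either case, it then suffices to produce $L \in \Lambda_{x'} \subseteq \Lambda$ realizing $q(0,x') \xrightarrow{L} q(0,x'+d)$ and to transport this run from height $x'$ up to the actual height $x \geq x'$.

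The covering of the pair at height $x'$ is read off directly from the decomposition $D_{x'} = F_{x'} \cup (B_{x'} + \setN m)$. If $d \in F_{x'}$, the path $\alpha_{x',d} \in \Lambda_{x'}$ already satisfies $q(0,x') \xrightarrow{\alpha_{x',d}} q(0,x'+d)$ by its very definition. Otherwise $d = b + nm$ for some $b \in B_{x'}$ and $n \in \setN$, and the scheme $\alpha_{x',b}\beta^* \in \Lambda_{x'}$ is the candidate: one has $q(0,x') \xrightarrow{\alpha_{x',b}} q(0,x'+b)$ by definition of $\alpha_{x',b}$, and since $x'+b \geq b \geq h$ the cycle $\beta$ — which realizes $q(0,h) \xrightarrow{\beta} q(0,h+m)$ — can be appended $n$ times to reach $q(0,x'+b+nm) = q(0,x'+d)$.

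The two places where a factor is replayed from a larger second counter (iterating $\beta$, and the final lift from $x'$ to $x$) are both discharged by monotonicity with respect to the second counter, the property already invoked in the construction: uniformly raising the value of the second counter along a run preserves that run, because the only guards are nonnegativity and the zero-test, and the zero-test inspects the \emph{first} counter, which is left untouched. This is exactly the substitute for the full monotonicity that zero-tests destroy, and it is the single conceptual point on which the whole argument turns; once it is in hand the rest is the bookkeeping sketched above, which is why the statement holds ``by construction.'' Concretely, from $q(0,x') \xrightarrow{L} q(0,x'+d)$ and $x \geq x'$ one obtains $q(0,x) \xrightarrow{L} q(0,x+d)$, yielding the inclusion $\pvloop_q \subseteq \bigcup_{L \in \Lambda} \xrightarrow{L}$; flattenability of $\pvloop_q$ then follows since $\Lambda$ is finite.
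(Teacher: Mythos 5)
Your proof is correct and follows essentially the same route as the paper's: reduce to a starting height at most $t$ via the stabilization of $(D_x)_x$, read off the covering linear path scheme from the decomposition $D_{x'} = F_{x'} \cup (B_{x'} + \setN m)$, and discharge both the iteration of $\beta$ and the lift back to height $x$ by monotonicity in the second counter. The only cosmetic difference is that you unify the two cases $x \leq t$ and $x > t$ through $x' = \min(x,t)$, whereas the paper treats them separately.
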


Notice that a decreasing vertical loop $q(0, x) \xrightarrow{*} q(0, y)$ with $y < x$
is an increasing vertical loop in the $2$-TVASS $\overline{\vass}$ obtained from $\vass$ by
reversing the effect of each transition, i.e.,
$\overline{\Delta} = \{\overline{\delta} \mid \delta\in\Delta\}$ where
$\overline{(p, \vec{a}, q)} = (q, -\vec{a}, p)$ and $\overline{(p, \ztest, q)} = (q, \ztest, p)$.
Put differently,
the relation $\nvloop_q$ in $\vass$ coincides with the relation $\pvloop_q$ in $\overline{\vass}$.
By applying \cref{lem:increasing-vloop-flattenable} to $\overline{\vass}$ and
taking the mirror image of the resulting linear path schemes,
we get that the relation $\nvloop_q$ in $\vass$ is also flattenable.
Since ${\vloop} = \bigcup_{q \in Q} {\vloop_q}$ and
${\vloop_q} \subseteq {\pvloop_q} \cup {\nvloop_q} \cup {\xrightarrow{\varepsilon}}$ for every $q \in Q$,
we obtain that $\vloop$ is flattenable.
Together with \cref{lem:vloop-flatness-to-flatness} and \cref{thm:2-vass-flat},
this concludes the proof of the following theorem.

\begin{theorem}
  \label{thm:2-tvass-flat}
  Every $2$-TVASS is flattenable.
\end{theorem}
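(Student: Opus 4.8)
The plan is to assemble the pieces already in place. Flattenability is defined as containment in a finite union of relations $\xrightarrow{L}$ for linear path schemes $L$, so any subset of a flattenable relation is itself flattenable; together with the stated closure of flattenable relations under union and relational composition, this means it suffices to cover ${\xrightarrow{*}}$ by a bounded composition of flattenable relations. \cref{lem:vloop-flatness-to-flatness} provides exactly such a covering, namely ${\xrightarrow{*}} \subseteq ({\xrightarrow{A^*}} \cup {\xrightarrow{T}} \cup {\vloop})^{2|Q|+1}$. I would therefore reduce the theorem to the flattenability of the three relations ${\xrightarrow{A^*}}$, ${\xrightarrow{T}}$ and ${\vloop}$.

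The first two are immediate. The relation ${\xrightarrow{A^*}}$ is the reachability relation of the $2$-VASS $(Q, \Sigma \setminus \{\ztest\}, A)$ obtained by deleting the zero-test transitions, so \cref{thm:2-vass-flat} applies directly. For ${\xrightarrow{T}} = \bigcup_{\delta \in T} {\xrightarrow{\delta}}$, each single zero-test step $\xrightarrow{\delta}$ is covered by the trivial linear path scheme consisting of the one-transition path $\delta$ with no cycles, and a finite union of flattenable relations is flattenable.

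The substantive step is the flattenability of ${\vloop} = \bigcup_{q \in Q} {\vloop_q}$. For each state $q$ I would use the inclusion ${\vloop_q} \subseteq {\pvloop_q} \cup {\nvloop_q} \cup {\xrightarrow{\varepsilon}}$. The increasing part ${\pvloop_q}$ is flattenable by \cref{lem:increasing-vloop-flattenable}, and ${\xrightarrow{\varepsilon}}$ is covered by the empty path. For the decreasing part ${\nvloop_q}$ I would exploit the reversal $\overline{\vass}$: since ${\nvloop_q}$ in $\vass$ coincides with ${\pvloop_q}$ in $\overline{\vass}$, applying \cref{lem:increasing-vloop-flattenable} to $\overline{\vass}$ and taking the mirror images of the resulting linear path schemes yields flattenability of ${\nvloop_q}$ in $\vass$. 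Closure under finite union then gives that each ${\vloop_q}$, and hence ${\vloop}$, is flattenable.

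With all three relations flattenable, closure under union and relational composition shows that $({\xrightarrow{A^*}} \cup {\xrightarrow{T}} \cup {\vloop})^{2|Q|+1}$ is flattenable, and the subset property then transfers flattenability to ${\xrightarrow{*}}$, completing the proof. The genuine difficulty is not in this assembly but has already been absorbed into \cref{lem:increasing-vloop-flattenable}, where the monotonicity-driven stabilization of the sequence $(D_x)_{x \in \setN}$ is the crux; the only places where I expect to tread carefully in the theorem proper are recognizing ${\xrightarrow{A^*}}$ as a plain $2$-VASS reachability relation and checking that mirroring linear path schemes preserves the covering of ${\nvloop_q}$.
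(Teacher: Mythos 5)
Your proposal is correct and follows essentially the same route as the paper: reduce via \cref{lem:vloop-flatness-to-flatness} to the flattenability of ${\xrightarrow{A^*}}$ (by \cref{thm:2-vass-flat}), ${\xrightarrow{T}}$ (trivially), and ${\vloop}$, with the latter handled through ${\pvloop_q}$, the reversal argument for ${\nvloop_q}$, and closure under union and composition. The assembly matches the paper's proof step for step.
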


We have presented in this section a direct and simple proof
that the reachability relation of every $2$-TVASS is flattenable.
This result entails, in particular, that
the reachability relation is effectively semilinear for $2$-TVASS
(which was already known~\cite{DBLP:conf/fsttcs/FinkelLS18})
and computable by cycle acceleration techniques (see, e.g.,~\cite{DBLP:conf/atva/LerouxS05}).

\smallskip

To derive complexity results from flattenability,
we need to bound the length of linear path schemes witnessing flattenability.
This requires a finer analysis of $2$-TVASS runs than what was done for \cref{thm:2-tvass-flat}
(the latter will not be used in the remainder of the paper).

\section{A Detour via Weighted One-Counter Automata}
\label{sec:weighted-one-counter-automata}
We have given in the previous section a simple proof that $2$-TVASS are flattenable.
This proof provides no bound on the length of the resulting linear path schemes, though.
To obtain small linear path schemes witnessing flattenability,
we take a detour via weighted one-counter automata.
The rationale is that a $2$-TVASS behaves like a one-counter automaton
equipped with an additional counter (that cannot be tested for zero).
When this additional counter is allowed to become negative,
actions on it can be seen as weights.
We show in this section that,
in a weighted one-counter automaton,
the reachable weights between two mutually reachable configurations
$p(0)$ and $q(0)$ can be obtained via small linear path schemes.
This will yield,
for $2$-TVASS,
small linear path schemes for the reachability subrelations
$p(0, x) \xrightarrow{*} q(0, y)$ such that $x$ and $y$ are large and
$q(0, y) \xrightarrow{*} p(0, z)$ for some $z$.
For simplicity,
we consider weighted one-counter automata where addition actions and weights
are in $\{-1, 0, 1\}$

\smallskip

A \emph{weighted one-counter automaton} (shortly called a \emph{WOCA}),
is a quadruple $\woca = (Q, \Sigma, \Delta, \weight)$ where
$(Q, \Sigma, \Delta)$ is a $1$-TVASS such that $\Sigma = \{-1, 0, 1, \ztest\}$ and
$\weight : \Delta \rightarrow \{-1, 0, 1\}$ is a \emph{weight} function.
All notions defined in \cref{sec:flattenability-of-2-tvass} for $1$-TVASS
naturally carry over to WOCA.
The weight function is extended to words in $\Delta^*$ by
$\weight(\delta_1 \cdots \delta_n) = \weight(\delta_1) + \cdots + \weight(\delta_n)$.
The \emph{weight} of a run is the weight of its trace.
For notational convenience,
we write $p(x) \wrightarrow{\pi}{w} q(y)$ when
$p(x) \xrightarrow{\pi} q(y)$ and $w = \weight(\pi)$.
Similarly,
we let $p(x) \wrightarrow{*}{w} q(y)$ stand for the existence of
$\pi \in \Delta^*$ such that $p(x) \wrightarrow{\pi}{w} q(y)$.

\smallskip

As mentioned before, this section is devoted to the proof that,
for any states $p$ and $q$ in a WOCA,
the weights $w \in \setZ$ such that $p(0) \wrightarrow{*}{w} q(0) \xrightarrow{*} p(0)$
can be obtained via small linear path schemes (see \cref{lem:pumping-short-lps}).
We start with two pumping lemmas on runs of one-counter automata.
The first one, \cref{lem:hill-cutting},
can be seen as an iterated version of the pumping lemma
for one-counter languages due to Latteux~\cite{DBLP:journals/jcss/Latteux83}.
It is an easy consequence of the classical \emph{hill-cutting}
technique for one-counter automata
(often attributed to Valiant and Paterson~\cite{DBLP:journals/jcss/ValiantP75}).
The second one, \cref{lem:pumping-short-cycles},
tunes the hill-cutting technique so as to obtain short extracted cycles.
This second pumping lemma is crucial to obtain linear path schemes with short cycles.
The hill-cutting techniques used in both lemmas are illustrated in \cref{fig:pumping-lemmas}.

\smallskip

We assume for the remainder of this section that
$\woca = (Q, \Sigma, \Delta, \weight)$ is a WOCA and that $p, q \in Q$ are states of $\woca$.

\begin{figure}[t]
\begin{subfigure}[b]{.48\textwidth}
  \centering
  \begin{tikzpicture}[scale=0.13]
    \coordinate (r1) at (7.67,5);
    \coordinate (r2) at (13.37,12);
    \coordinate (r3) at (18.36,17.5);
    \coordinate (s3) at (21.29,17.5);
    \coordinate (s2) at (22.49,12);
    \coordinate (s1) at (38.49,5);

    \fill[lightgray!33] (r1 |- 0,0) -- (r1) -- (s1) -- (s1 |- 0,0) -- cycle;
    \fill[lightgray!67] (r2 |- 0,0) -- (r2) -- (s2) -- (s2 |- 0,0) -- cycle;
    \fill[lightgray]    (r3 |- 0,0) -- (r3) -- (s3) -- (s3 |- 0,0) -- cycle;

    \draw[thick, ->] (0,0) -- (43,0) node [below left] {\textit{time}};
    \draw[thick, ->] (0,0) -- (0,23) node [above] {\textit{counter}};

    \coordinate (q1) at (0,10);
    \coordinate (q2) at (0,20);
    \node at (q2) [left] {$m|Q|^2$};
    \draw[thin, dashed, -] (q2) -- (q2 -| 40,0);

    \draw plot [smooth] coordinates {
      (0,0)
      (2,4)
      (3,2)
      (4,4)
      (6,0.2)
      (10,12)
      (12,8)
      (15,17)
      (17,13)
      (20,22)
      (23,11) %
      (26,22)
      (29,13)
      (31,17)
      (34,8)
      (36,12)
      (40,0)
    };

    \fill (r1) circle (3mm) node [left,  xshift=.5mm]  {$r$};
    \fill (r2) circle (3mm) node [left,  xshift=.5mm]  {$r$};
    \fill (r3) circle (3mm) node [left,  xshift=.5mm]  {$r$};
    \fill (s3) circle (3mm) node [right, xshift=-.5mm] {$s$};
    \fill (s2) circle (3mm) node [right, xshift=.5mm]  {$s$};
    \fill (s1) circle (3mm) node [right, xshift=-.5mm] {$s$};
  \end{tikzpicture}
\end{subfigure}%
\hfill
\begin{subfigure}[b]{.48\textwidth}
  \centering
  \begin{tikzpicture}[scale=0.13]
    \coordinate (i0) at (12.8,0);
    \coordinate (j0) at (33.2,0);
    \node at (i0) [below] {$t_1$};
    \node at (j0) [below] {$t_2$};
    \fill[lightgray!67] (i0) -- (i0 |- 0,10) -- (j0 |- 0,10) -- (j0) -- cycle;

    \coordinate (iq2) at (19.02,0);
    \coordinate (jq2) at (26.98,0);
    \fill[lightgray!67] (i0  |- 0,20) -- (i0  |- 0,23) -- (iq2 |- 0,23) -- (iq2 |- 0,20) -- cycle;
    \fill[lightgray!67] (jq2 |- 0,20) -- (jq2 |- 0,23) -- (j0  |- 0,23) -- (j0  |- 0,20) -- cycle;

    \draw[thick, ->] (0,0) -- (43,0) node [below left] {\textit{time}};
    \draw[thick, ->] (0,0) -- (0,23) node [above] {\textit{counter}};

    \coordinate (q1) at (0,10);
    \coordinate (q2) at (0,20);
    \node at (q1) [left] {$|Q|^2$};
    \node at (q2) [left] {$2|Q|^2$};
    \draw[thin, dashed, -] (q1) -- (q1 -| 40,0);
    \draw[thin, dashed, -] (q2) -- (q2 -| 40,0);

    \draw plot [smooth] coordinates {
      (0,0)
      (2,4)
      (3,2)
      (4,4)
      (6,0.2)
      (10,12)
      (12,8)
      (15,17)
      (17,13)
      (20,22)
      (23,11) %
      (26,22)
      (29,13)
      (31,17)
      (34,8)
      (36,12)
      (40,0)
    };

    \coordinate (r1) at (13.92,14);
    \coordinate (r2) at (18.36,17.5);
    \coordinate (s2) at (27.64,17.5);
    \coordinate (s1) at (32.08,14);

    \fill (r1) circle (3mm) node [left,  xshift=.5mm]  {$r$};
    \fill (r2) circle (3mm) node [left,  xshift=.5mm]  {$r$};
    \fill (s2) circle (3mm) node [right, xshift=-.5mm] {$s$};
    \fill (s1) circle (3mm) node [right, xshift=-.5mm] {$s$};
  \end{tikzpicture}
\end{subfigure}%
  \caption{%
    Illustration of the pumping lemmas for one-counter automata
    (\cref{lem:hill-cutting} on the left and \cref{lem:pumping-short-cycles} on the right).
    The curves show the evolution of the counter along a run.
    Gray areas denote forbidden zones for the counter.
  }
  \label{fig:pumping-lemmas}
\end{figure}
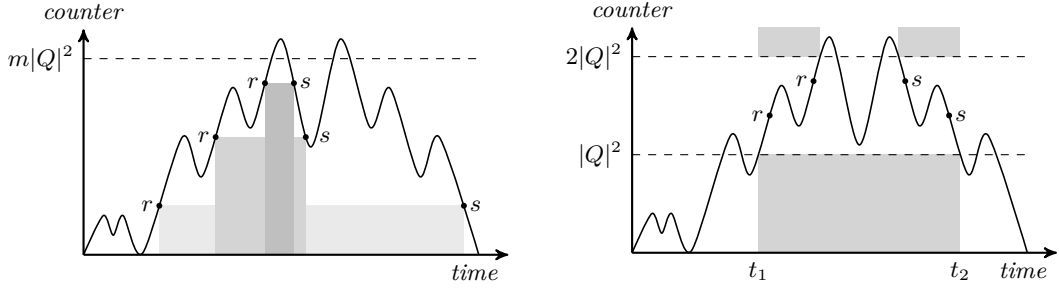

\begin{restatable}{lemma}{hillcutting}
  \label{lem:hill-cutting}
  If $p(0) \xrightarrow{\pi} q(0)$ then
  for every $m > 0$ such that $|\pi| \geq m^2 |Q|^3$,
  there exists
  a factorization $\pi = \alpha \beta_1 \cdots \beta_m \gamma \theta_m \cdots \theta_1 \eta$,
  with $\beta_i \theta_i \neq \varepsilon$ for all $i \in \interval{1}{m}$,
  verifying
  $$
  p(0)
  \xrightarrow{\alpha \beta_1^{n_1} \cdots \beta_m^{n_m} \gamma \theta_m^{n_m} \cdots \theta_1^{n_1} \eta}
  q(0)
  $$
  for every $n_1, \ldots, n_m \in \setN$.
\end{restatable}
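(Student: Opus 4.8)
The plan is to adapt the classical Valiant--Paterson hill-cutting technique to our setting, observing first that the weight function $\weight$ is irrelevant here: the statement constrains only the underlying run $p(0) \xrightarrow{\pi} q(0)$, so I reason purely about the one-counter behaviour. Writing $\pi = \delta_1 \cdots \delta_n$ with $n = |\pi|$, let $v_0, \ldots, v_n$ be the successive counter values; thus $v_0 = v_n = 0$, each $|v_j - v_{j-1}| \leq 1$, and every $\ztest$ occurs at a position where the counter is $0$. Let $H = \max_j v_j$ denote the peak. I would distinguish two regimes according to whether $H > m|Q|^2$ or not.

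In the \emph{high-peak} regime, fix a time $t^\star$ with $v_{t^\star} = H$. For each level $\ell \in \interval{1}{H}$, let $a_\ell$ be the last up-crossing of $\ell$ strictly before $t^\star$ (the largest $t$ with $v_t = \ell - 1$ and $v_{t+1} = \ell$) and $b_\ell$ the first down-crossing of $\ell$ after $t^\star$. A routine check gives $a_1 < \cdots < a_H < t^\star < b_H < \cdots < b_1$ and shows that the counter stays $\geq \ell$ on both $[a_\ell + 1, t^\star]$ and $[t^\star, b_\ell]$. Let $r_\ell$ and $s_\ell$ be the states of the configurations reached at times $a_\ell + 1$ and $b_\ell$, each having counter value $\ell$. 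Every level thus carries a pair $(r_\ell, s_\ell) \in Q \times Q$; since $H > m|Q|^2$, the pigeonhole principle yields $m+1$ levels $\ell_0 < \cdots < \ell_m$ sharing one pair $(r, s)$. Cutting $\pi$ at the corresponding markers gives $\pi = \alpha\, \beta_1 \cdots \beta_m\, \gamma\, \theta_m \cdots \theta_1\, \eta$, where $\beta_i$ is the ascending segment from level $\ell_{i-1}$ to $\ell_i$ (a cycle on $r$ of net effect $+(\ell_i - \ell_{i-1})$), $\theta_i$ the descending segment from $\ell_i$ to $\ell_{i-1}$ (a cycle on $s$ of net effect $-(\ell_i - \ell_{i-1})$), and $\gamma$ the segment through the peak from $(r, \ell_m)$ to $(s, \ell_m)$. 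As each $\beta_i$ has positive net effect, $\beta_i \neq \varepsilon$ and a fortiori $\beta_i \theta_i \neq \varepsilon$.

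To verify the pumped runs, I would note that $\beta_i$, $\theta_i$ and $\gamma$ were all extracted from a region where the counter stays $\geq \ell_0 \geq 1$, so none contains a $\ztest$; moreover, read as words of increments, $\beta_i$ and $\gamma$ have all partial sums $\geq 0$, while $\theta_i$ has all partial sums $\geq -(\ell_i - \ell_{i-1})$. Hence $\alpha$ reaches $(r, \ell_0)$, the block $\beta_1^{n_1} \cdots \beta_m^{n_m}$ only raises the counter and ends at $(r, C)$ with $C = \ell_0 + \sum_i n_i(\ell_i - \ell_{i-1})$, then $\gamma$ keeps the counter $\geq C$ and returns to $(s, C)$, and finally $\theta_m^{n_m} \cdots \theta_1^{n_1}$ brings it back down to $\ell_0$; the partial-sum bounds together with $C \geq \sum_i n_i(\ell_i - \ell_{i-1})$ ensure the counter never drops below $\ell_0$, so it stays nonnegative and meets no $\ztest$. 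Since $\alpha$ and $\eta$ are untouched prefixes and suffixes of $\pi$, the resulting run is valid and ends in $q(0)$ for every $n_1, \ldots, n_m \in \setN$. In the \emph{low-peak} regime $H \leq m|Q|^2$, I would instead invoke the length hypothesis: the run visits only configurations with counter in $\interval{0}{H}$, of which there are at most $|Q|(H + 1)$, so once $|\pi|$ exceeds $m$ times this quantity some configuration $c$ is visited at least $m + 1$ times. Choosing $\beta_1, \ldots, \beta_m$ to be the nonempty cycles on $c$ between consecutive visits and setting every $\theta_i = \eta = \varepsilon$ yields a factorization in which each $\beta_i$ is freely iterable from $c$, so run-validity is immediate.

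The delicate point, and the only place where $\ztest$ transitions matter, is the validity of \emph{simultaneous} pumping: the coupled exponents on $\beta_i^{n_i}$ and $\theta_i^{n_i}$ must make the net counter effect cancel while keeping the counter away from any forbidden zero-test. This is exactly why the cycles are extracted strictly above level $0$ in the high-peak case, and why full configuration-cycles (whose iteration trivially preserves validity) are used in the low-peak case. Finally, the hypothesis $|\pi| \geq m^2 |Q|^3$ is what forces one of the two regimes to apply, through the two pigeonhole counts above; pinning down the exact constant is then routine.
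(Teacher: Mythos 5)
Your argument is essentially the paper's: the same two-regime hill-cutting, with a repeated-configuration pigeonhole in the low-peak case and a pigeonhole on state pairs over level-crossing markers in the high-peak case, together with the same key observation that the extracted cycles sit strictly above counter value $0$, so no zero-test can obstruct the simultaneous pumping. The one thing to repair is an off-by-one at the regime boundary: with your split ($H \le m|Q|^2$ low, $H > m|Q|^2$ high) the low-peak case admits up to $|Q|(m|Q|^2+1)$ distinct configurations, so $|\pi| \ge m^2|Q|^3$ does not force $m+1$ visits of a single configuration when $H = m|Q|^2$ exactly; the paper sidesteps this by treating $H \ge m|Q|^2$ as the high-peak case and using the $m|Q|^2+1$ levels $0,1,\dots,m|Q|^2$, where the level-$0$ markers are available because the run starts and ends with counter $0$ (and the first transition of the lowest ascending cycle cannot be a zero-test, since the counter is strictly positive immediately after it).
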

\begin{proof}[{\proofname} Sketch]
  If the counter remains below $m|Q|^2$ then
  some configuration repeats at least $m+1$ times,
  and the subruns in-between can be iterated arbitrarily many times.
  The cycles $\beta_i$ come from these subruns and the cycles $\theta_i$ are empty.
  Otherwise,
  the run contains a ``high hill'' and we extract,
  for each counter value in $\interval{0}{m|Q|^2}$,
  a pair of configurations with this counter value
  (see \cref{fig:pumping-lemmas} (left)).
  This extraction proceeds
  from the inside of the hill towards the outside.
  Some pair of states $(r, s)$ necessarily occurs $m+1$ times in this extraction.
  The subruns between the $r$ configurations provide the cycles $\beta_i$ and
  the subruns between the $s$ configurations provide the cycles $\theta_i$.
  The extraction guarantees that these cycles can be iterated arbitrarily many times.
\end{proof}

\begin{restatable}{corollary}{shortocaruns}
  \label{cor:short-oca-runs}
  If $p(x) \xrightarrow{*} q(y)$ then
  $p(x) \xrightarrow{\pi} q(y)$ for some $\pi \in \Delta^*$ such that $|\pi| < (|Q| + x + y)^3$.
\end{restatable}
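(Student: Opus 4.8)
The plan is to reduce the general reachability $p(x) \xrightarrow{*} q(y)$ to the counter-zero case already handled by \cref{lem:hill-cutting}, and then to cut a minimal witnessing run down to the required length. Since \cref{lem:hill-cutting} only speaks about runs of the shape $p(0) \xrightarrow{\pi} q(0)$, I would first build an auxiliary WOCA $\woca'$ from $\woca$ by attaching two linear \emph{ramp} gadgets: a chain of $x$ fresh states, each reached by an addition action $1$, leading into $p$, and a chain of $y$ fresh states, each reached by an addition action $-1$, leaving $q$. Writing $p_0$ for the entry of the first chain and $q_0$ for the exit of the second, the ramp-up forces $p_0(0) \xrightarrow{*} p(x)$ (the counter climbs through $0, 1, \ldots, x$) and the ramp-down forces $q(y) \xrightarrow{*} q_0(0)$ (the counter descends through $y, y-1, \ldots, 0$), both staying nonnegative. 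All new transitions may carry weight $0$, as weights are irrelevant to the statement. By design, every run ending at $q_0(0)$ must traverse the full ramp-down from $q$ with counter exactly $y$, and every run starting at $p_0(0)$ must traverse the full ramp-up reaching $p$ with counter exactly $x$; hence $p(x) \xrightarrow{*} q(y)$ in $\woca$ holds iff $p_0(0) \xrightarrow{*} q_0(0)$ in $\woca'$, and each witnessing $\woca'$-run factors as $\text{ramp-up} \cdot \pi \cdot \text{ramp-down}$ with $\pi$ a run of $\woca$ from $p(x)$ to $q(y)$. Note that $\woca'$ has exactly $|Q| + x + y$ states.

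Next I would take a run $p_0(0) \xrightarrow{\pi'} q_0(0)$ of $\woca'$ of minimal length and apply \cref{lem:hill-cutting} to $\woca'$ with $m = 1$. If $|\pi'| \geq |Q'|^3$, where $Q'$ is the state set of $\woca'$ (so $|Q'| = |Q| + x + y$), the lemma yields a factorization $\pi' = \alpha \beta_1 \gamma \theta_1 \eta$ with $\beta_1 \theta_1 \neq \varepsilon$ and $p_0(0) \xrightarrow{\alpha \gamma \eta} q_0(0)$, a strictly shorter witnessing run, contradicting minimality. Hence $|\pi'| < (|Q| + x + y)^3$, and decoding $\pi'$ through the ramp gadgets gives a run $\pi''$ of $\woca$ from $p(x)$ to $q(y)$ with $|\pi''| \leq |\pi'| < (|Q| + x + y)^3$, as required.

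The crux — and the step I expect to need the most care — is verifying that the cut performed by \cref{lem:hill-cutting} never disturbs the ramp gadgets, so that the shortened $\woca'$-run still decodes to a genuine $\woca$-run from $p(x)$ to $q(y)$. This holds because the extracted cycles $\beta_1, \theta_1$ arise from \emph{repeated} configurations, whereas along each ramp the counter is strictly monotone and every ramp state occurs exactly once; thus no configuration repeats inside a ramp, nor between a ramp state and any state of $Q$. Consequently both cycles lie entirely within the $\pi$-part of the run, so the ramp-up prefix and ramp-down suffix are preserved, and the final run is again of the form $\text{ramp-up} \cdot \pi'' \cdot \text{ramp-down}$. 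The argument mirrors the standard reduction of nonzero endpoints to zero endpoints for one-counter automata, and the $x + y$ term in the bound is exactly the price of the $x + y$ ramp states introduced.
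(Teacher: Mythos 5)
Your proposal is correct and follows essentially the same route as the paper: the appendix proof reduces the general case to $x=y=0$ by adding $x$ ramp states before $p$ and $y$ ramp states after $q$, and then applies \cref{lem:hill-cutting} with $m=1$ to a minimal run to force the length bound. Your extra "crux" paragraph is harmless but not strictly needed, since the contradiction only requires the shortened word to witness $p_0(0)\xrightarrow{}q_0(0)$ in the augmented automaton, which the lemma already guarantees.
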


\begin{restatable}{lemma}{pumpingshortcycles}
  \label{lem:pumping-short-cycles}
  If $p(0) \xrightarrow{\pi} q(0)$ with $|\pi| \geq 2|Q|^3$
  then there exists
  $r, s \in Q$, $x, d \in \setN$,
  and a factorization $\pi = \alpha \beta \gamma \theta \eta$,
  with $\beta \theta \neq \varepsilon$ and no zero-test transition in $\gamma$,
  such that
  $x+d \leq 2|Q|^2$, $|\beta \theta| \leq 2|Q|^3$ and verifying
  $$
  p(0)
  \xrightarrow{\alpha}
  r(x)
  \xrightarrow{\beta^n}
  r(x + n d)
  \xrightarrow{\gamma}
  s(x + n d)
  \xrightarrow{\theta^n}
  s(x)
  \xrightarrow{\eta}
  q(0)
  $$
  for every $n \in \setN$.
\end{restatable}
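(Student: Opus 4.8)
The plan is to refine the hill-cutting technique of \cref{lem:hill-cutting} so that the two pumped cycles are confined to a bounded band of counter values, and to extract them near the base of a high hill rather than around its summit. Throughout I view the counter profile of the run $p(0) \xrightarrow{\pi} q(0)$ as a walk that starts and ends at $0$ and changes by at most one per step (actions lie in $\{-1, 0, 1\}$). I distinguish two cases according to the maximal counter value $H$ reached along the run.

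If $H < 2|Q|^2$, then every configuration along the run has counter in $\interval{0}{2|Q|^2 - 1}$, so at most $2|Q|^3$ configurations are available. Since $|\pi| \geq 2|Q|^3$, the run visits at least $2|Q|^3 + 1$ configurations, and the pigeonhole principle yields a repeated configuration. Taking the \emph{first} repetition gives a nonempty simple cycle $\beta$ on some state $r$ at counter $x \leq 2|Q|^2 - 1$, of length at most $2|Q|^3$. I then set $d = 0$, $s = r$, $\theta = \gamma = \varepsilon$, and let $\alpha, \eta$ be the surrounding factors. As $d = 0$ the cycle is iterated in place, so the required run holds for every $n$, and all the numerical constraints are immediate.

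The interesting case is $H \geq 2|Q|^2$. Here I first isolate a \emph{hill}: a maximal infix on which the counter is strictly positive in its interior and which reaches height at least $2|Q|^2$ (such a hill exists since the run attains height $\geq 2|Q|^2$ while starting and ending at $0$). On the interior of this hill the counter is positive, so the infixes I extract from it contain no zero-test. Let $\tau$ and $\tau''$ be the first and last times the counter equals $2|Q|^2$ inside the hill. For each level $v \in \interval{|Q|^2}{2|Q|^2}$ I record an \emph{ascending reference} $i_v$, the last time before $\tau$ at which the counter equals $v$, together with its state $r_v$, and a \emph{descending reference} $j_v$, the first time after $\tau''$ at which the counter equals $v$, with its state $s_v$. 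A short monotonicity argument shows that after $i_v$ and before $\tau$ the counter stays above $v$ (symmetrically around $\tau''$), that the $i_v$ increase and the $j_v$ decrease in $v$, and that the whole segment $\interval{i_{x'}}{j_{x'}}$ stays inside the hill and hence keeps the counter positive. Mapping the $|Q|^2 + 1$ levels of $\interval{|Q|^2}{2|Q|^2}$ to the pairs $(r_v, s_v) \in Q \times Q$ and applying the pigeonhole principle, I obtain two levels $x < x'$ with $(r_x, s_x) = (r_{x'}, s_{x'}) = (r, s)$. I cut $\pi$ as $\alpha \beta_0 \gamma \theta_0 \eta$, where $\alpha$ ends at $i_x$, $\beta_0$ is the infix from $i_x$ to $i_{x'}$, $\gamma$ runs from $i_{x'}$ to $j_{x'}$, $\theta_0$ from $j_{x'}$ to $j_x$, and $\eta$ is the suffix. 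By construction $\beta_0$ is a cycle on $r$ from counter $x$ to $x' = x + d$, $\theta_0$ a cycle on $s$ from $x'$ back to $x$, and $\gamma$ goes from $r(x')$ to $s(x')$ while keeping the counter positive, so it contains no zero-test; moreover $x + d = x' \leq 2|Q|^2$.

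The main obstacle is that $\beta_0$ and $\theta_0$ produced this way may be long, whereas the lemma demands $|\beta\theta| \leq 2|Q|^3$. The key observation that removes this obstacle is that the net counter effect of any subrun depends only on its two endpoint counter values; I may therefore freely delete loops (repeated configurations) from $\beta_0$ and $\theta_0$ to obtain \emph{simple} cycles $\beta$ and $\theta$ with the same endpoints, hence the same displacements $+d$ and $-d$, so the matching of $d$ is preserved for free. Because the ascending references lie below $\tau$ and the descending ones above $\tau''$, both $\beta_0$ and $\theta_0$ stay within the band $\interval{|Q|^2}{2|Q|^2}$, and the simple cycles inherit this confinement; each then uses at most $O(|Q|^2)$ distinct counter values over $|Q|$ states, which bounds their lengths, and a careful count of the band gives the stated bound $|\beta\theta| \leq 2|Q|^3$. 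Since $\beta$ and $\theta$ lie in the band they also contain no zero-test and have positive counter profiles, so iterating them merely shifts these profiles upward by multiples of $d \geq 0$ and keeps them valid for every $n$. Splicing $\alpha$, $\beta^n$, $\gamma$, $\theta^n$, $\eta$ then yields the claimed run. I expect the delicate points to be the monotonicity bookkeeping for the reference times and pinning down the band precisely enough to reach the exact constant $2|Q|^3$.
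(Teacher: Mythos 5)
Your overall strategy (two cases on the maximal height reached, hill-cutting with reference positions in the band $\interval{|Q|^2}{2|Q|^2}$, pigeonhole on state pairs) is the right one, and your Case 1 and your monotonicity bookkeeping for the reference times are fine. But two steps of your Case 2 do not go through as written, and both trace back to the same design choice.

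First, the lemma requires a genuine factorization $\pi = \alpha\beta\gamma\theta\eta$, and this is not cosmetic: in \cref{lem:short-runs-with-positive-weight} the identity $\weight(\beta\theta) = \weight(\pi) - \weight(\alpha\gamma\eta)$ is exactly what is exploited, so $\beta$ and $\theta$ must be the literal infixes of $\pi$. Your repair step --- deleting repeated configurations inside $\beta_0$ and $\theta_0$ to make them simple --- produces cycles with the right displacement but destroys the factorization (and changes the weight of the middle part). The paper avoids any need for shortening by splitting the cases differently: either some configuration repeats inside a window where the counter stays below $2|Q|^2$ (then that repetition itself is the cycle, with $d=0$ and length at most $2|Q|^3$), or no such repetition exists anywhere, in which case the extracted ascending segment $\interval{i_0}{i_{|Q|^2}}$ and descending segment $\interval{j_{|Q|^2}}{j_0}$ automatically consist of pairwise distinct configurations confined to $|Q|^2$ counter levels, hence each has length at most $|Q|^3$, with nothing to delete.

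Second, the $n=0$ instance fails in your construction. You take all reference positions on the inside of the hill (last visit before $\tau$, first visit after $\tau''$), so your $\gamma$ contains the whole summit segment $\interval{\tau}{\tau''}$, on which the counter is only guaranteed to be positive; it can dip to $1$. For $n=0$ you need $r(x)\xrightarrow{\gamma}s(x)$, i.e., to replay $\gamma$ starting $d$ lower than in the original run, and this breaks as soon as the counter along $\gamma$ drops below $d$ (which can be as large as $|Q|^2$). This is precisely why the paper's extraction is two-staged: the base level $|Q|^2$ is taken inside-out (last visit before the summit, first after), which pins the counter to at least $|Q|^2$ on the whole interval $\interval{i_0}{j_0}$, and only the levels $|Q|^2+1,\dots,2|Q|^2$ are taken outside-in within that interval; then $\gamma$ lives where the counter never goes below $|Q|^2 \geq d$ and can safely be shifted down by $d$. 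Your argument for $n\geq 1$ by monotonicity is correct; the missing case is exactly the one the downstream lemma uses to compare $\pi$ with $\alpha\gamma\eta$.
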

\begin{proof}[{\proofname} Sketch]
  If the counter remains below $2|Q|^2$ then
  some configuration repeats at least twice.
  So there is a subrun of length at most $2|Q|^3$
  from some configuration $r(x)$ to the same configuration $r(x)$,
  and this subrun can be iterated arbitrarily many times.
  The cycle $\beta$ comes from this subrun and the cycle $\theta$ is empty.
  Otherwise,
  the run contains a ``high hill'' and we extract,
  for each counter value in $\interval{|Q|^2}{2|Q|^2}$,
  a pair of configurations with this counter value
  (see \cref{fig:pumping-lemmas} (right)).
  For the counter value $|Q|^2$,
  the pair of configurations is extracted
  from the inside of the hill towards the outside.
  Let $t_1$ and $t_2$ denote the positions of these configurations.
  For the counter values in $\interval{|Q|^2+1}{2|Q|^2}$,
  this extraction proceeds
  from the outside of the hill
  --- but limited to $\interval{t_1}{t_2}$ ---
  towards the inside.
  Some pair of states $(r, s)$ necessarily occurs twice in this extraction.
  The subrun between the two $r$ configurations provides the cycle $\beta$ and
  the subrun between the two $s$ configurations provides the cycle $\theta$.
  The extraction guarantees that these cycles can be iterated arbitrarily many times.
  By construction,
  the counter remains below $2|Q|^2$ in the subrun providing the cycle $\beta$
  (except possibly for the last configuration).
  If $\beta > |Q|^3$ then some configuration repeats at least twice
  in this subrun and we can proceed as in the first case of the proof.
  The same reasoning can also be applied to the cycle $\theta$.
\end{proof}

We now exploit the two previous pumping lemmas to obtain short runs with appropriate weights.
First,
we show in \cref{lem:short-runs-with-positive-weight} that
if there is a run from $p(0)$ to $q(0)$ with
positive (resp. negative) weight, then there is a short one.
In fact, this lemma will be used in the particular case where $p = q$
to get short cyclic runs with positive (resp. negative) weights.
Second,
we show in \cref{lem:short-runs-with-weight-modulo} that,
assuming that $p(0)$ and $q(0)$ are mutually reachable,
if there is a run from $p(0)$ to $q(0)$ whose
weight is in a given congruence class modulo $m > 0$,
then there is a short one and
the weight difference between the two runs can be
``qualitatively compensated'' by a cyclic run on $q(0)$.

\begin{lemma}
  \label{lem:short-runs-with-positive-weight}
  If $p(0) \wrightarrow{*}{w} q(0)$ for some $w \neq 0$ then
  $p(0) \wrightarrow{\pi}{u} q(0)$ for some $u \neq 0$ having the same sign as $w$
  and some $\pi \in \Delta^*$ such that $|\pi| \leq 539|Q|^9$.
\end{lemma}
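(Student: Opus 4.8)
The plan is to prove the statement for $w>0$; the case $w<0$ then follows by applying this case to the reversed automaton, obtained as for $\overline{\vass}$ by reversing every transition and additionally negating its weight, since a run $p(0)\wrightarrow{*}{w}q(0)$ in $\woca$ is exactly a run $q(0)\wrightarrow{*}{-w}p(0)$ in the reversal, and reversing a short witness there gives a short run of weight $-u$ with the same (negative) sign as $w$. So I assume $w>0$ and fix a \emph{shortest} run $\pi$ from $p(0)$ to $q(0)$ whose weight $w^{*}=\weight(\pi)$ is positive. The whole task is then to bound $|\pi|$.

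The engine is a minimality property. If $\pi$ visits the same configuration $r(x)$ at two positions, deleting the infix between them yields a strictly shorter run from $p(0)$ to $q(0)$, of weight $w^{*}$ minus the weight of that infix; this is valid because the counter values, and hence the applicability of all later zero-tests, are unchanged. By minimality this shorter run cannot be positive, so every such infix has weight at least $w^{*}\ge 1$. Two consequences follow. First, the prefix weights strictly increase by at least $w^{*}$ between two visits to a fixed configuration, so each configuration is visited at most $R+1$ times, where $R$ denotes the range of the prefix-weight function along $\pi$. Second, the same reasoning applies to the removable cycle pair $(\beta,\theta)$ supplied by \cref{lem:pumping-short-cycles} (applicable once $|\pi|\ge 2|Q|^{3}$, and removed by taking $n=0$): its weight is at least $w^{*}$, and since it is at most $|\beta\theta|\le 2|Q|^{3}$ we obtain $w^{*}\le 2|Q|^{3}$.

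It therefore suffices to bound, along $\pi$, the largest counter value reached and the weight range $R$, since the length is at most the product of $|Q|$, the counter range, and $R+1$. For the counter, I will show it can be confined to the band $\interval{0}{2|Q|^{2}}$: a run climbing higher contains a high hill, and the hill-cutting analysis underlying \cref{lem:pumping-short-cycles,lem:hill-cutting} lets me fold that excursion back into the band while keeping the two endpoints at counter $0$ and, crucially, without changing the total weight, contradicting minimality unless the counter was already low. Once the counter is confined, there are only $O(|Q|^{3})$ configurations, and the weight plays the role of an unconstrained second counter; a sign-change/pumping argument on this coordinate, analogous to the treatment of the second counter in the $2$-VASS bound of \cref{thm:2-vass-flat}, caps $R$ at $O(|Q|^{6})$. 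Multiplying the three bounds gives $|\pi|=O(|Q|^{9})$, and carrying the explicit constants through \cref{lem:pumping-short-cycles} yields the stated $539|Q|^{9}$.

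The main obstacle is precisely that minimality forbids the \emph{cheap} simplification: the short cyclic structures handed to us by \cref{lem:pumping-short-cycles} all have weight at least $w^{*}>0$, so they cannot simply be deleted without destroying positivity. The delicate point is instead to reshape the run, lowering tall hills and damping large weight swings, while \emph{preserving the total weight exactly} (or at least its sign), so that the reshaped run is still a positive witness and still no longer. This is where the precise low-counter placement $x+d\le 2|Q|^{2}$ and the independent iterability of the extracted cycles in \cref{lem:pumping-short-cycles} are indispensable, and where the zero-tests, which pin the counter to $0$ and block naive downward shifts of infixes, demand the most care.
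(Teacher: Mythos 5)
Your setup is sound (take a shortest positive-weight run $\pi$, observe that minimality forces every removable infix between two occurrences of the same configuration --- and every removable cycle pair from \cref{lem:pumping-short-cycles} --- to have weight at least $w^{*}>0$, whence $w^{*}\le 2|Q|^{3}$), and your reduction of the negative case to the positive one via the weight-negated reversal is fine. But the two steps that would actually yield the length bound are asserted, not proved, and the first of them is in direct tension with the minimality property you establish. You claim that a high hill can be ``folded back'' into the band $\interval{0}{2|Q|^{2}}$ \emph{without changing the total weight}. Hill-cutting gives you no such operation: it hands you a cycle pair $(\beta,\theta)$ that can be removed or iterated, and by your own argument $\weight(\beta\theta)\ge w^{*}>0$ in a minimal run, so deleting it to lower the hill necessarily drops the weight by at least $w^{*}$ and may destroy positivity. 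In other words, minimality itself guarantees that every weight-preserving shortening is impossible, so no sign-preserving, in-place reshaping of $\pi$ can exist; the counter along the minimal positive run need not stay low, and no argument is given that it does. The subsequent bound $R=O(|Q|^{6})$ on the prefix-weight range is likewise only gestured at (``analogous to the $2$-VASS treatment''), and the natural inequalities ($R\le|\pi|$, number of visits $\le R+1$) are circular without an independent bound on $R$.

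The paper escapes this trap by \emph{not} trying to bound $\pi$ through its own geometry. It applies \cref{lem:pumping-short-cycles} once to get $\pi=\alpha\beta\gamma\theta\eta$ with $\weight(\beta\theta)>0$ (by minimality), then discards $\alpha,\gamma,\eta$ entirely and replaces them by short connecting paths $\alpha',\gamma',\eta'$ of total length at most $179|Q|^{6}$ obtained from \cref{cor:short-oca-runs}, forming $\pi'=\alpha'\beta^{n}\gamma'\theta^{n}\eta'$ with $n=1+|\alpha'\gamma'\eta'|$ so that the $n$ copies of the positive-weight cycle pair dominate whatever weight $\alpha'\gamma'\eta'$ happens to have. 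This $\pi'$ is a valid positive-weight run from $p(0)$ to $q(0)$ of length at most $539|Q|^{9}$, and minimality is invoked only at the very end, to conclude $|\pi|\le|\pi'|$. That single idea --- build a fresh short positive witness rather than reshape the minimal one --- is what your proposal is missing; without it (or a genuine substitute for your counter-confinement and weight-range claims), the argument does not go through.
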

\begin{proof}
  We only consider the case where the weight $w$ is positive.
  The case where $w$ is negative is symmetric.
  Assume that the set
  $\{\pi \in \Delta^* \mid p(0) \xrightarrow{\pi} q(0) \wedge \weight(\pi) > 0\}$
  is not empty, and take a word $\pi$ of minimal length in that set.
  If $|\pi| < 2|Q|^3$ then we are done.
  Otherwise,
  by \cref{lem:pumping-short-cycles},
  there exists
  $r, s \in Q$, $x, d \in \setN$,
  and a factorization $\pi = \alpha \beta \gamma \theta \eta$
  satisfying the conditions of \cref{lem:pumping-short-cycles}.
  Observe that
  $
  p(0)
  \xrightarrow{\alpha \gamma \eta}
  q(0)
  $
  and $|\alpha \gamma \eta| < |\alpha \beta \gamma \theta \eta|$.
  We deduce from the minimality of $\pi$ that $\weight(\alpha \gamma \eta) \leq 0$.
  This entails that $\weight(\beta \theta) = \weight(\pi) - \weight(\alpha \gamma \eta) > 0$.
  By \cref{cor:short-oca-runs},
  since $p(0) \xrightarrow{*} r(x)$ and $s(x) \xrightarrow{*} q(0)$,
  there exists $\alpha', \eta'$ both of length at most $(|Q| + x)^3$ such that
  $p(0) \xrightarrow{\alpha'} r(x)$ and $s(x) \xrightarrow{\eta'} q(0)$.
  Similarly,
  since $r(x) \xrightarrow{*} s(x)$ via a run with no zero-test,
  there exists $\gamma'$ with no zero-test and of length $|\gamma'| \leq (|Q| + 2 x)^3$ such that
  $r(x) \xrightarrow{\gamma'} s(x)$.
  As $x \leq 2|Q|^2$,
  we get that $|\alpha' \gamma' \eta'| \leq (5|Q|^2)^3 + 2 (3|Q|^2)^3 \leq 179|Q|^6$.
  Now consider the word $\pi'$ defined by
  $\pi' \eqdef \alpha' \beta^n \gamma' \theta^n \eta'$
  where $n = 1 + |\alpha' \gamma' \eta'|$.
  Note that $p(0) \xrightarrow{\pi'} q(0)$ as
  $\gamma'$ contains no zero-test and the factorization
  $\pi = \alpha \beta \gamma \theta \eta$ satisfies the conditions of \cref{lem:pumping-short-cycles}.
  Moreover,
  $\weight(\pi') = \weight(\alpha' \gamma' \eta') + n \weight(\beta \theta)$,
  hence,
  $\weight(\pi') \geq -|\alpha' \gamma' \eta'| + n > 0$.
  We deduce from the minimality of $\pi$ that $|\pi| \leq |\pi'|$.
  It remains to show that $\pi'$ is short.
  By construction,
  $|\pi'| = |\alpha' \gamma' \eta'| + n |\beta \theta| \leq 179 |Q|^6 + 180 |Q|^6 \cdot 2|Q|^3$.
  We obtain that $|\pi'| \leq 539 |Q|^9$,
  which concludes the proof of the lemma.
\end{proof}

\begin{lemma}
  \label{lem:short-runs-with-weight-modulo}
  If $p(0) \wrightarrow{*}{w} q(0) \xrightarrow{*} p(0)$ then
  for every $m > 0$,
  there exists $\pi \in \Delta^*$ with $|\pi| < m^2 |Q|^3$ verifying
  $p(0) \xrightarrow{\pi} q(0)$ and
  $\weight(\pi) \equiv w \pmod{m}$,
  and such that
  if $w \neq \weight(\pi)$ then $q(0) \wrightarrow{*}{v} q(0)$
  for some $v \neq 0$ having the same sign as $w - \weight(\pi)$.
\end{lemma}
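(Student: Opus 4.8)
The plan is to transform the given run $p(0) \wrightarrow{\pi_0}{w} q(0)$ into the desired short run $\pi$ by iteratively removing cycles, using \cref{lem:hill-cutting} together with a pigeonhole argument on cycle weights modulo $m$ to stay inside the residue class, while recording the removed cycles so as to later build the compensating cyclic run on $q(0)$. The hypothesis $q(0) \xrightarrow{*} p(0)$ will be used only for this last part.

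First I would set up the reduction. Start from $\rho_0 = \pi_0$, so that $p(0) \xrightarrow{\rho_0} q(0)$ and $\weight(\rho_0) = w$. As long as the current run $\rho_t$ satisfies $|\rho_t| \geq m^2 |Q|^3$, apply \cref{lem:hill-cutting} with this $m$ to obtain a factorization $\rho_t = \alpha \beta_1 \cdots \beta_m \gamma \theta_m \cdots \theta_1 \eta$ with $\beta_i \theta_i \neq \varepsilon$ whose $m$ cycle pairs can be iterated independently. Writing $c_i = \weight(\beta_i) + \weight(\theta_i)$ and considering the prefix sums $S_0 = 0$ and $S_i = c_1 + \cdots + c_i$, the $m+1$ values $S_0, \ldots, S_m$ cannot be pairwise distinct modulo $m$, so $S_a \equiv S_b \pmod{m}$ for some $a < b$. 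Setting $n_i = 0$ for $i \in \{a+1, \ldots, b\}$ and $n_i = 1$ otherwise in the factorization yields a strictly shorter run $\rho_{t+1}$ from $p(0)$ to $q(0)$ of weight $w_{t+1} = w_t - (c_{a+1} + \cdots + c_b)$, where the removed amount is divisible by $m$. Since each step deletes at least one nonempty cycle pair, the length strictly decreases, so the process stops with a run $\pi = \rho_T$ of length $|\pi| < m^2 |Q|^3$; and since every removed block has weight divisible by $m$, we get $\weight(\pi) \equiv w \pmod{m}$. This settles the length and congruence requirements.

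Next I would produce the cyclic run on $q(0)$ in the case $\weight(\pi) \neq w$. By telescoping, $w - \weight(\pi)$ equals the sum of the weights of all cycle pairs removed during the reduction, so if this quantity is nonzero then at least one removed pair $\beta \theta$ — occurring in the factorization of some intermediate run $\rho_t$ — has weight of the same sign as $w - \weight(\pi)$. Here I would exploit the independence of the multipliers guaranteed by \cref{lem:hill-cutting}: setting the multiplier of this single pair to $N$ and all others to $1$ gives, for every $N \in \setN$, a valid run from $p(0)$ to $q(0)$ of weight $w_t + (N-1)\weight(\beta\theta)$. Prepending a fixed run $\sigma$ with $q(0) \xrightarrow{\sigma} p(0)$ (which exists since $q(0) \xrightarrow{*} p(0)$) closes a cyclic run on $q(0)$ of weight $\weight(\sigma) + w_t + (N-1)\weight(\beta\theta)$. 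As $\weight(\beta\theta)$ is nonzero and has the sign of $w - \weight(\pi)$, taking $N$ large enough makes this cyclic weight nonzero and of the required sign, giving the claimed $v$.

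The main obstacle is precisely this sign condition, which cannot be obtained from weight arithmetic alone: merely having runs of two different weights from $p(0)$ to $q(0)$ does not force a cyclic run on $q(0)$ of a prescribed sign, since the cheapest cyclic weights on $q(0)$ may all be negative. This is why the reduction must be organised so that the total weight change $w - \weight(\pi)$ is \emph{realised} as a sum of weights of genuinely pumpable cycle pairs, one of which can then be iterated in the correct direction. The secondary point to check carefully is that iterating a single pair among those originally marked for deletion still yields a valid run, which is exactly the independence of the multipliers $n_1, \ldots, n_m$ asserted by \cref{lem:hill-cutting}.
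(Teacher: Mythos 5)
Your proof is correct and follows essentially the same route as the paper's: both rest on \cref{lem:hill-cutting}, the same pigeonhole argument on the prefix sums of the cycle-pair weights modulo $m$, and the same mechanism for the sign condition (pumping a single cycle pair to obtain runs from $p(0)$ to $q(0)$ of unboundedly large, resp.\ small, weight, then composing with a fixed return run witnessing $q(0) \xrightarrow{*} p(0)$ to close a cycle on $q(0)$ of the required sign). The only difference is organizational: the paper picks a length-minimal run in a set whose membership already encodes the sign condition and preserves that condition through a four-way sign case analysis at each removal, whereas you perform the descent explicitly and defer the sign condition to a final telescoping argument over all removed blocks --- both are sound.
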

\begin{proof}
  Assume that $p(0) \wrightarrow{*}{w} q(0) \xrightarrow{*} p(0)$ and let $m > 0$.
  Let $C(\pi)$ denote the condition that
  if $w \neq \weight(\pi)$ then $q(0) \wrightarrow{*}{v} q(0)$
  for some $v \neq 0$ having the same sign as $w - \weight(\pi)$.
  Consider the set $S$ of all words $\pi \in \Delta^*$ such that
  $p(0) \xrightarrow{\pi} q(0)$,
  $\weight(\pi) \equiv w \pmod{m}$ and $C(\pi)$ holds.
  This set is not empty since $p(0) \wrightarrow{*}{w} q(0)$.
  Take a word $\pi$ of minimal length in $S$ and let us prove that $|\pi|$ meets the desired bound.
  Suppose, by contradiction, that $|\pi| \geq m^2 |Q|^3$.
  By \cref{lem:hill-cutting},
  there exists
  a factorization $\pi = \alpha \beta_1 \cdots \beta_m \gamma \theta_m \cdots \theta_1 \eta$,
  with $\beta_i \theta_i \neq \varepsilon$ for all $i \in \interval{1}{m}$,
  such that
  $
  p(0)
  \xrightarrow{\alpha \beta_1^{n_1} \cdots \beta_m^{n_m} \gamma \theta_m^{n_m} \cdots \theta_1^{n_1} \eta}
  q(0)
  $
  for every $n_1, \ldots, n_m \in \setN$.
  Let $u_i \eqdef \weight(\beta_1 \cdots \beta_i \theta_i \cdots \theta_1)$
  for every $i \in \interval{0}{m}$,
  with the understanding that $u_0 = \weight(\varepsilon) = 0$,
  and consider the sequence $u_0, \ldots, u_m$.
  By the pigeonhole principle,
  there exists $0 \leq i < j \leq m$ such that
  $u_i$ and $u_j$ are in the same congruence class modulo $m$.
  So $u_j = u_i + u$ for some $u \in \setZ m$.
  This means that $\weight(\beta_{i+1} \cdots \beta_j \theta_j \cdots \theta_{i+1}) = u$.
  Now,
  for each $k \in \setN$,
  let $\pi'_k$ denote the word obtained from $\pi$ by taking
  the cycles $\beta_{i+1}, \ldots, \beta_j$ and $\theta_j, \ldots, \theta_{i+1}$ exactly $k$ times,
  formally,
  $\pi'_k \eqdef \alpha \beta_1 \cdots \beta_i \beta_{i+1}^k \cdots \beta_j^k \beta_{j+1} \cdots \beta_m \gamma \theta_m \cdots \theta_{j+1} \theta_j^k \cdots \theta_{i+1}^k \theta_i\cdots \theta_1 \eta$.
  It is readily seen that $p(0) \xrightarrow{\pi'_k} q(0)$ and $\weight(\pi'_k) = \weight(\pi) + (k-1) u$.

  \smallskip

  Let us prove that $\pi'_0 \in S$.
  We have already shown that
  $p(0) \xrightarrow{\pi'_0} q(0)$ and $\weight(\pi'_0) = \weight(\pi) - u$,
  hence,
  $\weight(\pi'_0) \equiv w \pmod{m}$.
  It remains to show that $C(\pi'_0)$ holds.
  Let $s, s' \in \{-1, 0, 1\}$ denote the signs of $w - \weight(\pi)$ and $w - \weight(\pi'_0)$,
  respectively.
  If $s' = 0$ then $C(\pi'_0)$ holds trivially.
  If $s' \neq 0$ and $s = s'$ then $C(\pi'_0)$ holds because $C(\pi)$ holds.
  If $s' = 1$ and $s \leq 0$ then
  $\weight(\pi'_0) < w \leq \weight(\pi)$,
  hence, $u > 0$.
  It follows from $\weight(\pi'_k) = \weight(\pi) + (k-1) u$ that
  $p(0) \wrightarrow{*}{v} q(0)$ for infinitely many $v > 0$.
  As $q(0) \xrightarrow{*} p(0)$,
  we deduce that $q(0) \wrightarrow{*}{v} q(0)$
  for some $v > 0$,
  hence,
  $C(\pi'_0)$ holds.
  If $s' = -1$ and $s \geq 0$ then
  $\weight(\pi) \leq w < \weight(\pi'_0)$,
  hence, $u < 0$.
  It follows from $\weight(\pi'_k) = \weight(\pi) + (k-1) u$ that
  $p(0) \wrightarrow{*}{v} q(0)$ for infinitely many $v < 0$.
  As $q(0) \xrightarrow{*} p(0)$,
  we deduce that $q(0) \wrightarrow{*}{v} q(0)$
  for some $v < 0$,
  hence,
  $C(\pi'_0)$ holds.
  We have shown in all cases that $\pi'_0 \in S$.
  This contradicts the minimality of $\pi$ since
  $|\pi'_0| = |\pi| - |\beta_{i+1} \cdots \beta_j \theta_j \cdots \theta_{i+1}| < |\pi|$.
\end{proof}

We are now ready to prove the main result of this section,
namely that the reachable weights between two mutually reachable configurations
$p(0)$ and $q(0)$ can be obtained via small linear path schemes.

\begin{theorem}
  \label{lem:pumping-short-lps}
  Let $\woca = (Q, \Sigma, \Delta, \weight)$ be a WOCA.
  For every states $p, q \in Q$ and weight $w \in \setZ$ verifying
  $p(0) \wrightarrow{*}{w} q(0) \xrightarrow{*} p(0)$,
  there exists $\alpha, \beta \in \Delta^*$ and $n \in \setN$
  such that $p(0) \wrightarrow{\alpha \beta^n}{w} q(0)$,
  $q(0) \xrightarrow{\beta} q(0)$ and
  $|\alpha \beta| \leq (2|Q|)^{39}$.
\end{theorem}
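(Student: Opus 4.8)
The plan is to produce the exact weight $w$ in the form $\weight(\alpha) + n\,\weight(\beta)$, where $\alpha$ is a short run from $p(0)$ to $q(0)$ and $\beta$ is a single short cyclic run on $q(0)$, repeated $n$ times. Thus $\alpha\beta^n$ will be the desired witness, with $\beta$ serving as the lone cycle. The two degrees of freedom at my disposal are the residue class of $\weight(\alpha)$ and the iteration count $n$; the whole difficulty is that $n$ must be \emph{nonnegative}, so the leftover weight $w - \weight(\alpha)$ has to be an exact multiple of $\weight(\beta)$ \emph{of the correct sign}.

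First I would record which cyclic runs are available on $q(0)$. Applying \cref{lem:short-runs-with-positive-weight} with $p := q$, I extract---whenever $q(0)$ admits a cyclic run of positive (resp.\ negative) weight---a short cyclic run $\beta^+$ (resp.\ $\beta^-$) on $q(0)$ of length at most $539|Q|^9$, with weight $c^+ > 0$ (resp.\ $c^- < 0$). I then fix the modulus $m$ to be the product of $c^+$ and $|c^-|$ taken over the signs that actually occur (with $m := 1$ when neither occurs), so that $m$ is divisible by every candidate cycle weight. Taking a product rather than committing to one sign is deliberate: at this point I do not yet know in which direction the correction will be needed.

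Next I would invoke \cref{lem:short-runs-with-weight-modulo} on the hypothesis $p(0)\wrightarrow{*}{w}q(0)\xrightarrow{*}p(0)$ with this modulus $m$, obtaining a run $\alpha$ with $p(0)\xrightarrow{\alpha}q(0)$, $|\alpha| < m^2|Q|^3$, and $\weight(\alpha)\equiv w\pmod m$, together with its compensation clause. Since $m$ divides $w - \weight(\alpha)$, three cases arise. If $w = \weight(\alpha)$, I take $\beta := \varepsilon$ and $n := 0$. If $w > \weight(\alpha)$, the compensation clause guarantees a positively-weighted cyclic run on $q(0)$; hence $\beta^+$ was extracted and $c^+ \mid m \mid (w - \weight(\alpha))$, so I set $\beta := \beta^+$ and $n := (w - \weight(\alpha))/c^+$, a well-defined positive integer. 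The case $w < \weight(\alpha)$ is symmetric, using $\beta^-$. In every case $\weight(\alpha) + n\,\weight(\beta) = w$ and $\beta$ is a cycle on $q(0)$, giving $p(0)\wrightarrow{\alpha\beta^n}{w}q(0)$ and $q(0)\xrightarrow{\beta}q(0)$, as required.

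The main obstacle is precisely the synchronization between residue and sign: the modulus must be chosen \emph{before} the sign of the required correction is known, which is why it is made divisible by both candidate cycle weights, while the compensation clause of \cref{lem:short-runs-with-weight-modulo} is exactly what ensures that a cycle of the correct sign is in fact available once that sign is revealed. The length bound is then routine: from $c^+, |c^-| \leq 539|Q|^9$ one gets $m \leq 539^2|Q|^{18}$, whence $|\alpha| < m^2|Q|^3 \leq 539^4|Q|^{39}$ and $|\beta| \leq 539|Q|^9$; since $539^4 + 539 \leq 2^{39}$, a direct computation yields $|\alpha\beta| \leq (2|Q|)^{39}$.
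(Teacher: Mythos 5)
Your proof is correct and follows essentially the same route as the paper: extract short positive- and negative-weight cycles on $q(0)$ via \cref{lem:short-runs-with-positive-weight}, take the modulus to be the product of the available cycle weights, apply \cref{lem:short-runs-with-weight-modulo}, and use its compensation clause to guarantee that a cycle of the sign matching $w - \weight(\alpha)$ is available; the length computation also matches. The only difference is cosmetic --- you case-split on the sign of the correction $w - \weight(\alpha)$ where the paper case-splits on which of the two cycles are nonempty --- and both organizations are sound.
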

\begin{proof}
  Assume that $p(0) \wrightarrow{*}{w} q(0) \xrightarrow{*} p(0)$.
  We start by fixing two short cyclic runs on $q(0)$,
  one with positive weight and one with negative weight, as follows.
  By \cref{lem:short-runs-with-positive-weight},
  if $q(0) \wrightarrow{*}{w} q(0)$ for some $w > 0$ then
  $q(0) \xrightarrow{\beta} q(0)$ for some
  $\beta \in \Delta^*$ such that $\weight(\beta) > 0$ and $|\beta| \leq 539|Q|^9$.
  Let $\beta \eqdef \varepsilon$ otherwise.
  Analogously,
  if $q(0) \wrightarrow{*}{w} q(0)$ for some $w < 0$ then
  $q(0) \xrightarrow{\theta} q(0)$ for some
  $\theta \in \Delta^*$ such that $\weight(\theta) < 0$ and $|\theta| \leq 539|Q|^9$.
  Let $\theta \eqdef \varepsilon$ otherwise.

  \smallskip

  By \cref{lem:short-runs-with-weight-modulo},
  for each $m \in \{1, \weight(\beta), -\weight(\theta), -\weight(\beta)\weight(\theta)\}$
  such that $m > 0$,
  there exists $\alpha_m \in \Delta^*$ with $|\alpha_m| < m^2 |Q|^3$ verifying
  $p(0) \xrightarrow{\alpha_m} q(0)$ and
  $\weight(\alpha_m) \equiv w \pmod{m}$,
  and such that
  if $w \neq \weight(\alpha_m)$ then $q(0) \wrightarrow{*}{v} q(0)$
  for some $v \neq 0$ having the same sign as $w - \weight(\alpha_m)$.
  Let $u_m \eqdef w - \weight(\alpha_m)$ and note that $u_m \in \setZ m$.
  Moreover, $u_m > 0$ implies $\beta \neq \varepsilon$
  and $u_m < 0$ implies $\theta \neq \varepsilon$.
  We now consider four cases depending on the emptiness of $\beta$ and $\theta$.

  \smallskip

  If $\beta = \theta = \varepsilon$ then we use $\alpha_m$ and $u_m$ for $m \eqdef 1$.
  We derive from $\beta = \theta = \varepsilon$ that $u_m = 0$.
  It follows that $w = \weight(\alpha_m) + u_m = \weight(\alpha_m \beta)$.

  If $\beta \neq \varepsilon$ and $\theta = \varepsilon$ then we use $\alpha_m$ and $u_m$
  for $m \eqdef \weight(\beta) > 0$.
  We derive from $\theta = \varepsilon$ that $u_m \geq 0$.
  As $u_m \in \setZ m$, we get that $u_m = n \weight(\beta)$ for some $n \in \setN$.
  It follows that $w = \weight(\alpha_m) + u_m = \weight(\alpha_m \beta^n)$.

  If $\beta = \varepsilon$ and $\theta \neq \varepsilon$ then we use $\alpha_m$ and $u_m$
  for $m \eqdef -\weight(\theta) > 0$.
  We derive from $\beta = \varepsilon$ that $u_m \leq 0$.
  As $u_m \in \setZ m$, we get that $u_m = n \weight(\theta)$ for some $n \in \setN$.
  It follows that $w = \weight(\alpha_m) + u_m = \weight(\alpha_m \theta^n)$.

  If $\beta \neq \varepsilon$ and $\theta \neq \varepsilon$ then we use $\alpha_m$ and $u_m$
  for $m \eqdef - \weight(\beta) \weight(\theta) > 0$.
  As $u_m \in \setZ m$, we get that $u_m = n \weight(\beta) \weight(\theta)$ for some $n \in \setZ$.
  If $n \leq 0$ then $w = \weight(\alpha_m) + u_m = \weight(\alpha_m \beta^{n \weight(\theta)})$.
  If $n \geq 0$ then $w = \weight(\alpha_m) + u_m = \weight(\alpha_m \theta^{n \weight(\beta)})$.

  \smallskip

  We have shown in each case that $w = \weight(\alpha_m \gamma^n)$ for some
  $m \in \{1, \weight(\beta), -\weight(\theta), -\weight(\beta)\weight(\theta)\}$
  with $m > 0$,
  some $\gamma \in \{\beta, \theta\}$ and some $n \in \setN$.
  Moreover,
  our choice of $\beta$ and $\theta$ ensures that
  $m \leq (539|Q|^9)^2$,
  $q(0) \xrightarrow{\gamma} q(0)$ and $|\gamma| \leq 539|Q|^9$.
  Recall that $p(0) \xrightarrow{\alpha_m} q(0)$ and $|\alpha_m| < m^2 |Q|^3$.
  It follows that
  $p(0) \wrightarrow{\alpha_m \gamma^n}{w} q(0)$
  and
  $|\alpha_m| \leq (539|Q|^9)^4 |Q|^3$.
  We obtain that $|\alpha_m \gamma| \leq (2|Q|)^{39}$,
  which concludes the proof of the theorem.
\end{proof}

\section{Succinct Flattenability of 2-TVASS}
\label{sec:succint-flattenability-of-2-tvass}
We have shown in \cref{sec:flattenability-of-2-tvass} that $2$-TVASS are flattenable.
We now prove that flattenability of $2$-TVASS can be witnessed by small linear path schemes.

\smallskip

We first introduce a binary relation $\rightsquigarrow$ over the states of
a $2$-TVASS $\vass$, defined by $p \rightsquigarrow q$ if $p(0,x)\xrightarrow{*}q(0,y)$ for some
$x,y\in\setN$. Notice that this relation is transitive since
$p(0,x)\xrightarrow{*}q(0,y)$ and $q(0,x')\xrightarrow{*}r(0,y')$
implies $p(0,x+x')\xrightarrow{*}q(0,y+x')\xrightarrow{*}r(0,y+y')$ by
monotonicity. As mentioned in \cref{sec:weighted-one-counter-automata}, a WOCA can be associated to any
$2$-TVASS by considering actions on the second counter, the one that is not tested
for zero, as weights. Under this observation, $p \rightsquigarrow q$ if, and only if,
$p(0)\xrightarrow{*}q(0)$ in the associated WOCA.
This observation also provides a way to convert
\cref{lem:pumping-short-lps} to the following lemma.
\begin{lemma}
  \label{lem:largelarge}
  There exists a constant $h\geq 1$ such that,
  for every $2$-TVASS $\vass=(Q,\Sigma,\Delta)$,
  if $p(0,x)\xrightarrow{*}q(0,y)$ with $x,y\geq (|Q|+\norm{\Sigma})^h$ and $q
  \rightsquigarrow p$, then there exists a linear path scheme $L$ with
  $|L| \leq (|Q|+\norm{\Sigma})^{O(1)}$ and $|L|_*=1$ such that $p(0,x)\xrightarrow{L}q(0,y)$.
\end{lemma}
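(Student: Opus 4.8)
The plan is to transfer \cref{lem:pumping-short-lps} from WOCA to $2$-TVASS through the correspondence already sketched above, the only real work being to keep the second (untested) counter nonnegative. First I would realize the associated WOCA $\woca$ with unit actions and weights, with state set $Q_\woca$: each addition transition $(p,(a_1,a_2),q)$ is replaced by a gadget of $\max(|a_1|,|a_2|)\leq\norm{\Sigma}$ elementary steps through fresh private states, changing the first counter and the weight monotonically by units, while zero-test transitions are kept verbatim. Since $|a_1|,|a_2|\leq\norm{\Sigma}$ and the gadget states are private, $\woca$ has $|Q_\woca|\leq(|Q|+\norm{\Sigma})^{O(1)}$ states, the weight of a gadget equals the second-counter effect $a_2$ of the original transition, and every $\woca$-run between two states of $Q$ refolds into a genuine run of $\vass$ (of no greater length) reading the corresponding original transitions.

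Reading the second counter of $\vass$ as the weight of $\woca$, the hypotheses give $p(0)\wrightarrow{*}{w}q(0)$ in $\woca$ with $w\eqdef y-x$ (obtained from the run $p(0,x)\xrightarrow{*}q(0,y)$ by forgetting nonnegativity of the second counter) and $q(0)\xrightarrow{*}p(0)$ (from $q\rightsquigarrow p$). \cref{lem:pumping-short-lps} then yields words $\alpha,\beta$ over the transitions of $\woca$ and some $n\in\setN$ with $p(0)\wrightarrow{\alpha\beta^n}{w}q(0)$, $q(0)\xrightarrow{\beta}q(0)$ and $|\alpha\beta|\leq(2|Q_\woca|)^{39}\leq(|Q|+\norm{\Sigma})^{O(1)}$. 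Refolding $\alpha$ and $\beta$ into words $\alpha',\beta'$ over $\Delta$ gives a path $\alpha'$ from $p$ to $q$ and a cycle $\beta'$ on $q$ with $|\alpha'\beta'|\leq|\alpha\beta|$ and the same net second-counter effects $\weight(\alpha),\weight(\beta)$. I would then take $L\eqdef\alpha'(\beta')^*$, so that $|L|_*=1$ and $|L|\leq(|Q|+\norm{\Sigma})^{O(1)}$, and claim $p(0,x)\xrightarrow{\alpha'(\beta')^n}q(0,y)$.

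The first counter behaves along $\alpha'(\beta')^n$ exactly as along the valid $\woca$-run $\alpha\beta^n$, hence stays nonnegative and returns to $0$; and the net second-counter change is $\weight(\alpha)+n\weight(\beta)=w=y-x$, so the run ends at $q(0,y)$. The crux is to show the second counter never goes negative, which is where the lower bounds on $x$ and $y$ enter. Tracking the weight $\phi$ of $\woca$ started at $x$ along $\alpha\beta^n$, I would bound its minimum by $\min(x,y)-|\alpha\beta|$: along $\alpha$ we have $\phi\geq x-|\alpha|$; writing $x'\eqdef x+\weight(\alpha)$, the $k$-th copy of $\beta$ starts at weight $x'+(k-1)\weight(\beta)$ and deviates by at most $|\beta|$, so if $\weight(\beta)\geq0$ the minimum is attained at the first copy and is $\geq x'-|\beta|\geq x-|\alpha\beta|$, whereas if $\weight(\beta)<0$ it is attained at the last copy, where the weight is $x'+(n-1)\weight(\beta)=y-\weight(\beta)\geq y$, giving $\geq y-|\alpha\beta|$. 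Since the second counter of $\vass$ equals $\phi$ at gadget boundaries and lies between consecutive boundary values inside gadgets, it is everywhere $\geq\min(x,y)-|\alpha\beta|$. Fixing $h$ so that $(|Q|+\norm{\Sigma})^h$ dominates the bound $(2|Q_\woca|)^{39}$ on $|\alpha\beta|$, the hypotheses $x,y\geq(|Q|+\norm{\Sigma})^h$ force the second counter to remain nonnegative, so $\alpha'(\beta')^n$ is a genuine run of $\vass$ and $p(0,x)\xrightarrow{L}q(0,y)$.

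The main obstacle is precisely this nonnegativity analysis of the second counter: unlike the first counter, whose validity is inherited for free from the $\woca$-run, the second counter can dip below both of its endpoints, and the whole point of requiring $x$ and $y$ large is to absorb these dips, whose depth is controlled by the polynomial length bound $|\alpha\beta|\leq(2|Q_\woca|)^{39}$ coming from \cref{lem:pumping-short-lps}. A secondary technical point is to justify that refolding $\alpha$ and $\beta$ yields honest $\vass$-transitions; this is ensured by making the gadget states private, so that a $\woca$-run can sit at a state of $Q$ only at gadget boundaries, whence $\alpha$ (ending at $q$) and each copy of $\beta$ (a cycle on $q$) decompose into complete gadgets. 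Finally, should the lemma return $\beta=\varepsilon$, one keeps $|L|_*=1$ by reading the empty cycle, or any cycle on $q$ iterated zero times.
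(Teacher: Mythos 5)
Your proposal is correct and follows essentially the same route as the paper: normalize to unit actions via fresh intermediate states, pass to the associated WOCA, apply \cref{lem:pumping-short-lps}, refold $\alpha\beta^n$ back into $\vass$, and use $x,y\geq(|Q|+\norm{\Sigma})^h$ to absorb the excursions of the second counter. Your sign analysis on $\weight(\beta)$ (minimum of the second counter attained in the first or the last copy of $\beta$, hence bounded below by $\min(x,y)-|\alpha\beta|$) is simply a more explicit version of the paper's one-line justification that each execution of $\beta$ shifts the second counter by at most $|\beta|\leq(2|Q|)^{39}$.
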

\begin{proof}
  A $2$-TVASS cannot be directly translated into a WOCA since some addition transitions $(p,\vec{a},q)$ may satisfy $\norm{\vec{a}}>1$. However, by introducing intermediate states and transitions between $p$ and $q$, we can overcome this problem. It follows that we can assume, without loss of generality, that every addition transition $(p,\vec{a},q)$ satisfies $\norm{\vec{a}}\leq 1$. Additionally, by introducing for each state $p$ and each addition transition $\delta$ an intermediate state, we can assume that if $(p,\vec{a},q)$ and $(p,\vec{b},q)$ are two addition transitions such that $\vec{a}(1)=\vec{b}(1)$ then $\vec{a}(2)=\vec{b}(2)$. Thanks to this assumption, we can associate to a $2$-TVASS $\vass=(Q,\Sigma,\Delta)$ a WOCA $(Q,\Sigma',\Delta',\weight)$ in such a way $(p,(a,b),q)$ is a transition in $\vass$ if, and only if, $(p,a,q)$ is a transition in the WOCA weighted by $b$, and such that $(p,\ztest,q)$ is a transition in $\vass$ if, and only if, $(p,\ztest,q)$ is a transition in the WOCA, and in that case the transition is weighted by zero. Now, let us consider two configurations $p(0,x)$ and $q(0,y)$ with $x,y\geq (2|Q|)^{39}$ such that $p(0,x)\xrightarrow{*}q(0,y)$ and $q \rightsquigarrow p$ in $\vass$.
  Notice that $p(0)\wrightarrow{*}{w}q(0)$ and $q(0) \xrightarrow{*}p(0)$ in the
  WOCA with $w=y-x$. From
  \cref{lem:pumping-short-lps}, it follows that there exists a path
  $\pi$ from $p$ to $q$, a cycle $\theta$ on $q$, and $n\in\setN$ such that
  $p(0)\wrightarrow{\pi\theta^n}{w}q(0)$ with
  $|\pi\theta|\leq (2|Q|)^{39}$.
  Notice that $\pi$ and $\theta$ in the WOCA corresponds to a path $\alpha$ and a cycle $\beta$ in $\vass$, respectively. Since $x,y\geq (2|Q|)^{39}\geq |\alpha\beta|$, observe that $p(0,x)\xrightarrow{\alpha\beta^n}q(0,y)$ since each execution of $\beta$ can decrease or increase the second counter by a value bounded by $|\beta|\leq (2|Q|)^{39}$.
\end{proof}

The previous lemma captures the reachability relation of a $2$-TVASS
between configurations $p(0,x)$ and $q(0,y)$ with $p \rightsquigarrow q \rightsquigarrow p$ and $x,y$
are large. In order to capture the same relation when $x$ or $y$
are small, the following result will be useful.
\begin{theorem}[\cite{DBLP:conf/lics/EnglertLT16}]\label{thm:ranko}
  For every $2$-VASS $\vass = (Q, \Sigma, \Delta)$,
  and for every configurations $p(\vec{x})$ and $q(\vec{y})$
  such that $p(\vec{x}) \xrightarrow{*} q(\vec{y})$ in $\vass$,
  there exists a path $\pi$ such that $p(\vec{x}) \xrightarrow{\pi} q(\vec{y})$
  and satisfying
  $$
  |\pi|\leq (|Q|+ \norm{\Sigma}+\norm{\vec{x}}+\norm{\vec{y}})^{O(1)}
  \:.
  $$
\end{theorem}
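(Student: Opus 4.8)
The plan is to reason about a run of \emph{minimal length} among those witnessing $p(\vec{x}) \xrightarrow{*} q(\vec{y})$, and to bound that length polynomially in $|Q| + \norm{\Sigma} + \norm{\vec{x}} + \norm{\vec{y}}$. The first step isolates the only quantity that really needs to be controlled, namely the largest counter value reached. If some configuration $c$ occurred twice in a minimal run, the infix between the two occurrences would be a cycle returning exactly to $c$, hence of zero net effect; deleting it would produce a strictly shorter run with the same endpoints, a contradiction. A minimal run therefore visits pairwise distinct configurations, so its length is at most the number of configurations with both counters at most $H$, that is $|Q|(H+1)^2$, where $H$ is the maximal counter value occurring along the run. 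It thus suffices to establish $H \leq (|Q| + \norm{\Sigma} + \norm{\vec{x}} + \norm{\vec{y}})^{O(1)}$.

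To bound $H$, I would first use the flattening results (\cref{thm:2-vass-flat}) to write the run as a linear path scheme $\alpha_0 \beta_1^{n_1} \alpha_1 \cdots \beta_k^{n_k} \alpha_k$ in which $k$ and every $|\alpha_i|$, $|\beta_i|$ are already polynomial in $|Q| + \norm{\Sigma}$, so that the displacements $\disp{\beta_i}$ and $\disp{\alpha_i}$ are vectors of norm at most $(|Q| + \norm{\Sigma})^{O(1)}$. Since $H$ is attained at one of the pivot configurations and each cycle moves the counters by a bounded vector, controlling $H$ is equivalent to bounding the iteration counts $n_i$ polynomially in the magnitudes. This is where the nonnegativity constraints must be exploited: the global displacement equation $\sum_i n_i \disp{\beta_i} = \vec{y} - \vec{x} - \sum_i \disp{\alpha_i}$ alone only yields exponential bounds through small-solution theorems, whereas the requirement that the counters stay nonnegative at every pivot between two cycle blocks is what forces the $n_i$ to remain small.

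Concretely, I would analyse the run by \emph{regimes} according to which counters exceed a polynomial threshold $H_0 \geq \norm{\Sigma}$. On a maximal sub-run where both counters stay above $H_0$, no addition is ever blocked, so the sub-run remains valid even when the counters are allowed to range over $\setZ$; in that integer relaxation, reachability between two configurations is governed only by graph connectivity and the displacement vector, and admits witnesses of length polynomial in $|Q|$ and in the norm of that displacement. On a sub-run where one counter stays above $H_0$ and the other stays below it, the low counter behaves like the single counter of a one-counter automaton whose companion counter ranges freely in $\setZ$, a situation amenable to the pumping arguments of \cref{sec:weighted-one-counter-automata}. Finally, the configurations with both counters below $H_0$ form a polynomial-size control graph that a repetition-free run can cross only polynomially often.

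The main obstacle is to turn this regime decomposition into an actual bound on $H$ without destroying the run's validity. Replacing a high excursion by a short integer witness is not immediate, because a shorter path realizing the same displacement over $\setZ$ may dip below zero, or below $H_0$, when replayed under the nonnegative semantics; moreover the boundaries between regimes, where one counter crosses $H_0$, must be matched up so that the spliced run stays nonnegative throughout. I expect this to require a careful case analysis on the signs of the cycle effects $\disp{\beta_i}$ relative to the binding counter in each regime, together with a shortcutting lemma that simultaneously shortens a high excursion and certifies nonnegativity at its endpoints. This delicate bookkeeping, rather than any single inequality, is the crux of the result; it is precisely the analysis carried out in~\cite{DBLP:conf/lics/EnglertLT16}.
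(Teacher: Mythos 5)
The paper does not actually prove this statement: it is imported verbatim from Englert, Lazi\'c and Totzke~\cite{DBLP:conf/lics/EnglertLT16} and used as a black box (in the proof of \cref{lem:q2q}), so there is no internal proof to compare against. Judged on its own terms, your proposal has a sound opening: taking a minimal run, observing that its configurations are pairwise distinct, and thereby reducing the theorem to a polynomial bound on the peak counter value $H$ is correct, as is the reduction of $H$ to the cycle iteration counts $n_i$ of a linear path scheme furnished by \cref{thm:2-vass-flat}. You also correctly diagnose why the obvious route fails: feeding the displacement equation into a small-solution theorem such as \cref{cor:pottier} yields only an exponential bound --- which is exactly what the paper's own \cref{lem:shortpath} settles for, since exponential suffices for PSPACE.

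The genuine gap is that the polynomial bound itself is never established. Everything after the diagnosis --- the three-regime decomposition, the replacement of high excursions by short $\setZ$-witnesses, the matching of regime boundaries so that the spliced run stays nonnegative --- is described as a plan, its difficulties are listed, and the resolution is explicitly deferred with ``it is precisely the analysis carried out in~\cite{DBLP:conf/lics/EnglertLT16}.'' That makes the argument circular: you have reduced the cited theorem to the content of its own citation. The missing piece is precisely the quantitative core, namely a lemma showing that a high excursion realizing a given displacement can be replaced by one of length polynomial in $|Q|+\norm{\Sigma}$ and in the norms of its endpoint counter values, \emph{together with} a certification that the replacement never dips below the relevant threshold; without a concrete statement and proof of such a shortcutting lemma (including the sign analysis of the cycle effects you allude to), no bound better than exponential follows from what you have written. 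A complete write-up would either carry out that analysis or, more honestly, consist of the citation alone.
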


We are now ready to refine \cref{lem:increasing-vloop-flattenable} with complexity bounds.
Recall that $\vloop_q$ is the vertical loop relation on $q$ defined by
$
{\vloop_q} = \{(q(0, x), q(0, y)) \mid q(0, x) \xrightarrow{*} q(0, y)\}
$.

\begin{lemma}\label{lem:q2q}
  For every $2$-TVASS $\vass=(Q,\Sigma,\Delta)$ and state $q \in Q$,
  we have ${\vloop_q} \subseteq \bigcup_{L \in \Lambda} {\xrightarrow{L}}$
  for some finite set $\Lambda$ of linear path schemes $L$ such that
  $|L|\leq (|Q|+\norm{\Sigma})^{O(1)}$ and $|L|_*\leq O(|Q|^2)$.
\end{lemma}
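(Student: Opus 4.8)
The plan is to refine the covering of $\vloop_q$ from \cref{lem:increasing-vloop-flattenable} so that each vertical loop is witnessed by a scheme of polynomial length and of $*$-length at most $O(|Q|^2)$. As in the non-quantitative proof, I would first reduce the task: since $\vloop_q \subseteq {\pvloop_q} \cup {\nvloop_q} \cup {\xrightarrow{\varepsilon}}$, and $\nvloop_q$ in $\vass$ is $\pvloop_q$ in the reversed system $\overline{\vass}$ (whose schemes I then mirror), it suffices to cover $\pvloop_q$, the identity part being witnessed by the empty scheme. Fix a polynomial threshold $N = (|Q| + \norm{\Sigma})^{O(1)}$, large enough to trigger \cref{lem:largelarge} and to execute the short cycles introduced below, and set $N' = N + \norm{\Sigma}$.

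First I would dispose of the large regime. If $q(0,x) \xrightarrow{*} q(0,y)$ with $x \geq N$, then $y > x \geq N$; since $q \rightsquigarrow q$ holds trivially, \cref{lem:largelarge} (applied with $p = q$) yields a scheme of polynomial length and $*$-length $1$ realizing this loop. The set of all schemes on $q$ of $*$-length $1$ and polynomially bounded length is finite, so collecting them covers every loop with $x \geq N$. The heart of the matter is the regime $x < N$, where monotonicity no longer lets one climb for free. The key device here is to restrict attention to the band in which the second counter stays below $N'$: tracking its (bounded) value inside the control state turns band-restricted behaviour into a one-counter automaton over $Q \times \interval{0}{N'}$ whose counter is the first, testable one, and after normalising addition actions to $\{-1,0,1\}$ (as in the proof of \cref{lem:largelarge}) this is a $1$-TVASS of polynomial size.

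For $x < N$ I would split on $y$. If $y < N'$, then $q(0,x) \xrightarrow{*} q(0,y)$ is realised inside the band, so \cref{cor:short-oca-runs} applied to the product automaton (both endpoints have first counter $0$) gives a polynomial-length path, i.e. a scheme of $*$-length $0$. If $y \geq N'$, I would take a witnessing run $q(0,x) \xrightarrow{*} q(0,y)$ and cut it at a \emph{launch} configuration $q'(0,z')$ with first counter $0$ and $z' \in \interval{N}{N'}$: the prefix reaching it stays inside the band, hence is short by \cref{cor:short-oca-runs}, while the suffix witnesses $q'(0,z') \xrightarrow{*} q(0,y)$ with both second counters $\geq N$ and $q \rightsquigarrow q'$, so \cref{lem:largelarge} finishes it with a single extra cycle. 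In every case the resulting scheme has polynomial length and $*$-length at most $1$, well within the claimed $O(|Q|^2)$; taking the (finite) union over all cases and over $x < N$ yields the desired $\Lambda$.

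The main obstacle is the existence of such a bounded launch configuration on a first-counter-zero pivot, and more generally the justification that the band reductions lose nothing. Because only the first counter is tested, monotonicity shows that raising the second counter can only help; the difficulty is that along a given run the second counter may fluctuate non-monotonically and may cross the threshold precisely when the first counter is nonzero, so one cannot simply read a launch pivot off an arbitrary witnessing run. I would resolve this by a pumping argument on the one-counter projection, using the short increasing cycle $\beta$ at $q$ obtained from \cref{lem:short-runs-with-positive-weight} in the associated WOCA (with $|\beta| \leq 539|Q|^9$, valid once the second counter is at least $|\beta|$): one climbs from $q(0,x)$ into the band by a band-restricted one-counter run and then pumps $\beta$, thereby relocating the crossing to a pivot with first counter $0$ and second counter in $\interval{N}{N'}$. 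Bounding the first counter and the number of zero-tests in this climb, so that \cref{cor:short-oca-runs} really yields a polynomial-length prefix, is the delicate point, and is exactly where the monotonicity of the single untested counter, rather than of both, must be exploited.
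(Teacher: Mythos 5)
There is a genuine gap, and it sits exactly where you yourself flag ``the delicate point'': the soundness of the band reduction. You assert that a vertical loop $q(0,x) \xrightarrow{*} q(0,y)$ with $x < N$ and $y < N'$ ``is realised inside the band'', and likewise that the prefix reaching a launch configuration can be taken band-restricted; but nothing guarantees that \emph{some} witnessing run keeps the second counter below a polynomial bound. Only the first counter is testable, so the second counter may be forced through excursions that are coupled to the first counter via zero-tests, and the short witnesses one eventually extracts from this lemma have length $(|Q|+\norm{\vec{x}}+\norm{\vec{y}}+\norm{\Sigma})^{O(|Q|^3)}$ --- not polynomial in $|Q|+\norm{\Sigma}$ --- so a polynomial band cannot be assumed a priori; assuming it essentially begs the question the lemma is meant to settle. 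Your proposed repair (pumping the positive cycle $\beta$ from \cref{lem:short-runs-with-positive-weight}) does not close this: it presupposes a band-restricted climb from $q(0,x)$ up to the band, which is the very statement in need of proof. Relatedly, your conclusion that every scheme has $*$-length at most $1$ should have been a warning sign: the stated bound is $O(|Q|^2)$ precisely because one segment of the run genuinely crosses from small to large second-counter values, and that crossing is paid for with the full flattenability of $2$-VASS (\cref{thm:2-vass-flat}), which costs $O(|Q|^2)$ cycles.

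The paper's proof avoids any band. It splits the run at \emph{all} configurations with first counter zero, prunes so that these pivots $q_j(0,x_j)$ are pairwise distinct, and distinguishes whether some pivot has $x_j \geq N^h$. If not, there are at most $|Q| \cdot N^h$ pivots, and each pure-addition segment between consecutive pivots is a $2$-VASS run between two \emph{explicitly small} configurations, to which the short-path theorem of Englert et al.\ (\cref{thm:ranko}) applies; concatenating gives a single polynomial-length path, i.e.\ $*$-length $0$. If some pivot is large, the same argument handles the prefix up to the last small pivot before $j_{\mathsf{min}}$, \cref{thm:2-vass-flat} handles the one segment that climbs into the large regime (contributing the $O(|Q|^2)$ cycles), \cref{lem:largelarge} handles the middle between the first and last large pivots with one cycle, and the suffix is symmetric. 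The two external ingredients your argument is missing are \cref{thm:ranko} for the low regime and \cref{thm:2-vass-flat} for the crossing segment; without something equivalent to them, the low-counter cases of your proposal do not go through.
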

\begin{proof}
  Let $h\geq 1$ be the constant of \cref{lem:largelarge}, and let $c\geq 1$ be a constant satisfying $O(1)\leq c$ and $O(|Q|^2)\leq c|Q|^2$ in \cref{lem:largelarge}, \cref{thm:2-vass-flat}, and \cref{thm:ranko}. Let us consider a $2$-TVASS $\vass=(Q,\Sigma,\Delta)$ and let $N=|Q|+\norm{\Sigma}$.

  Observe that a run from a configuration
  $q(0,x)$ to a configuration $q(0,y)$ can be split in such a way:
  $$q(0,x)=q_1(0,x_1) \xrightarrow{A^* \,\cup\, T}q_2(0,x_2) \cdots \xrightarrow{A^* \,\cup\, T}q_k(0,x_k)=q(0,y)$$
  Moreover, by removing some parts of such a run, we can assume
  that the configurations $q_j(0,x_j)$ are pairwise distinct.

  Notice that if $x_j< N^h$ for every $j\in\interval{1}{k}$, then
  $k\leq |Q| \cdot N^h\leq N^{h+1}$. By applying
  \cref{thm:ranko}, we deduce that for every $j \in \interval{2}{k}$, there exists
  a path $\alpha_j$ such that
  $q_{j-1}(0,x_{j-1})\xrightarrow{\alpha_j}q_j(0,x_j)$ with
  $|\alpha_j|\leq (N+2N^h)^c$. In particular $\alpha$ defined as
  $\alpha_2\cdots\alpha_k$ is a path such that
  $q(0,x)\xrightarrow{\alpha}q(0,y)$ with $|\alpha|\leq
  k \cdot (N+2N^h)^c\leq N^e$ for some constant $e$.
  We are
  done with the linear path scheme $L=\alpha$.
  So, we can assume that there exists $j$ such that $x_j\geq
  N^h$. In that case, we introduce $j_\textsf{min}$ and $j_\textsf{max}$
  respectively defined as the minimal and the maximal $j$ satisfying
  this property.

  Let us prove that there exists a linear path scheme
  $L_\textsf{min}$ such that $|L_\textsf{min}|\leq N^e +N^c$ and
  $|L_\textsf{min}|_*\leq c|Q|^2$ and such that
  $q(0,x)\xrightarrow{L_\textsf{min}}q_{j_\textsf{min}}(0,x_{j_\textsf{min}})$.
  Observe that if $j_\textsf{min}=1$, the proof is immediate with
  $L_\textsf{min}$ reduced to the empty path. 
  If $j_\textsf{min}>1$, as $x_{j}<N^h$ for every $1\leq j<j_\textsf{min}$, we
  deduce from the previous paragraph that there exists a path
  $\alpha_\textsf{min}$ with a length bounded by
  $N^e$ such that
  $q(0,x)\xrightarrow{\alpha_\textsf{min}}q_{j_\textsf{min}-1}(0,x_{j_\textsf{min}-1})$. Recall
  that
  $q_{j_\textsf{min}-1}(0,x_{j_\textsf{min}-1})\xrightarrow{A^*\cup T}q_{j_\textsf{min}}(0,x_{j_\textsf{min}})$.
  Based
  on \cref{thm:2-vass-flat}, we deduce that there exists a linear
  path scheme $L_0$ such that
  $q_{j_\textsf{min}-1}(0,x_{j_\textsf{min}-1})\xrightarrow{L_0}q_{j_\textsf{min}}(0,x_{j_\textsf{min}})$
  with $|L_0|\leq N^c$ and $|L_0|_*\leq c|Q|^2$. By
  considering $L_\textsf{min}=\alpha_\textsf{min}L_0$ we are done.
  Symmetrically, there exists a linear path scheme
  $L_\textsf{max}$ such that $|L_\textsf{max}|\leq N^e +N^c$ and
  $|L_\textsf{max}|_*\leq c|Q|^2$ and such that
  $q_{j_\textsf{max}}(0,x_{j_\textsf{max}}) \xrightarrow{L_\textsf{max}}q(0,y)$.

  Note that $q_{j_\textsf{max}} \rightsquigarrow q_k = q_1 \rightsquigarrow q_{j_\textsf{min}}$,
  hence, $q_{j_\textsf{max}} \rightsquigarrow q_{j_\textsf{min}}$.
  By applying \cref{lem:largelarge}, we
  deduce that there exists a linear path scheme $L_1$ with $|L_1|\leq N^c$
  and $|L|_*=1$ such that
  $q_{j_\textsf{min}}(0,x_{j_\textsf{min}})\xrightarrow{L_1}q_{j_\textsf{max}}(0,x_{j_\textsf{max}})$,
  It
  follows that the linear path scheme $L$ defined as
  $L_\textsf{min}L_1L_\textsf{max}$ satisfies the lemma.
\end{proof}

\begin{corollary}\label{cor:lps}
  Every $2$-TVASS is flattenable. Furthermore, for every configurations $p(\vec{x})$ and $q(\vec{y})$ of a $2$-TVASS $\vass=(Q,\Sigma,\Delta)$ such that $p(\vec{x})\xrightarrow{*}q(\vec{y})$, there exists a linear path scheme $L$ with $|L|\leq
  (|Q|+\norm{\Sigma})^{O(1)}$
  and $|L|_*\leq O(|Q|^3)$ such that 
  $p(\vec{x})\xrightarrow{L}q(\vec{y})$.
\end{corollary}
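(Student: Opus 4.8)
The plan is to assemble the building blocks established earlier in the paper. By \cref{lem:vloop-flatness-to-flatness}, the reachability relation satisfies ${\xrightarrow{*}} \subseteq \left( {\xrightarrow{A^*}} \cup {\xrightarrow{T}} \cup {\vloop} \right)^{2|Q|+1}$, so any witness of $p(\vec{x}) \xrightarrow{*} q(\vec{y})$ factors through at most $2|Q|+1$ intermediate configurations, each consecutive pair being related by $\xrightarrow{A^*}$, by $\xrightarrow{T}$, or by $\vloop_{q_i}$ for some state $q_i$. For each such factor I would produce a linear path scheme of small length and small $*$-length between the corresponding endpoints, and then concatenate all of them into a single linear path scheme for the entire run.

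First I would treat the three kinds of factors separately. A factor in $\xrightarrow{A^*}$ is a run using only addition transitions, hence a run of the underlying $2$-VASS obtained by discarding the zero-test transitions; \cref{thm:2-vass-flat} then supplies a linear path scheme $\lps$ with $|\lps| \leq (|Q|+\norm{\Sigma})^{O(1)}$ and $|\lps|_* \leq O(|Q|^2)$ between its endpoints. A factor in $\xrightarrow{T}$ is a single zero-test transition $\delta$, for which the scheme $\lps = \delta$ has $|\lps| = 1$ and $|\lps|_* = 0$. A factor in $\vloop_{q_i}$ relates $q_i(0,x)$ to $q_i(0,y)$, and is covered by \cref{lem:q2q}, which provides a linear path scheme $\lps$ with $|\lps| \leq (|Q|+\norm{\Sigma})^{O(1)}$ and $|\lps|_* \leq O(|Q|^2)$ matching this pair.

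Next I would concatenate the (at most) $2|Q|+1$ resulting schemes. Consecutive factors share the same intermediate configuration, hence the same intermediate state, so the underlying paths compose into a single path of the TVASS, and juxtaposing the regular expressions $\alpha_0 \beta_1^* \cdots \alpha_k$ yields again a linear path scheme whose cycles are the concatenation of the cycles of the factors. The resulting scheme $\lps$ satisfies $p(\vec{x}) \xrightarrow{\lps} q(\vec{y})$. Summing the bounds over the factors gives $|\lps| \leq (2|Q|+1)\,(|Q|+\norm{\Sigma})^{O(1)} = (|Q|+\norm{\Sigma})^{O(1)}$ and $|\lps|_* \leq (2|Q|+1)\,O(|Q|^2) = O(|Q|^3)$, precisely the claimed bounds. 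Flattenability is then immediate: for a fixed $2$-TVASS only finitely many linear path schemes meet the fixed polynomial length bound, so this finite family covers $\xrightarrow{*}$.

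The calculation here is essentially bookkeeping; the substantial work sits upstream in \cref{lem:q2q} (obtained via the WOCA detour) and in \cref{lem:vloop-flatness-to-flatness}. The only points demanding a little care are verifying that an $\xrightarrow{A^*}$ factor is genuinely a $2$-VASS run so that \cref{thm:2-vass-flat} applies as stated, and checking that the intermediate states agree so that the concatenated word is a well-formed path; neither constitutes a real obstacle, so I expect the assembly to go through cleanly.
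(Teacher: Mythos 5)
Your proof is correct and follows exactly the paper's route: the paper's own proof of this corollary is a one-line appeal to \cref{lem:vloop-flatness-to-flatness}, \cref{thm:2-vass-flat}, and \cref{lem:q2q}, and your write-up simply spells out the bookkeeping (decompose into at most $2|Q|+1$ factors, cover each by a small scheme, concatenate, and sum the bounds) that the paper leaves implicit. The bounds you obtain match the statement, so nothing further is needed.
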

\begin{proof}
  The proof is a direct corollary of \cref{lem:vloop-flatness-to-flatness}, \cref{thm:2-vass-flat}, and \cref{lem:q2q}.
\end{proof}

\begin{example}
  As an illustration of \cref{cor:lps},
  let us continue \cref{ex:2TVASS,ex:LPS} and
  provide a finite set $\Lambda$ of ``small'' linear path schemes such that
  $A(\vec{x}) \xrightarrow{*} B(\vec{y})$ if, and only if,
  $A(\vec{x}) \xrightarrow{L} B(\vec{y})$ for some $L \in \Lambda$.
  First, we observe that for every $x, y \in \setN$,
  if $A(0, x) \xrightarrow{*} A(0, y)$ then
  $x = y$ or the following condition is satisfied:
  $$
  x \geq 2 \ \wedge \ y \geq x+2 \ \wedge \ (y = x+3 \Rightarrow x \geq 5) \ \wedge \ (y = x+5 \Rightarrow x \geq 3)
  \:.
  $$
  Second, we introduce the paths
  $\pi = \delta_{AB} \delta_{BB} \delta_{BB} \delta_{BA} \delta_{AA}$
  and $\sigma = \delta_{AB} (\delta_{BB})^5 \delta_{BA} (\delta_{AA})^2$.
  It is routinely checked that
  $A(0, x) \xrightarrow{\pi} A(0, y)$ if, and only if,
  $x \geq 2$ and $y = x + 2$.
  Similarly,
  $A(0, x) \xrightarrow{\sigma} A(0, y)$ if, and only if,
  $x \geq 5$ and $y = x + 3$.
  We derive that
  $A(0, x) \xrightarrow{*} A(0, y)$  if, and only if,
  $A(0, x) \xrightarrow{\pi^* \cdot \{\varepsilon, \sigma\}} A(0, y)$.
  We are now done by taking $\Lambda = \{L_1, L_2\}$ where
  $L_1$ and $L_2$ are the linear path schemes defined by
  $L_1 = (\delta_{AA})^* \cdot \pi^* \cdot \delta_{AB} \cdot (\delta_{BB})^*$ and
  $L_2 = (\delta_{AA})^* \cdot \pi^* \cdot \sigma \delta_{AB} \cdot (\delta_{BB})^*$.
  \qed
\end{example}

\section{Linear Path Schemes to Systems of Equations}
\label{sec:linear-path-scheme-to-equations}
In this section, we associate to a linear path scheme $L=\alpha_0\beta_1^*\alpha_1\cdots \beta_k^*\alpha_k$ of a $d$-TVASS $\vass$ from a state $p$ to a state $q$, and to a vectors $\vec{x},\vec{y}\in\setN^d$, a system of linear inequalities $S_{\vec{x},L,\vec{y}}$ encoding over the variables $(n_1,\ldots,n_k)$ the following constraint:
$$p(\vec{x})\xrightarrow{\alpha_0\beta_1^{n_1}\alpha_1\cdots \beta_k^{n_k}\alpha_k}q(\vec{y})$$
Such a system is classical for $d$-VASS, but for $d$-TVASS, the presence of zero-test transitions in the linear path scheme $L$ requires some additional work.

\smallskip

Let us first characterize the binary relation $\xrightarrow{\pi}$ thanks to a system of linear inequalities associated to a path $\pi$. We introduce the \emph{displacement} $\disp{\delta}$ of a transition $\delta$ as the vector in $\setZ^d$ defined by $\disp{\delta}\eqdef\vec{a}$ if $\delta$ is of the form $(p, \vec{a}, q)$ with $\vec{a} \in \setZ^d$ and $\disp{\delta}\eqdef\vec{0}$ if $\delta$ is of the form $(p, \ztest, q)$. The \emph{displacement} of a path $\pi=\delta_1\ldots\delta_n$ is $\disp{\pi}\eqdef\disp{\delta_1} + \cdots + \disp{\delta_n}$. We also introduce the vector $\vec{m}_\pi\in\setN^d$ defined component-wise for every $i\in\interval{1}{d}$ by $\vec{m}_\pi(i)\eqdef\max_\alpha(-\disp{\alpha}(i))$ where $\alpha$ ranges over the prefixes of $\pi$.

\smallskip

A path $\pi$ from a state $p$ to a state $q$ is said to be \emph{feasible} if $p(\vec{x})\xrightarrow{\pi}q(\vec{y})$ for some $\vec{x},\vec{y}\in\setN^d$.
We introduce the partial order $\geq_1$ defined over $\setN^d$ by $\vec{x}\geq_1\vec{y}$ if $\vec{x}(1)=\vec{y}(1)$ and $\vec{x}(i)\geq \vec{y}(i)$ for every $i\in\interval{2}{d}$.
We let $\succeq_\pi$ denote the partial order over $\setN^d$ defined as follows:
$\succeq_\pi$ is $\geq_1$ if $\pi$ contains a zero-test transition, and $\succeq_\pi$ is $\geq$ otherwise.
\begin{restatable}{lemma}{pathinTVASS}
  \label{lem:path-in-TVASS}
  Let $\pi$ be a feasible path from a state $p$ to a state $q$. For every $\vec{x},\vec{y}\in\setN^d$, we have:
  $$p(\vec{x})\xrightarrow{\pi}q(\vec{y})~~\Longleftrightarrow~~\vec{x}\succeq_\pi\vec{m}_\pi\ \wedge \ \vec{y}=\vec{x}+\disp{\pi}$$
\end{restatable}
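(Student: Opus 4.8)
The goal is to characterize exactly when a feasible path $\pi$ admits a run from $p(\vec{x})$ to $q(\vec{y})$, purely in terms of linear (in)equalities on $\vec{x}$ and $\vec{y}$. The plan is to prove the two implications of the stated equivalence separately, the forward direction being essentially bookkeeping and the backward direction being where the zero-test subtlety lives.

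\smallskip

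For the forward implication, suppose $p(\vec{x})\xrightarrow{\pi}q(\vec{y})$ with $\pi=\delta_1\cdots\delta_n$. Writing $\vec{x}_0=\vec{x}, \vec{x}_1, \ldots, \vec{x}_n=\vec{y}$ for the intermediate counter vectors along the run, I would first observe that $\vec{x}_j = \vec{x} + \disp{\delta_1\cdots\delta_j}$ for every $j$, since every transition (addition or zero-test) shifts the counters by its displacement. Taking $j=n$ yields $\vec{y}=\vec{x}+\disp{\pi}$. The nonnegativity constraints $\vec{x}_j\geq\vec{0}$ give, for each prefix $\alpha=\delta_1\cdots\delta_j$, that $\vec{x}+\disp{\alpha}\geq\vec{0}$, i.e. $\vec{x}\geq-\disp{\alpha}$; maximizing the right-hand side over all prefixes component-wise gives exactly $\vec{x}\geq\vec{m}_\pi$. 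Finally, if $\pi$ contains a zero-test transition, that transition fires at some configuration whose first counter is $0$; since the first component of $\vec{x}_j$ equals $\vec{x}(1)+\disp{\delta_1\cdots\delta_j}(1)$ and the first component of $\vec{m}_\pi$ records the maximal drop of the first counter, the zero-test forces $\vec{x}(1)=\vec{m}_\pi(1)$, upgrading $\geq$ to $\geq_1$ on the first coordinate and yielding $\vec{x}\succeq_\pi\vec{m}_\pi$.

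\smallskip

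For the backward implication, assume $\vec{x}\succeq_\pi\vec{m}_\pi$ and $\vec{y}=\vec{x}+\disp{\pi}$. I would define the candidate intermediate vectors $\vec{x}_j\eqdef\vec{x}+\disp{\delta_1\cdots\delta_j}$ and check that $p(\vec{x})\xrightarrow{\delta_{j}}\cdot$ through each is a valid step. For addition transitions, validity reduces to $\vec{x}_j\geq\vec{0}$, which holds because $\vec{x}\geq\vec{m}_\pi\geq-\disp{\delta_1\cdots\delta_j}$ by definition of $\vec{m}_\pi$. The delicate point, and what I expect to be the main obstacle, is firing the zero-test transitions: each such transition requires the first counter to be exactly $0$ at that moment, not merely nonnegative. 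Here I would crucially exploit the hypothesis that $\pi$ is \emph{feasible}: there already exists some run $p(\vec{x}')\xrightarrow{\pi}q(\vec{y}')$, and along that witnessing run every zero-test fires at first-counter value $0$, which pins down $\disp{\delta_1\cdots\delta_j}(1)=-\vec{x}'(1)$ for each zero-test prefix. Since feasibility forces all zero-tests to occur at the same first-counter displacement (namely $-\vec{m}_\pi(1)$, the common minimal first-coordinate level), and since $\succeq_\pi$ being $\geq_1$ guarantees $\vec{x}(1)=\vec{m}_\pi(1)$, I get $\vec{x}_j(1)=\vec{x}(1)+\disp{\delta_1\cdots\delta_j}(1)=\vec{m}_\pi(1)-\vec{m}_\pi(1)=0$ at every zero-test, so all tests succeed.

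\smallskip

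The essence of the argument is thus that feasibility of $\pi$ and the $\geq_1$ clause in $\succeq_\pi$ jointly synchronize the first counter to hit $0$ at precisely the positions where zero-tests occur, decoupling the first coordinate (rigidly fixed) from the remaining coordinates (only lower-bounded). I would take care to handle the edge cases cleanly: when $\pi$ has no zero-test, $\succeq_\pi$ is plain $\geq$ and the argument collapses to the standard VASS displacement characterization; and I would note that feasibility is genuinely needed, since without it the claimed equivalence can fail (an infeasible path may satisfy the right-hand side vacuously while admitting no run). The remaining verification that the $\vec{x}_j$ form a legitimate run is then routine.
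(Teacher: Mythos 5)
Your proposal is correct and follows essentially the same route as the paper's proof: the forward direction reads the displacement identity and the prefix lower bounds off the run and uses the (first) zero-test position to force $\vec{x}(1)=\vec{m}_\pi(1)$, while the backward direction exploits feasibility to pin the first-coordinate displacement at every zero-test prefix to $-\vec{m}_\pi(1)$ and then verifies the candidate run step by step. The paper merely packages this as two separate claims (zero-test-free versus zero-test-containing paths) and phrases the converse as a projection onto the first counter, where $\vec{m}_\pi(1)$ is the unique feasible initial value; the substance is identical.
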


Let us recall that in \cref{sec:flattenability-of-2-tvass} we introduce the $d$-TVASS $\overline{\vass}$ obtained from $\vass$ by
reversing the effect of each transition, i.e.,
$\overline{\Delta} = \{\overline{\delta} \mid \delta\in\Delta\}$ where
$\overline{(p,\vec{a},q)}=(q,-\vec{a},p)$ and $\overline{(p,T,q)}=(q,T,p)$. Given a path $\pi=\delta_1\cdots\delta_n$ from $p$ to $q$ in $\vass$, we introduce the path $\overline{\pi}$ from $q$ to $p$ in $\overline{\vass}$ defined as $\overline{\pi}\eqdef \overline{\delta_n}\cdots\overline{\delta_1}$. Observe that $p(\vec{x})\xrightarrow{\pi}q(\vec{y})$ if, and only if, $q(\vec{y})\xrightarrow{\overline{\pi}}p(\vec{x})$.
\begin{lemma}\label{lem:alphaalpha}
  We have $\vec{m}_{\overline{\pi}}=\vec{m}_\pi+\disp{\pi}$.
\end{lemma}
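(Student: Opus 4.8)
The plan is to prove the identity component by component, by unwinding the definition of $\vec{m}$ as a maximum over prefixes and transporting it across the reversal operation. The key elementary observation I would record first is that reversal negates displacements at the level of single transitions: for every $\delta \in \Delta$ we have $\disp{\overline{\delta}} = -\disp{\delta}$, which holds for addition transitions by definition of $\overline{(p,\vec{a},q)} = (q,-\vec{a},p)$ and holds trivially for zero-test transitions since both sides are $\vec{0}$.

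Next I would set up the correspondence between prefixes of $\overline{\pi}$ and suffixes of $\pi$. Writing $\pi = \delta_1 \cdots \delta_n$, the prefix of $\overline{\pi} = \overline{\delta_n} \cdots \overline{\delta_1}$ of length $k$ is exactly $\overline{\delta_n} \cdots \overline{\delta_{n-k+1}}$, whose displacement is $-(\disp{\delta_{n-k+1}} + \cdots + \disp{\delta_n}) = -\disp{\sigma_k}$, where $\sigma_k \eqdef \delta_{n-k+1} \cdots \delta_n$ denotes the suffix of $\pi$ of length $k$. Plugging this into the definition of $\vec{m}_{\overline{\pi}}$, and noting that $k$ ranges over $\interval{0}{n}$, I obtain for every $i \in \interval{1}{d}$ that
$$
\vec{m}_{\overline{\pi}}(i) \ = \ \max_{0 \leq k \leq n} \bigl(-\disp{\overline{\delta_n} \cdots \overline{\delta_{n-k+1}}}(i)\bigr) \ = \ \max_{0 \leq k \leq n} \disp{\sigma_k}(i)\:.
$$

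It then remains to re-express this maximum over suffixes as a maximum over prefixes. For each $k$, the suffix $\sigma_k$ has complementary prefix $\alpha$ of length $n-k$, and the additivity of displacement along a factorization gives $\disp{\pi} = \disp{\alpha} + \disp{\sigma_k}$, hence $\disp{\sigma_k}(i) = \disp{\pi}(i) - \disp{\alpha}(i)$. As $k$ ranges over $\interval{0}{n}$, the prefix $\alpha$ ranges over all prefixes of $\pi$, so I can pull the constant $\disp{\pi}(i)$ out of the maximum:
$$
\vec{m}_{\overline{\pi}}(i) \ = \ \max_{\alpha} \bigl(\disp{\pi}(i) - \disp{\alpha}(i)\bigr) \ = \ \disp{\pi}(i) + \max_{\alpha}\bigl(-\disp{\alpha}(i)\bigr) \ = \ \disp{\pi}(i) + \vec{m}_\pi(i)\:.
$$
Since this holds for every component $i$, it establishes $\vec{m}_{\overline{\pi}} = \vec{m}_\pi + \disp{\pi}$, as desired.

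I do not anticipate a genuine obstacle here: the statement reduces to careful bookkeeping of indices and signs. The only point requiring attention is the prefix-to-suffix reindexing under reversal, where one must make sure that prefixes of $\overline{\pi}$ correspond to suffixes of $\pi$ (and not prefixes), and that the complementation $\disp{\sigma} = \disp{\pi} - \disp{\alpha}$ is applied to the matching prefix-suffix pair so that the maximum genuinely ranges over all prefixes of $\pi$.
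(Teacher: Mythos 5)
Your proof is correct and follows essentially the same route as the paper's: both arguments rest on the additivity $\disp{\pi} = \disp{\alpha} + \disp{\alpha'}$ for a prefix--suffix factorization, the identity $\disp{\overline{\alpha'}} = -\disp{\alpha'}$, and the bijection between suffixes of $\pi$ and prefixes of $\overline{\pi}$. Your version merely makes the index bookkeeping more explicit; no further changes are needed.
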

\begin{proof}
  Observe that for any decomposition of $\pi$ into $\alpha\alpha'$, we have $\disp{\pi}=\disp{\alpha}+\disp{\alpha'}$. Hence $-\disp{\alpha}+\disp{\pi}=\disp{\alpha'}=-\disp{\overline{\alpha}'}$. In particular $\max_\alpha(-\disp{\alpha}(i)+\disp{\pi}(i))=\max_{\alpha'}(-\disp{\overline{\alpha}'}(i))$ for every $i\in\interval{1}{d}$ where $\alpha$ ranges over the prefixes of $\pi$ and $\alpha'$ over the suffixes of $\pi$. By observing that $\overline{\alpha}'$ ranges over all the prefixes of $\overline{\pi}$ when $\alpha'$ ranges over the suffixes of $\pi$, we get $\vec{m}_\pi+\disp{\pi}=\vec{m}_{\overline{\pi}}$.
\end{proof}

We are now ready to express the relation $\xrightarrow{\beta^n}$ where $\beta$ is a cycle on a state $q$ and $n\geq 1$ is a positive natural number.
\begin{restatable}{lemma}{cycleinTVASS}
  \label{lem:cycle-in-TVASS}
  Let $\beta$ be a feasible cycle on a state $q$. For every $\vec{x},\vec{y}\in\setN^d$ and $n\in\setN\setminus\{0\}$, we have:
  $$q(\vec{x})\xrightarrow{\beta^n}q(\vec{y}) ~~\Longleftrightarrow~~\vec{x}\succeq_\pi\vec{m}_\beta \ \wedge \ \vec{y}\succeq_\pi \vec{m}_{\overline{\beta}} \ \wedge \ \vec{y}=\vec{x}+n\disp{\beta}$$
\end{restatable}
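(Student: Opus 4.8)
The plan is to prove the equivalence by induction on $n \geq 1$, by composing the relation $\xrightarrow{\beta}$ with $\xrightarrow{\beta^n}$ and relying on \cref{lem:path-in-TVASS} and \cref{lem:alphaalpha}. Throughout I will use two simple facts. First, the order $\succeq_\beta$ is \emph{translation-invariant}: $\vec{u} \succeq_\beta \vec{v}$ if, and only if, $\vec{u} + \vec{w} \succeq_\beta \vec{v} + \vec{w}$, which holds both when $\succeq_\beta$ is $\geq$ and when it is $\geq_1$ (adding a constant preserves the coordinatewise inequalities and the equality on the first coordinate). Second, $\succeq_{\beta^n} = \succeq_\beta$, since for $n \geq 1$ the word $\beta^n$ contains a zero-test exactly when $\beta$ does. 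Note that I deliberately avoid applying \cref{lem:path-in-TVASS} directly to $\beta^n$, since $\beta^n$ need not be feasible even though $\beta$ is; the induction sidesteps this issue.

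For the base case $n = 1$, \cref{lem:path-in-TVASS} (applicable since $\beta$ is feasible) already gives $q(\vec{x}) \xrightarrow{\beta} q(\vec{y})$ if, and only if, $\vec{x} \succeq_\beta \vec{m}_\beta$ and $\vec{y} = \vec{x} + \disp{\beta}$. It therefore suffices to check that the extra conjunct $\vec{y} \succeq_\beta \vec{m}_{\overline{\beta}}$ is redundant: assuming $\vec{x} \succeq_\beta \vec{m}_\beta$ and $\vec{y} = \vec{x} + \disp{\beta}$, translation-invariance together with \cref{lem:alphaalpha} yields $\vec{y} = \vec{x} + \disp{\beta} \succeq_\beta \vec{m}_\beta + \disp{\beta} = \vec{m}_{\overline{\beta}}$.

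For the inductive step I would write $\beta^{n+1} = \beta \cdot \beta^n$ and split any run as $q(\vec{x}) \xrightarrow{\beta} q(\vec{z}) \xrightarrow{\beta^n} q(\vec{y})$, where the base case forces $\vec{z} = \vec{x} + \disp{\beta}$. Applying the base case and the induction hypothesis, $q(\vec{x}) \xrightarrow{\beta^{n+1}} q(\vec{y})$ becomes equivalent to the conjunction of $\vec{x} \succeq_\beta \vec{m}_\beta$, $\vec{z} \succeq_\beta \vec{m}_{\overline{\beta}}$, $\vec{z} \succeq_\beta \vec{m}_\beta$, $\vec{y} \succeq_\beta \vec{m}_{\overline{\beta}}$ and $\vec{y} = \vec{x} + (n+1)\disp{\beta}$, with $\vec{z} = \vec{x} + \disp{\beta}$. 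The forward implication is then immediate (the two intermediate conditions on $\vec{z}$ are simply dropped), so the whole content of the step is to reconstruct a run, i.e. to show that the three target conditions $\vec{x} \succeq_\beta \vec{m}_\beta$, $\vec{y} \succeq_\beta \vec{m}_{\overline{\beta}}$ and $\vec{y} = \vec{x} + (n+1)\disp{\beta}$ already entail the two conditions on $\vec{z}$. The first of these, $\vec{z} \succeq_\beta \vec{m}_{\overline{\beta}}$, is immediate by translation-invariance, being equivalent to $\vec{x} \succeq_\beta \vec{m}_\beta$.

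The condition $\vec{z} \succeq_\beta \vec{m}_\beta$, that is $\vec{x} + \disp{\beta} \succeq_\beta \vec{m}_\beta$, is the crux and I expect it to be the main obstacle. On each coordinate $i$ governed by $\geq$, I would run a sign analysis: from $\vec{x}(i) \geq \vec{m}_\beta(i)$ and from $\vec{x}(i) + n\,\disp{\beta}(i) \geq \vec{m}_\beta(i)$ (the latter obtained by rewriting $\vec{y} \succeq_\beta \vec{m}_{\overline{\beta}}$ via \cref{lem:alphaalpha}), one derives $\vec{x}(i) + \disp{\beta}(i) \geq \vec{m}_\beta(i)$ by distinguishing $\disp{\beta}(i) \geq 0$ (use the first inequality) from $\disp{\beta}(i) < 0$ (use the second, since $\disp{\beta}(i) \geq n\,\disp{\beta}(i)$ when $n \geq 1$). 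On the first coordinate in the zero-test case, where $\succeq_\beta$ is $\geq_1$ and demands equality, the two boundary equalities $\vec{x}(1) = \vec{m}_\beta(1)$ and $\vec{x}(1) + (n+1)\disp{\beta}(1) = \vec{m}_\beta(1) + \disp{\beta}(1)$ force $n\,\disp{\beta}(1) = 0$, hence $\disp{\beta}(1) = 0$, so that $\vec{z}(1) = \vec{x}(1) = \vec{m}_\beta(1)$. This is exactly the point where the zero-test forbids iterating a nonzero first-coordinate displacement, and it is the only genuinely delicate part of the argument.
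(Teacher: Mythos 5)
Your proof is correct, and it rests on exactly the same two facts as the paper's: the identity $\vec{m}_{\overline{\beta}}=\vec{m}_\beta+\disp{\beta}$ (\cref{lem:alphaalpha}) used in a per-coordinate sign analysis (bound $\vec{x}(i)+r\disp{\beta}(i)$ from below by $\vec{x}(i)$ when $\disp{\beta}(i)\geq 0$ and by $\vec{y}(i)$ otherwise), and the rigidity of the tested counter, where the two equality constraints force $\disp{\beta}(1)=0$ as soon as the cycle is iterated more than once. The organization differs, though. The paper splits the statement into two separate claims according to whether $\beta$ contains a zero-test, proves the untested case directly by showing in one pass that every intermediate configuration $\vec{x}+r\disp{\beta}$ with $r\in\interval{0}{n-1}$ dominates $\vec{m}_\beta$, and then handles the tested case by projecting away the first counter and reducing to dimension one; you instead run a single induction on $n$ and treat both orders uniformly through the abstract relation $\succeq_\beta$ and its translation invariance. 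Your version avoids the dimension-reduction detour and makes the redundancy of the conjunct $\vec{y}\succeq_\beta\vec{m}_{\overline{\beta}}$ for $n=1$ explicit, at the cost of re-deriving the intermediate conditions at each inductive step; the paper's version exhibits the whole family of valid intermediate configurations at once, which is arguably more transparent about why the cycle can be iterated. One small point you gloss over: before applying \cref{lem:path-in-TVASS} to the intermediate vector $\vec{z}=\vec{x}+\disp{\beta}$ you should note that $\vec{z}\in\setN^d$, which does follow from $\vec{z}\succeq_\beta\vec{m}_\beta$ and $\vec{m}_\beta\in\setN^d$, so this is not a gap.
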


A linear path scheme $L=\alpha_0\beta_1^*\alpha_1\cdots \beta_k^*\alpha_k$ is said to be \emph{feasible} if the paths $\alpha_0,\ldots,\alpha_k$ and the cycles $\beta_1,\ldots,\beta_k$ are feasible. We are now ready to introduce a system of linear inequalities $S_{\vec{x},L,\vec{y}}$ over the variables $(n_1,\ldots,n_k)$ where $\vec{x},\vec{y}\in \setN^d$, and $n_1,\ldots,n_k$ are variables ranging over $\setN$ as follows:
$$\bigwedge_{j=0}^k \vec{x}_j\succeq_{\alpha_j} \vec{m}_{\alpha_j}
  \quad\wedge\quad
  \bigwedge_{j=1}^k \vec{y}_{j-1}\succeq_{\beta_j} \vec{m}_{\beta_j}\wedge \vec{x}_j\succeq_{\beta_j} \vec{m}_{\overline{\beta_j}}
  \quad\wedge\quad  \vec{y}=\vec{y}_k$$
  where
  $\vec{x}_0$ is the expression $\vec{x}$, and by induction over $j$, by letting $\vec{y}_j$ be the expression $\vec{x}_j+\disp{\alpha_j}$ for every $j \in \interval{0}{k}$, and $\vec{x}_{j}$ is the expression $\vec{y}_{j-1}+n_j\disp{\beta_j}$ for every $j \in \interval{1}{k}$. From \cref{lem:path-in-TVASS,lem:cycle-in-TVASS}, we derive the following corollary.
\begin{corollary}\label{cor:system}
  Assume that $L=\alpha_0\beta_1^*\alpha_1\cdots \beta_k^*\alpha_k$ is a feasible linear path scheme from a state $p$ to a state $q$, and let $\vec{x},\vec{y}\in\setN^d$. If $(n_1,\ldots,n_k)$ is a solution of $S_{\vec{x},L,\vec{y}}$ then
  $$p(\vec{x})\xrightarrow{\alpha_0\beta_1^{n_1}\alpha_1\cdots\beta_k^{n_k}\alpha_k}q(\vec{y})$$
  Conversely, a tuple $(n_1,\ldots,n_k)$ with $n_1,\ldots,n_k\geq 1$ that satisfies the previous relation is a solution of $S_{\vec{x},L,\vec{y}}$.
\end{corollary}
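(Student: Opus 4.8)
The plan is to prove both directions by slicing the trace $\alpha_0\beta_1^{n_1}\alpha_1\cdots\beta_k^{n_k}\alpha_k$ into its $2k+1$ blocks and matching the conjuncts of $S_{\vec{x},L,\vec{y}}$ against the right-hand sides of \cref{lem:path-in-TVASS,lem:cycle-in-TVASS}, chaining the per-block steps by relational composition. Write $p = r_0, r_1, \ldots, r_k, r_{k+1} = q$ for the states of the underlying path, so that each $\beta_j$ is a cycle on $r_j$ and each $\alpha_j$ is a path from $r_j$ to $r_{j+1}$. For fixed values $n_1,\ldots,n_k$, the counter contents visited along the trace are forced, block by block, to be exactly the vectors named in the definition of $S_{\vec{x},L,\vec{y}}$: the run must factor as
\[
r_0(\vec{x}_0)\xrightarrow{\alpha_0}r_1(\vec{y}_0)\xrightarrow{\beta_1^{n_1}}r_1(\vec{x}_1)\xrightarrow{\alpha_1}\cdots\xrightarrow{\beta_k^{n_k}}r_k(\vec{x}_k)\xrightarrow{\alpha_k}r_{k+1}(\vec{y}_k),
\]
with $\vec{y}_j=\vec{x}_j+\disp{\alpha_j}$ and $\vec{x}_j=\vec{y}_{j-1}+n_j\disp{\beta_j}$ as prescribed.

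For the forward implication I would induct along the blocks, maintaining that each visited vector lies in $\setN^d$ and that the matching sub-run exists. Starting from $\vec{x}_0=\vec{x}\in\setN^d$, the conjunct $\vec{x}_j\succeq_{\alpha_j}\vec{m}_{\alpha_j}$ together with the definitional equality $\vec{y}_j=\vec{x}_j+\disp{\alpha_j}$ is exactly the right-hand side of \cref{lem:path-in-TVASS}, which delivers $r_j(\vec{x}_j)\xrightarrow{\alpha_j}r_{j+1}(\vec{y}_j)$; membership $\vec{y}_j\in\setN^d$ is automatic because $\vec{m}_{\alpha_j}\geq\vec{0}$ and the full prefix $\alpha_j$ occurs in the maximum defining $\vec{m}_{\alpha_j}$, so $\vec{m}_{\alpha_j}+\disp{\alpha_j}\geq\vec{0}$. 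For a block $\beta_j^{n_j}$ with $n_j\geq 1$, the two conjuncts $\vec{y}_{j-1}\succeq_{\beta_j}\vec{m}_{\beta_j}$ and $\vec{x}_j\succeq_{\beta_j}\vec{m}_{\overline{\beta_j}}$ are precisely the hypotheses of \cref{lem:cycle-in-TVASS}, which yields $r_j(\vec{y}_{j-1})\xrightarrow{\beta_j^{n_j}}r_j(\vec{x}_j)$ and, again, nonnegativity of the endpoint. Composing all $2k+1$ steps and using $\vec{y}=\vec{y}_k$ produces the required run.

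The converse is symmetric: assuming all $n_j\geq 1$ and that the run exists, its intermediate configurations are uniquely determined and coincide with the $\vec{y}_{j-1},\vec{x}_j,\vec{y}_j$ above, whence $\vec{y}=\vec{y}_k$. Applying the $\Rightarrow$ direction of \cref{lem:path-in-TVASS} to each $r_j(\vec{x}_j)\xrightarrow{\alpha_j}r_{j+1}(\vec{y}_j)$ gives $\vec{x}_j\succeq_{\alpha_j}\vec{m}_{\alpha_j}$, and the $\Rightarrow$ direction of \cref{lem:cycle-in-TVASS} applied to each $r_j(\vec{y}_{j-1})\xrightarrow{\beta_j^{n_j}}r_j(\vec{x}_j)$ --- legitimate precisely because $n_j\geq 1$ --- gives the two cycle inequalities. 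These exhaust the conjuncts of $S_{\vec{x},L,\vec{y}}$, so the tuple is a solution.

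I expect the only real subtlety --- and the reason the converse is restricted to tuples with $n_j\geq 1$ --- to be the asymmetry at $n_j=0$: a zero exponent erases the block $\beta_j$ from the run, so the cycle inequalities can no longer be read back off it, even though the system still imposes them; this is exactly where the hypothesis $n\geq 1$ of \cref{lem:cycle-in-TVASS} is essential. For the same reason it must be handled separately in the forward direction, where $\beta_j^0=\varepsilon$ forces $\vec{x}_j=\vec{y}_{j-1}$ and the block is realized by the trivial empty step, leaving the (merely assumed) cycle conjuncts unused. Everything else is routine bookkeeping of the composition of the per-block steps.
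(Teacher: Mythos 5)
Your proof is correct and follows exactly the route the paper intends: the paper derives \cref{cor:system} directly from \cref{lem:path-in-TVASS,lem:cycle-in-TVASS} without spelling out the block-by-block composition, and your argument supplies precisely that bookkeeping, including the two points worth making explicit (nonnegativity of the intermediate vectors $\vec{x}_j,\vec{y}_j$, and the special treatment of $n_j=0$ versus the $n_j\geq 1$ hypothesis in the converse).
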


\begin{remark}
  We can easily extend the definition of $S_{\vec{x},L,\vec{y}}$ to encode linear path schemes of extended $d$-TVASS with zero-test actions on any counter.
\end{remark}

\section{Complexity Results}
\label{sec:complexity-results}
This section utilizes the results of
\cref{sec:succint-flattenability-of-2-tvass,sec:linear-path-scheme-to-equations}
to characterize the complexity of the reachability problem in $2$-TVASS.
We assume that $2$-TVASS and their configurations are encoded in binary, and that
sizes are defined as expected (up to a polynomial).
Under this binary encoding, the reachability problem in $2$-TVASS is shown to be PSPACE-complete.
Since the reachability problem for $2$-VASS (i.e., without zero-test transitions)
is already PSPACE-hard~\cite{DBLP:journals/iandc/FearnleyJ15},
we only need to prove PSPACE-membership.

\smallskip

The PSPACE complexity upper-bound is obtained via small solutions of the system of inequalities $S_{\vec{x},L,\vec{y}}$ associated to a linear path scheme $L$ and a pair of vectors $\vec{x}, \vec{y} \in \setN^2$.
We first recall some results about small solution of systems of equations.
\begin{theorem}[\cite{pottier91}]\label{thm:pottier}
  Let $M=(M_{i,j})$ be a matrix in $\setZ^{e\times k}$. Every solution $\vec{x}\in\setN^k$ of $M\vec{x}=\vec{0}$ is a finite sum of solutions $\vec{y}\in\setN^k$ satisfying additionally $\sum_{i=1}^k\vec{y}(i)\leq (1+m)^k$ where $m=\max_i\sum_{j=1}^k|M_{i,j}|$.
\end{theorem}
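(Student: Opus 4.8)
The plan is to prove this statement in three stages: first recast it as a finiteness-and-decomposition fact about the monoid of solutions, then establish the decomposition, and finally bound the building blocks.

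\textbf{Decomposition.} Let $S \eqdef \{\vec{x} \in \setN^k \mid M\vec{x} = \vec{0}\}$. Since $M(\vec{x}+\vec{x}') = M\vec{x} + M\vec{x}'$, the set $S$ is closed under addition and contains $\vec{0}$, so it is a submonoid of $(\setN^k, +)$. I would order $\setN^k$ by the componentwise order $\le$ and call a solution \emph{minimal} if it is a $\le$-minimal element of $S \setminus \{\vec{0}\}$. First I would show that every solution is a finite sum of minimal solutions, by induction on its $\ell_1$-norm $\sum_i \vec{x}(i)$: if $\vec{x} \neq \vec{0}$, pick a minimal $\vec{y} \le \vec{x}$ (one exists because $\le$ is well-founded on $\setN^k$); then $\vec{x} - \vec{y} \in S$, as $M(\vec{x}-\vec{y}) = \vec{0}$ and $\vec{x}-\vec{y} \ge \vec{0}$, and it has strictly smaller $\ell_1$-norm, so the induction hypothesis applies and $\vec{x} = \vec{y} + (\vec{x}-\vec{y})$ is the desired decomposition. (Dickson's lemma moreover ensures that only finitely many minimal solutions exist, although the statement only needs each summand to be small.)

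\textbf{The size bound via a monotone path.} It then remains to bound $\sum_i \vec{y}(i)$ for a minimal solution $\vec{y}$ by $(1+m)^k$. I would first record the elementary estimate $|M_{i,j}| \le m$ for every entry, which is immediate from $m = \max_i \sum_j |M_{i,j}|$. The key idea is to walk from $\vec{0}$ to $\vec{y}$ along a monotone lattice path $\vec{0} = \vec{p}_0 < \vec{p}_1 < \cdots < \vec{p}_N = \vec{y}$ of unit steps, where $N \eqdef \sum_i \vec{y}(i)$, and to track the partial images $g_t \eqdef M\vec{p}_t \in \setZ^e$. Minimality forces these images to be essentially distinct: if $g_s = g_t$ for some $s < t$, then $\vec{p}_t - \vec{p}_s$ is a nonzero element of $S$ with $\vec{p}_t - \vec{p}_s \le \vec{y}$, so by minimality $\vec{p}_t - \vec{p}_s = \vec{y}$, which forces $(s,t) = (0,N)$. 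Hence $g_0, \ldots, g_{N-1}$ are pairwise distinct, and so $N$ is bounded by the number of distinct values taken by the partial images. The plan is then to confine these images to a region of cardinality at most $(1+m)^k$ and to conclude $N \le (1+m)^k$ by the pigeonhole principle.

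\textbf{Confinement, the crux.} The partial images are \emph{not} confined for an arbitrary path, so the heart of the argument is to choose the order of the unit increments carefully. Writing $\vec{y}$ as $\sum_j \vec{y}(j)$ copies of the columns $M_{\cdot,j}$, these vectors sum to $\vec{0}$ (because $M\vec{y} = \vec{0}$) and each has entries bounded by $m$ in absolute value; a Steinitz-type reordering lemma then lets me enumerate them so that every partial sum $g_t$ stays within a bounded box. The hard part will be the quantitative bookkeeping: turning the distinctness of the $g_t$ together with their confinement into the precise cardinality bound $(1+m)^k$, rather than a cruder exponential estimate (and, if needed, first passing to a full-row-rank subsystem with the same solution set so that the ambient dimension is controlled by $k$). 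I expect this confinement-and-counting step to be the main obstacle, whereas the monoid structure and the decomposition are routine. Applying the Stage-three bound to every summand produced in the first stage then completes the proof.
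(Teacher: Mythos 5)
The paper does not prove this statement: it is imported verbatim from Pottier's 1991 paper, so there is no in-paper argument to compare yours against, and your proposal has to stand on its own. Its skeleton is right and correctly executed as far as it goes: the decomposition of the solution monoid into minimal solutions is standard and your induction on the $\ell_1$-norm is fine; the observation that along any monotone unit-step path $\vec{0}=\vec{p}_0<\cdots<\vec{p}_N=\vec{y}$ to a minimal solution the partial images $g_t=M\vec{p}_t$ for $t\in\interval{0}{N-1}$ are pairwise distinct is correct (a coincidence $g_s=g_t$ with $s<t$ produces a nonzero element of $S$ below $\vec{y}$, forcing $(s,t)=(0,N)$); and the reduction to a full-row-rank subsystem is sound, since dropping dependent rows preserves the solution set, cannot increase $m$, and brings the number of rows down to $\operatorname{rank}(M)\leq k$.

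The gap is exactly where you flag it, and it is not mere bookkeeping: to conclude $N\leq(1+m)^{\operatorname{rank}(M)}\leq(1+m)^k$ by pigeonhole you must order the $N$ column-increments (which sum to $\vec{0}$, with every entry bounded by $m$ in absolute value) so that the partial sums are \emph{simultaneously} confined, in every row, to an interval containing at most $1+m$ integers. The Steinitz-type rearrangement you invoke does not deliver this: the known general bounds confine the partial sums only to a box of side proportional to $e\cdot m$ (the Steinitz constant grows with the dimension), which after pigeonhole yields a bound of the shape $(c\,e\,m+1)^{e}$ --- strictly weaker than $(1+m)^k$, and not the inequality you set out to prove. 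The simultaneous per-row confinement to intervals of length $\sum_j|M_{i,j}|$ is the actual content of Pottier's theorem and requires an argument tailored to this situation rather than a generic rearrangement lemma; your ``quantitative bookkeeping'' remark is therefore hiding the entire difficulty. To be fair, every use of the theorem in this paper (\cref{cor:pottier} and \cref{lem:shortpath}) would tolerate the cruder bound your method does give, since there $k=O(|Q|^3)$ and polynomial factors in the base are absorbed into the $O(|Q|^3)$ exponent; but as a proof of the stated bound your argument is incomplete at its crux.
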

We also recall the classical application of the previous theorem to systems of inequalities with constant terms (the vector $\vec{b}$ in the following corollary).
\begin{corollary}\label{cor:pottier}
  Let $M=(M_{i,j})$ be a matrix in $\setZ^{e\times k}$ and let $\vec{b}\in\setZ^e$. If there exists a solution $\vec{x}\in\setN^k$ of $M\vec{x}\geq \vec{b}$ then there exists a solution $\vec{y}\in\setN^k$ such that $\sum_{i=1}^k\vec{y}(i)\leq (2+m)^{k+1+e}$ where $m=\max_i\sum_{j=1}^k|M_{i,j}|+|\vec{b}(i)|$.
\end{corollary}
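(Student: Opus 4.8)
The plan is to reduce the inhomogeneous system of inequalities to a homogeneous system of equalities, so that \cref{thm:pottier} applies directly. First I would introduce $e$ fresh slack variables forming a vector $\vec{s}$ ranging over $\setN^e$, together with a single homogenizing variable $t$ ranging over $\setN$, and consider the homogeneous equality system $M\vec{x} - \vec{s} - t\,\vec{b} = \vec{0}$ over the $k + e + 1$ nonnegative unknowns collected in $\vec{x}$, $\vec{s}$ and $t$. Its matrix is $M' = [\,M \mid -I_e \mid -\vec{b}\,]$, a matrix in $\setZ^{e \times (k + e + 1)}$. The point of this encoding is that its nonnegative solutions with $t = 1$ are in exact correspondence with the solutions of $M\vec{x} \geq \vec{b}$: on the one hand, if $M\vec{x} \geq \vec{b}$ with $\vec{x} \in \setN^k$, then $(\vec{x}, M\vec{x} - \vec{b}, 1)$ is a nonnegative solution of the homogeneous system; on the other hand, any nonnegative solution with $t = 1$ satisfies $M\vec{x} = \vec{s} + \vec{b}$, hence $M\vec{x} \geq \vec{b}$ because $\vec{s} \geq \vec{0}$.

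Next I would apply \cref{thm:pottier} to the solution $(\vec{x}, M\vec{x} - \vec{b}, 1)$, expressing it as a finite sum of solutions of the homogeneous system, each of which has the sum of its components bounded by $(1 + m')^{k'}$, where $k' = k + e + 1$ is the number of unknowns and $m'$ is the maximal absolute row sum of $M'$. A short computation shows $m' = 1 + m$: indeed, row $i$ of $M'$ contributes $\sum_{j=1}^k |M_{i,j}|$ from the block $M$, exactly $1$ from the block $-I_e$, and $|\vec{b}(i)|$ from the block $-\vec{b}$, so the maximal row sum equals $1 + \max_i\bigl(\sum_{j=1}^k |M_{i,j}| + |\vec{b}(i)|\bigr) = 1 + m$. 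Since the $t$-components of the summands are nonnegative integers that add up to $1$, exactly one summand has $t = 1$ and all others have $t = 0$. Projecting that distinguished summand onto its first $k$ coordinates yields a vector $\vec{y} \in \setN^k$ with $M\vec{y} \geq \vec{b}$, and the sum of its first $k$ components is at most the sum of all its components, which is bounded by $(1 + m')^{k'} = (2 + m)^{k + 1 + e}$. This is exactly the claimed bound.

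The hard part will be nothing more than the bookkeeping of the two parameters: one must check that homogenizing inflates neither the number of unknowns beyond $k + e + 1$ nor the row-sum parameter beyond $1 + m$, which is precisely what keeps the resulting bound at $(2 + m)^{k+1+e}$ rather than something larger. Everything else is routine, namely the correspondence between the two systems, the nonnegativity of the slack and homogenizing variables, and the observation that a sum of nonnegative integers equal to $1$ has exactly one summand equal to $1$.
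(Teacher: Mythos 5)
Your proof is correct and follows essentially the same route as the paper's: homogenize with slack variables and a single variable $t$ multiplying $\vec{b}$, apply \cref{thm:pottier} to the enlarged system, and observe that exactly one summand in the decomposition carries $t=1$. The parameter bookkeeping ($k+e+1$ unknowns, row sums at most $1+m$) matches the paper's computation exactly.
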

\begin{proof}
  Let $\vec{y}$ be the vector in $\setN^e$ defined as $\vec{y}\eqdef M\vec{x}-\vec{b}$ and observe that $(\vec{x},1,\vec{y})$ is a solution of $M\vec{x}-t\vec{b}-\vec{y}=\vec{0}$. From \cref{thm:pottier} we derive that $(\vec{x},1,\vec{y})$ can be decomposed as a finite sum of ``small solutions'' $(\vec{u},s,\vec{v})$ with $\sum_{j=1}^k\vec{u}(j)+s+\sum_{i=1}^e\vec{v}(i)\leq (2+m)^{k+1+e}$. Since the sum of those small solutions is $1$ on the ``$s$'' component, exactly one of them is $1$ on that component. This solution $(\vec{u},1,\vec{v})$ provides a vector $\vec{u}$ with $\sum_{j=1}^k\vec{u}(j)\leq (2+m)^{k+1+e}$ such that $M\vec{u}\geq \vec{b}$.
\end{proof}

We deduce a bound on minimal runs between two configurations.
\begin{lemma}\label{lem:shortpath}
 For every configurations $p(\vec{x})$ and $q(\vec{y})$ of a $2$-TVASS $\vass = (Q, \Sigma, \Delta)$ such that
 $p(\vec{x})\xrightarrow{*}q(\vec{y})$, there exists a path $\pi$ such that
 $p(\vec{x})\xrightarrow{\pi}q(\vec{y})$ and such that:
  $$|\pi|\leq (|Q|+\norm{\vec{x}}+\norm{\vec{y}}+\norm{\Sigma})^{O(|Q|^3)}$$
\end{lemma}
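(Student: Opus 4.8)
The plan is to chain together the three main results established so far: the succinct flattenability of \cref{cor:lps}, the translation of linear path schemes into linear systems of \cref{cor:system}, and the small-solution theorem of \cref{cor:pottier}. Concretely, I would first obtain a short linear path scheme $L$ witnessing $p(\vec{x}) \xrightarrow{*} q(\vec{y})$, then read off a short path from a small nonnegative-integer solution of the associated system $S_{\vec{x}, L, \vec{y}}$. Throughout, write $N \eqdef |Q| + \norm{\vec{x}} + \norm{\vec{y}} + \norm{\Sigma}$ for the quantity appearing in the bound.

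First I would apply \cref{cor:lps} to get a linear path scheme $L = \alpha_0 \beta_1^* \alpha_1 \cdots \beta_k^* \alpha_k$ with $p(\vec{x}) \xrightarrow{L} q(\vec{y})$, $|L| \leq (|Q| + \norm{\Sigma})^{O(1)}$, and $k = |L|_* \leq O(|Q|^3)$. To feed this into \cref{cor:system}, I must ensure $L$ is feasible and that the reachability is witnessed by a genuine solution of $S_{\vec{x}, L, \vec{y}}$. Fix an instantiation $\alpha_0 \beta_1^{n_1} \alpha_1 \cdots \beta_k^{n_k} \alpha_k$ realizing $p(\vec{x}) \xrightarrow{} q(\vec{y})$ and delete every cycle $\beta_j$ with $n_j = 0$ (merging the surrounding $\alpha$'s). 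In the resulting scheme all $\alpha_j$ and $\beta_j$ occur in a valid run, hence are feasible, and all remaining exponents are $\geq 1$; by the converse direction of \cref{cor:system} this tuple is a solution of $S_{\vec{x}, L, \vec{y}}$, so the system is solvable. Deleting cycles only decreases $|L|$ and $k$, so the bounds are preserved.

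Next I would bound the parameters of $S_{\vec{x}, L, \vec{y}}$ so as to apply \cref{cor:pottier}. Rewriting the equality $\vec{y} = \vec{y}_k$ and the component-$1$ equality part of each $\geq_1$-constraint as pairs of inequalities, the system takes the form $M \vec{n} \geq \vec{b}$ over the $k$ variables $n_1, \ldots, n_k$. Since $d = 2$ is fixed, each of the $O(k)$ vector constraints contributes only $O(1)$ scalar inequalities, so the number of rows is $e = O(k) = O(|Q|^3)$. The coefficient of $n_j$ in any row is a component of $\disp{\beta_j}$, of magnitude at most $|\beta_j| \cdot \norm{\Sigma} \leq N^{O(1)}$; the constants in $\vec{b}$ collect $\vec{x}$, $\vec{y}$, displacements of factors of $L$, and the vectors $\vec{m}_{\alpha_j}, \vec{m}_{\beta_j}, \vec{m}_{\overline{\beta_j}}$, all of magnitude at most $\norm{\vec{x}} + \norm{\vec{y}} + |L| \cdot \norm{\Sigma} \leq N^{O(1)}$. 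Hence the parameter $m = \max_i (\sum_{j} |M_{i,j}| + |\vec{b}(i)|)$ of \cref{cor:pottier} satisfies $m \leq k \cdot N^{O(1)} \leq N^{O(1)}$.

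Finally I would invoke \cref{cor:pottier} to get a solution $(n_1, \ldots, n_k)$ with $\sum_i n_i \leq (2 + m)^{k+1+e} \leq (N^{O(1)})^{O(|Q|^3)} = N^{O(|Q|^3)}$. By the forward direction of \cref{cor:system}, the path $\pi \eqdef \alpha_0 \beta_1^{n_1} \alpha_1 \cdots \beta_k^{n_k} \alpha_k$ satisfies $p(\vec{x}) \xrightarrow{\pi} q(\vec{y})$, and its length is $|\pi| = \sum_{j=0}^k |\alpha_j| + \sum_{j=1}^k n_j |\beta_j| \leq |L| \,(1 + \sum_i n_i) \leq N^{O(|Q|^3)}$, as required. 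I expect the only delicate points to be the feasibility and solvability bookkeeping of the second paragraph (needed because the converse in \cref{cor:system} requires strictly positive exponents) and the verification in the third paragraph that the exponent $k + 1 + e$ stays $O(|Q|^3)$ while the base $2 + m$ stays polynomial in $N$; the length computation itself is routine once the small solution is in hand.
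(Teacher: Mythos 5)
Your proposal is correct and follows essentially the same route as the paper's proof: apply \cref{cor:lps} to get a short linear path scheme, remove the cycles with zero exponent so that the converse direction of \cref{cor:system} yields a solution of $S_{\vec{x},L,\vec{y}}$, invoke \cref{cor:pottier} for a small solution, and convert it back into a short path via the forward direction of \cref{cor:system}. The only differences are cosmetic (you bound the number of inequalities as $e = O(k)$ where the paper computes $e = 9k+7$ explicitly, and you omit the trivial case $\Sigma \subseteq \{\vec{0}, \ztest\}$, which the paper handles separately to ensure $\norm{\Sigma} \geq 1$).
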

\begin{proof}
  Let $c\geq 1$ be a constant satisfying \cref{cor:lps}, i.e., such that $O(1)\leq c$ and $O(|Q|^3)\leq c|Q|^3$.
  Consider a $2$-TVASS $\vass$ and let $p(\vec{x})$ and $q(\vec{y})$ be two configurations such that $p(\vec{x})\xrightarrow{*}q(\vec{y})$.
  The case where $\Sigma \subseteq \{\vec{0}, \ztest\}$ is trivial (there is a run of length at most $|Q|$ in that case),
  so we assume that $\norm{\Sigma} \geq 1$ for the remainder of the proof.
  Let us introduce $N=|Q|+\norm{\Sigma}$. \cref{cor:lps} shows that there exists a linear path scheme $L=\alpha_0\beta_1^*\alpha_1\cdots \beta_k^*\alpha_k$ with $p(\vec{x})\xrightarrow{L}q(\vec{y})$ and such that $|L|\leq N^c$ and $k\leq c|Q|^3$. It follows that there exists $n_1,\ldots,n_k\in\setN$ such that:
  $$p(\vec{x})\xrightarrow{\alpha_0\beta_1^{n_1}\alpha_1\cdots \beta_k^{n_k}\alpha_k}q(\vec{y})$$
  By removing from $L$ the cycles $\beta_j$ such that $n_j=0$, we can assume, without loss of generality, that $n_j\geq 1$ for every $j \in \interval{1}{k}$. It follows that $L$ is feasible. From \cref{cor:system} we deduce that $(n_1,\ldots,n_k)$ is a solution of $S_{\vec{x},L,\vec{y}}$. From \cref{cor:pottier} we deduce that there exist $m_1,\ldots,m_k\in\setN$ such that $(m_1,\ldots,m_k)$ satisfies $S_{\vec{x},L,\vec{y}}$ and such that $m_1+\cdots+m_k\leq (2+m)^{k+1+e}$ where $m\leq \norm{\vec{x}}+\norm{\vec{y}}+\norm{\Sigma} {\cdot} |L|$ and $e\eqdef 9k+7$. This expression for $e$ comes from the encoding of $\geq_1$ with $3$ inequalities. Let us introduce the path $\pi=\alpha_0\beta_1^{m_1}\alpha_1\cdots \beta_k^{m_k}\alpha_k$ and observe that $p(\vec{x})\xrightarrow{\pi}q(\vec{y})$ from \cref{cor:system}. Moreover $|\pi|$ is bounded by:
  $$
    (2+m)^{10k+8}{\cdot}|L|
    \leq (2+\norm{\vec{x}}+\norm{\vec{y}}+\norm{\Sigma} N^c)^{10c|Q|^3+8}N^c
    \leq (|Q|+\norm{\vec{x}}+\norm{\vec{y}}+\norm{\Sigma})^{O(|Q|^3)}
    $$
    This concludes the proof of the lemma.
\end{proof}

We are now ready to characterize the complexity of the reachability problem in $2$-TVASS.
This decision problem asks,
given a $2$-TVASS $\vass = (Q, \Sigma, \Delta)$ and two configurations $p(\vec{x})$ and $q(\vec{y})$,
whether $p(\vec{x}) \xrightarrow{*} q(\vec{y})$.
By \cref{lem:shortpath},
if $p(\vec{x}) \xrightarrow{*} q(\vec{y})$ then
there exists a run from $p(\vec{x})$ to $q(\vec{y})$ of length at most
exponential in the sizes of $\vass$, $p(\vec{x})$ and $q(\vec{y})$.
Notice that configurations along that run have a polynomial size
(with respect to the size of the input problem).
It follows that a polynomial-space bounded exploration of the reachability set
provides a way to decide the reachability problem.
We have shown the following theorem.
\begin{theorem}
  The reachability problem for $2$-TVASS is PSPACE-complete.
\end{theorem}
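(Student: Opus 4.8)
The plan is to prove PSPACE-completeness by treating hardness and membership separately. Hardness is immediate: the reachability problem for $2$-VASS (the special case where $\Sigma$ excludes $\ztest$) is already PSPACE-hard by~\cite{DBLP:journals/iandc/FearnleyJ15}, and every $2$-VASS is in particular a $2$-TVASS, so the lower bound transfers verbatim. The entire effort therefore lies in PSPACE-membership, and here essentially all the work has already been carried out in \cref{lem:shortpath}; what remains is to convert the short-run bound into a space-bounded procedure and to verify the bit-sizes involved.

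For membership I would argue as follows. Fix an instance given by a $2$-TVASS $\vass = (Q, \Sigma, \Delta)$ and configurations $p(\vec{x})$, $q(\vec{y})$, all encoded in binary. By \cref{lem:shortpath}, if $p(\vec{x}) \xrightarrow{*} q(\vec{y})$ then there is a witnessing path $\pi$ of length at most $B \eqdef (|Q| + \norm{\vec{x}} + \norm{\vec{y}} + \norm{\Sigma})^{O(|Q|^3)}$. The nondeterministic algorithm guesses such a run one transition at a time: it maintains only the \emph{current} configuration together with a step counter bounded by $B$, at each step nondeterministically selects a transition $\delta \in \Delta$, checks that $\delta$ is applicable (nonnegativity of the updated counters for an addition transition, and a zero first counter for a $\ztest$ transition), updates the configuration accordingly, and accepts as soon as the current configuration equals $q(\vec{y})$ within $B$ steps. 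Soundness and completeness of this guess-and-check procedure are immediate from \cref{lem:shortpath}.

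The crux, which I expect to be the only delicate point, is the size bookkeeping. Under binary encoding $|Q|$ is polynomial in the input size, while $\norm{\vec{x}}$, $\norm{\vec{y}}$ and $\norm{\Sigma}$ are at most exponential, so their bit-lengths are polynomial. Consequently $\log B = O(|Q|^3) \cdot \log(|Q| + \norm{\vec{x}} + \norm{\vec{y}} + \norm{\Sigma})$ is polynomial, whence the step counter fits in polynomially many bits. Along a run of length at most $B$, each counter value stays bounded by $\norm{\vec{x}} + B \cdot \norm{\Sigma}$, whose bit-length is again polynomial since $\log B$, $\log\norm{\vec{x}}$ and $\log\norm{\Sigma}$ are all polynomial; together with the $O(\log|Q|)$ bits recording the state, every intermediate configuration therefore has polynomial size. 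Thus the algorithm runs in nondeterministic polynomial space.

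Finally I would invoke Savitch's theorem, $\textsf{NPSPACE} = \textsf{PSPACE}$, to conclude that the reachability problem lies in PSPACE, and combine this with the lower bound above to obtain PSPACE-completeness. As indicated, the main obstacle is not conceptual but arithmetic: one must check that the run-length bound of \cref{lem:shortpath} is genuinely \emph{singly} exponential in the input bit-size, so that both the step counter and the reachable counter values have polynomial bit-length. This is precisely what the polynomial character of the exponent $O(|Q|^3)$ guarantees, and it is the hinge on which the whole membership argument turns.
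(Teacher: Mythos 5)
Your proposal is correct and follows essentially the same route as the paper: PSPACE-hardness is inherited from $2$-VASS, and membership follows from the exponential run-length bound of \cref{lem:shortpath} via a space-bounded exploration in which every intermediate configuration has polynomial size. Your write-up merely makes explicit the bit-size bookkeeping and the appeal to Savitch's theorem that the paper leaves implicit.
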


\begin{remark}
  Other natural problems on $2$-TVASS are PSPACE-complete.
  In \cref{appendix:sec:complexity-results},
  we derive from the succinct flattenability of $2$-TVASS that
  the boundedness problem and the termination problem are both decidable
  in polynomial space.
  These results are obtained by providing a polynomial bound on the size
  of reachable configurations of a bounded $2$-TVASS.
\end{remark}

\section{Conclusion and Perspectives}
\label{sec:conclusion}
We have shown in this paper that extending $2$-VASS with zero-tests on the first counter
is for free, in the sense that the reachability problem remains PSPACE-complete
(and so do the boundedness and termination problems).
As in the case of $2$-VASS,
a crucial step in our approach is what we call \emph{succinct flattenability}, i.e.,
the existence of small linear path schemes witnessing flattenability.
Succinct flattenability of $2$-VASS was leveraged by Englert et al. in~\cite{DBLP:conf/lics/EnglertLT16}
to show that reachability in $2$-VASS is NL-complete when the input integers are encoded in unary.
The question whether reachability in unary $2$-TVASS remains NL-complete is left open.
We conjecture that this question can be answered positively,
by leveraging our succinct flattenability result for $2$-TVASS and
by extending~\cite{DBLP:conf/lics/EnglertLT16} with zero-tests on the first counter.

\bibliographystyle{plainurl}
\bibliography{biblio}

\clearpage
\appendix

\section{Missing Proofs for \cref{sec:flattenability-of-2-tvass}}
\label{appendix:sec:flattenability-of-2-tvass}
\vloopflatnesstoflatness*
\begin{proof}
  Consider a run $\rho$ from a configuration $p(\vec{x})$ to a configuration $q(\vec{y})$.
  If $\rho$ contains no zero-test transition,
  then $p(\vec{x}) \xrightarrow{A^*} q(\vec{y})$ and we are done.
  Otherwise,
  by splitting $\rho$ at configurations where the first counter is zero,
  we obtain that $\rho$ may be written as
  $$
  p(\vec{x})
  \xrightarrow{A^* \,\cup\, T} s_1(0, z_1)
  \xrightarrow{A^* \,\cup\, T} s_2(0, z_2)
  \cdots
  \xrightarrow{A^* \,\cup\, T} s_n(0, z_n)
  \xrightarrow{A^* \,\cup\, T} q(\vec{y})
  $$
  with $n \geq 1$.
  The sequence $s_1 \cdots s_n$, viewed as a nonempty word in $Q^*$,
  may be factorized as a concatenation of at most $|Q|$ words
  that each start and end with the same letter.
  So our run $\rho$ may be factorized as
  $$
  p(\vec{x})
  \xrightarrow{A^* \,\cup\, T} q_1(0, x_1)
  \xrightarrow{*} q_1(0, y_1)
  \cdots
  \xrightarrow{A^* \,\cup\, T} q_k(0, x_k)
  \xrightarrow{*} q_k(0, y_k)
  \xrightarrow{A^* \,\cup\, T} q(\vec{y})
  $$
  with $1 \leq k \leq |Q|$.
  This concludes the proof of the lemma.
\end{proof}

\increasingvloopflattenable*
\begin{proof}
  Assume that $q(0, x) \xrightarrow{*} q(0, y)$ with $y > x$.
  The difference $d = y - x$ is in $D_x$.
  We first consider the case where $x \leq t$.
  Recall that $D_x = F_x \cup (B_x + \setN m)$.
  If $d = f$ for some $f \in F_x$ then
  $q(0, x) \xrightarrow{\alpha_{x, f}} q(0, x + f) = q(0, y)$
  and we are done since $\alpha_{x, f} \in \Lambda_x$.
  Otherwise,
  we have $d = b + km$ for some $b \in B_x$ and $k \in \setN$.
  Recall that $q(0, h) \xrightarrow{\beta} q(0, h + m)$ and
  note that $x + b \geq b \geq h$.
  We get
  $q(0, x) \xrightarrow{\alpha_{x, b}} q(0, x + b) \xrightarrow{\beta^k} q(0, x + b + km) = q(0, y)$
  and we are done since $\alpha_{x, b} \beta^* \in \Lambda_x$.

  \smallskip

  The other case is when $x > t$.
  In that case, $d = y - x$ is in $D_t$ as $D_x = D_t$, hence,
  $q(0, t) \xrightarrow{*} q(0, t + d)$.
  We get from the above case that $q(0, t) \xrightarrow{L} q(0, t + d)$
  for some $L \in \Lambda_t$.
  It follows that $q(0, x) \xrightarrow{L} q(0, x + d) = q(0, y)$.
\end{proof}

\section{Missing Proofs for \cref{sec:weighted-one-counter-automata}}
\label{appendix:sec:weighted-one-counter-automata}
\hillcutting*
\begin{proof}
  Consider a run
  $\rho = (q_0(x_0), \delta_1, q_1(x_1), \ldots, \delta_n, q_n(x_n))$,
  with $q_0(x_0) = p(0)$ and $q_n(x_n) = q(0)$,
  and assume that the length $n$ of its trace $\pi = \delta_1 \cdots \delta_n$
  satisfies $n \geq m^2 |Q|^3$.
  If $x_i < m|Q|^2$ for all $i \in \interval{0}{n}$,
  then some configuration $r(x)$ repeats at least $m+1$ times in the run $\rho$,
  i.e.,
  there exists
  $0 \leq i_0 < \cdots < i_m \leq n$ such that
  $(q_{i_\ell}, x_{i_\ell}) = r(x)$ for all $\ell \in \interval{0}{m}$.
  Taking
  $\alpha = \delta_1 \cdots \delta_{i_0}$,
  $\beta_\ell = \delta_{i_{\ell-1} + 1} \cdots \delta_{i_\ell}$ for all $\ell \in \interval{1}{m}$,
  $\gamma = \theta_m = \cdots = \theta_1 = \varepsilon$ and
  $\eta = \delta_{i_m + 1} \cdots \delta_n$
  concludes the proof of the lemma for this case.

  \smallskip

  Assume now that the run $\rho$ visits a configuration $r(x)$ such that $x \geq m|Q|^2$.
  Intuitively,
  this configuration identifies a ``high hill''.
  By way of the classical hill-cutting technique,
  we get two sequences $i_0, \ldots, i_{m|Q|^2}$ and $j_0, \ldots, j_{m|Q|^2}$
  of positions in $\interval{0}{n}$ verifying
  \begin{itemize}
  \item
    $0 \leq i_0 < \cdots < i_{m|Q|^2} \leq j_{m|Q|^2} < \cdots < j_0 \leq n$, and
  \item
    for every $\ell \in \interval{0}{m|Q|^2}$,
    we have $x_{i_\ell} = x_{j_\ell} = \ell$ and $x_i > \ell$ for all $i \in \interval{i_\ell + 1}{j_\ell - 1}$.
  \end{itemize}
  Consider the pairs of states $(q_{i_\ell}, q_{j_\ell})$ where
  $\ell$ ranges over $\interval{0}{m|Q|^2}$.
  By the pigeonhole principle,
  there exists $0 \leq \ell_0 < \cdots < \ell_m \leq m|Q|^2$ such that
  $(q_{i_{\ell_0}}, q_{j_{\ell_0}}) = \cdots = (q_{i_{\ell_m}}, q_{j_{\ell_m}})$.
  It follows that
  $$
  p(0)
  \xrightarrow{\alpha}
  r(\ell_0)
  \xrightarrow{\beta_1}
  r(\ell_1)
  \cdots
  \xrightarrow{\beta_m}
  r(\ell_m)
  \xrightarrow{\gamma}
  s(\ell_m)
  \xrightarrow{\theta_m}
  \cdots
  s(\ell_1)
  \xrightarrow{\theta_1}
  s(\ell_0)
  \xrightarrow{\eta}
  q(0)
  $$
  where
  $(r, s) = (q_{i_{\ell_0}}, q_{j_{\ell_0}})$,
  $\alpha = \delta_1 \cdots \delta_{i_{\ell_0}}$,
  $\beta_h = \delta_{i_{\ell_{h-1}} + 1} \cdots \delta_{i_{\ell_h}}$ for each $h \in \interval{1}{m}$,
  $\gamma = \delta_{i_{\ell_m} + 1} \cdots \delta_{j_{\ell_m}}$,
  $\theta_h = \delta_{j_{\ell_h + 1}} \cdots \delta_{j_{\ell_{h-1}}}$ for each $h \in \interval{1}{m}$ and
  $\eta = \delta_{j_{\ell_0} + 1} \cdots \delta_n$.
  By construction, we have
  $\pi = \alpha \beta_1 \cdots \beta_m \gamma \theta_m \cdots \theta_1 \eta$ and
  the cycles $\beta_1, \ldots, \beta_m$ and $\theta_1, \ldots, \theta_m$ are all nonempty.

  \smallskip

  It remains to show that the cycles $\beta_h$ and $\theta_h$ can be taken
  any number of times.
  First notice that there is no zero-test in $\beta_1 \cdots \beta_m \gamma \theta_m \cdots \theta_1$,
  since $x_i > \ell_0$ for all $i \in \interval{i_{\ell_0} + 1}{j_{\ell_0} - 1}$.
  Let $h \in \interval{1}{m}$.
  Recall that
  $x_i \geq \ell_{h-1}$ for all $i \in \interval{i_{\ell_{h-1}}}{j_{\ell_{h-1}}}$.
  Hence, the counter remains bounded from below by $\ell_{h-1}$
  in the subruns
  $r(\ell_{h-1}) \xrightarrow{\beta_h} r(\ell_h)$
  and
  $s(\ell_h) \xrightarrow{\theta_h} s(\ell_{h-1})$.
  So we may safely decrease all counter values by $\ell_{h-1}$ in these two subruns
  and obtain two runs
  $r(0) \xrightarrow{\beta_h} r(\ell_h - \ell_{h-1})$
  and
  $s(\ell_h - \ell_{h-1}) \xrightarrow{\theta_h} s(0)$.
  Similarly,
  we may safely decrease all counter values by $\ell_m$ in the subrun
  $r(\ell_m) \xrightarrow{\gamma} s(\ell_m)$
  and obtain a run
  $r(0) \xrightarrow{\gamma} s(0)$.
  We deduce by monotonicity of addition actions that,
  for every $n_1, \ldots, n_m \in \setN$,
  $$
  r(\ell_0)
  \xrightarrow{\beta_1^{n_1}}
  r(\ell_0 + d_1)
  \cdots
  \xrightarrow{\beta_m^{n_m}}
  r(\ell_0 + d_m)
  \xrightarrow{\gamma}
  s(\ell_0 + d_m)
  \xrightarrow{\theta_m^{n_m}}
  \cdots
  s(\ell_0 + d_1)
  \xrightarrow{\theta_1^{n_1}}
  s(\ell_0)
  $$
  where $d_h = n_1 (\ell_1 - \ell_0) + \cdots + n_h (\ell_h - \ell_{h-1}) \geq 0$
  for each $h \in \interval{1}{m}$.
  The observations that
  $p(0) \xrightarrow{\alpha} r(\ell_0)$
  and
  $s(\ell_0) \xrightarrow{\eta} q(0)$
  conclude the proof of the lemma.
\end{proof}

\shortocaruns*
\begin{proof}
  We only briefly sketch the proof of this well-known fact.
  The subcase where $x = y = 0$ immediately follows from \cref{lem:hill-cutting} applied with $m = 1$.
  The general case reduces to this subcase by
  adding $x$ states ``before'' $p$ to increment the counter by $x$ and
  adding $y$ states ``after'' $q$ to decrement the counter by $y$.
\end{proof}

\pumpingshortcycles*
\begin{proof}
  Consider a run
  $\rho = (q_0(x_0), \delta_1, q_1(x_1), \ldots, \delta_n, q_n(x_n))$,
  with $q_0(x_0) = p(0)$ and $q_n(x_n) = q(0)$,
  and assume that the length $n$ of its trace $\pi = \delta_1 \cdots \delta_n$
  satisfies $n \geq 2|Q|^3$.
  We first consider the case where there exists $0 \leq h < k \leq n$
  such that $q_h(x_h) = q_k(x_k)$ and $x_i < 2|Q|^2$ for all $i \in \interval{h}{k}$.
  We may assume, without loss of generality, that
  the configurations $q_i(x_i)$ with $i \in \interval{h}{k-1}$ are pairwise distinct.
  This entails that $k - h \leq |Q| {\cdot} 2|Q|^2 = 2|Q|^3$.
  Taking
  $r = s = q_h$,
  $x = x_h$,
  $d = 0$,
  $\alpha = \delta_1 \cdots \delta_h$,
  $\beta = \delta_{h + 1} \cdots \delta_k$,
  $\gamma = \theta = \varepsilon$ and
  $\eta = \delta_{k + 1} \cdots \delta_n$
  concludes the proof of the lemma for this case.

  \smallskip

  Assume now that $q_h(x_h) \neq q_k(x_k)$ for every $0 \leq h < k \leq n$
  such that $x_i < 2|Q|^2$ for all $i \in \interval{h}{k}$.
  If we had $x_i < 2|Q|^2$ for all $i \in \interval{0}{n}$ then we would get
  that $n < |Q| {\cdot} 2|Q|^2 = 2|Q|^3$, which is impossible.
  So some configuration $q_m(x_m)$ satisfies $x_m \geq 2|Q|^2$.
  Intuitively,
  this configuration identifies a ``high hill''.
  We now apply a tuned hill-cutting technique
  to obtain the desired cycles $\beta$ and $\theta$.
  We split the run $\rho$ by introducing the positions
  $i_0, \ldots, i_{|Q|^2}$ and $j_0, \ldots, j_{|Q|^2}$ defined by
  $$
  \begin{array}{r@{\ \ }c@{\ \ }l}
    i_0 & \eqdef & \max \{i \mid i \leq m \wedge x_i = |Q|^2\}\\
    i_\ell & \eqdef & \min \{i \mid i \geq i_0 \wedge x_i = |Q|^2 + \ell\}
  \end{array}
  \qquad\qquad
  \begin{array}{r@{\ \ }c@{\ \ }l}
    j_0 & \eqdef & \min \{j \mid j \geq m \wedge x_j = |Q|^2\}\\
    j_\ell & \eqdef & \max \{j \mid j \leq j_0 \wedge x_j = |Q|^2 + \ell\}
  \end{array}
  $$
  where $\ell \in \interval{1}{|Q|^2}$.
  It is readily seen that
  $x_{i_\ell} = x_{j_\ell} = |Q|^2 + \ell$ for all $\ell \in \interval{0}{|Q|^2}$
  and that
  $0 \leq i_0 < \cdots < i_{|Q|^2} \leq m \leq j_{|Q|^2} < \cdots < j_0 \leq n$.
  This ad-hoc split is of interest to us because of the two following properties.
  Firstly, $|Q|^2 \leq x_i$ for every $i \in \interval{i_0}{j_0}$.
  Secondly, $|Q|^2 \leq x_i < 2|Q|^2$ for every $i$ such that
  $i_0 \leq i < i_{|Q|^2}$ or $j_{|Q|^2} < i \leq j_0$.
  This second property entails that the configurations
  $q_i(x_i)$ with $i_0 \leq i < i_{|Q|^2}$ are pairwise distinct,
  hence $i_{|Q|^2} - i_0 \leq |Q|^3$,
  and that the configurations
  $q_j(x_j)$ with $j_{|Q|^2} < j \leq j_0$ are pairwise distinct,
  hence $j_0 - j_{|Q|^2} \leq |Q|^3$.
  Consider the pairs of states $(q_{i_\ell}, q_{j_\ell})$ where
  $\ell$ ranges over $\interval{0}{|Q|^2}$.
  By the pigeonhole principle,
  there exists $0 \leq \ell < \ell' \leq |Q|^2$ such that
  $(q_{i_\ell}, q_{j_\ell}) = (q_{i_{\ell'}}, q_{j_{\ell'}})$.
  It follows that
  $$
  p(0)
  \xrightarrow{\alpha}
  r(x)
  \xrightarrow{\beta}
  r(x + d)
  \xrightarrow{\gamma}
  s(x + d)
  \xrightarrow{\theta}
  s(x)
  \xrightarrow{\eta}
  q(0)
  $$
  where
  $r = q_{i_\ell}$,
  $s = q_{j_\ell}$,
  $x = |Q|^2 + \ell$,
  $d = \ell' - \ell$,
  $\alpha = \delta_1 \cdots \delta_{i_\ell}$,
  $\beta = \delta_{i_\ell + 1} \cdots \delta_{i_{\ell'}}$,
  $\gamma = \delta_{i_{\ell'} + 1} \cdots \delta_{j_{\ell'}}$,
  $\theta = \delta_{j_{\ell'} + 1} \cdots \delta_{j_\ell}$ and
  $\eta = \delta_{j_\ell + 1} \cdots \delta_n$.
  By construction, we have
  $\pi = \alpha \beta \gamma \theta \eta$
  and
  $x+d = |Q|^2 + \ell' \leq 2|Q|^2$.
  We also have
  $|\beta \theta| = i_{\ell'} - i_\ell + j_{\ell} - j_{\ell'}$,
  hence,
  $\beta \theta \neq \varepsilon$ and
  $|\beta \theta| \leq i_{|Q|^2} - i_0 + j_0 - j_{|Q|^2} \leq 2|Q|^3$.

  \smallskip

  It remains to show that the cycles $\beta$ and $\theta$ can be taken
  any number of times.
  First notice that there is no zero-test in $\beta \gamma \theta$,
  since $x_i \geq |Q|^2 \geq 1$ for every $i \in \interval{i_0}{j_0}$.
  Together with the observation that $d \geq 0$,
  this entails by monotonicity of addition actions that
  $
  r(x)
  \xrightarrow{\beta^n}
  r(x + n d)
  \xrightarrow{\gamma}
  s(x + n d)
  \xrightarrow{\theta^n}
  s(x)
  $
  for every $n \geq 1$.
  For the remaining case of $n = 0$,
  recall that the counter remains bounded from below by $|Q|^2$
  in the subrun
  $r(x + d) \xrightarrow{\gamma} s(x + d)$.
  Since $d = \ell' - \ell \leq |Q|^2$,
  we may safely decrease all counter values by $d$ in this subrun
  and obtain a run $r(x) \xrightarrow{\gamma} s(x)$.
  This concludes the proof of the lemma.
\end{proof}

\section{Missing Proofs for \cref{sec:linear-path-scheme-to-equations}}
\label{appendix:sec:linear-path-scheme-to-equations}
\begin{claim}
  \label{claim:pathVASS}
  Let $\pi$ be a path containing no zero-test transition, from a state $p$ to a state $q$. For every $\vec{x},\vec{y}\in\setN^d$ we have:
  $$p(\vec{x})\xrightarrow{\pi}q(\vec{y})~~\Longleftrightarrow~~\vec{x}\geq \vec{m}_\pi\wedge \vec{y}=\vec{x}+\disp{\pi}$$
\end{claim}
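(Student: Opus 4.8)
The plan is to prove both implications by simply unrolling the operational semantics, exploiting the hypothesis that $\pi = \delta_1 \cdots \delta_n$ consists solely of addition transitions $\delta_j = (p_{j-1}, \vec{a}_j, p_j)$. The key observation is that along such a path a run starting from $p(\vec{x})$ is entirely forced: writing $\alpha_j = \delta_1 \cdots \delta_j$ for the $j$th prefix (with $\alpha_0 = \varepsilon$), each addition step adds exactly $\vec{a}_j = \disp{\delta_j}$, so after reading $\alpha_j$ the counter vector is necessarily $\vec{x} + \disp{\alpha_j}$, and the only thing that can obstruct the run is some component dropping below zero. I would first record this by a straightforward induction on $j \in \interval{0}{n}$: the step rule for addition actions applies at every transition (there are no zero-tests), so the $j$th configuration is $p_j(\vec{x} + \disp{\alpha_j})$ whenever the run reaches it.

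For the ($\Leftarrow$) direction, I would assume $\vec{x} \geq \vec{m}_\pi$ and $\vec{y} = \vec{x} + \disp{\pi}$, and set $\vec{x}_j \eqdef \vec{x} + \disp{\alpha_j}$ for $j \in \interval{0}{n}$. Since $\vec{m}_\pi(i) = \max_\alpha(-\disp{\alpha}(i))$ where $\alpha$ ranges over \emph{all} prefixes of $\pi$, the inequality $\vec{x} \geq \vec{m}_\pi$ yields $\vec{x}(i) \geq -\disp{\alpha_j}(i)$ for every prefix and every component, i.e.\ $\vec{x}_j \geq \vec{0}$ for all $j$. Hence each addition step $p_{j-1}(\vec{x}_{j-1}) \xrightarrow{\delta_j} p_j(\vec{x}_{j-1} + \vec{a}_j) = p_j(\vec{x}_j)$ is licensed by the semantics, and composing them gives $p(\vec{x}) \xrightarrow{\pi} q(\vec{x}_n) = q(\vec{x} + \disp{\pi}) = q(\vec{y})$.

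For the ($\Rightarrow$) direction, I would assume $p(\vec{x}) \xrightarrow{\pi} q(\vec{y})$. Unfolding $\xrightarrow{\pi}$ as a relational composition produces intermediate configurations, and the induction above forces the $j$th one to be $p_j(\vec{x} + \disp{\alpha_j})$; reading off the last one gives $\vec{y} = \vec{x} + \disp{\pi}$. Moreover each intermediate configuration lies in $\setN^d$, so $\vec{x} + \disp{\alpha_j} \geq \vec{0}$ for every prefix $\alpha_j$, which rewrites as $\vec{x}(i) \geq -\disp{\alpha_j}(i)$ for all $i$ and $j$; taking the maximum over prefixes yields exactly $\vec{x} \geq \vec{m}_\pi$. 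I expect no genuine obstacle here, since the statement is essentially a reformulation of the definition of zero-test-free runs; the only point requiring a little care is the prefix bookkeeping, in particular including the empty prefix $\varepsilon$, for which $\disp{\varepsilon} = \vec{0}$ forces $\vec{m}_\pi \geq \vec{0}$ so that the constraint $\vec{x} \in \setN^d$ is already subsumed, and checking that the ``$\max$ over prefixes'' in $\vec{m}_\pi$ captures precisely the conjunction of the nonnegativity constraints at the successive configurations.
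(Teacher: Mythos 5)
Your proof is correct and follows essentially the same route as the paper, which simply records the observation that $p(\vec{x})\xrightarrow{\pi}q(\vec{y})$ holds if and only if $\vec{y}=\vec{x}+\disp{\pi}$ and $\vec{x}(i)+\disp{\alpha}(i)\geq 0$ for every prefix $\alpha$ of $\pi$ and every component $i$, which is exactly the content of your prefix bookkeeping. Your version merely makes the underlying induction on prefixes explicit.
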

\begin{claimproof}
  This is a classical result. Observe that $p(\vec{x})\xrightarrow{\pi}q(\vec{y})$ if, and only if, $\vec{y}=\vec{x}+\disp{\pi}$ and for every prefix $\pi$ of $\alpha$, and for every $i\in\interval{1}{d}$, we have $\vec{x}(i)+\disp{\pi}(i)\geq 0$.
\end{claimproof}

\begin{claim}
  \label{claim:pathT}
  Let $\pi$ be a feasible path containing a zero-test transition, from a state $p$ to a state $q$. For every $\vec{x},\vec{y}\in\setN^d$ we have: $$p(\vec{x})\xrightarrow{\pi}q(\vec{y})~~\Longleftrightarrow~~\vec{x}\geq_1\vec{m}_\pi\wedge  \vec{y}=\vec{x}+\disp{\pi}$$
\end{claim}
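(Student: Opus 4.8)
The plan is to mirror the proof of \cref{claim:pathVASS}, whose reasoning already accounts for counter nonnegativity along every prefix, and to isolate the single extra phenomenon introduced by the zero-test: it pins the first counter to an \emph{exact} value at the test, which is why the relation on the first component tightens from $\geq$ to an equality (that is, $\geq_1$ rather than $\geq$). Concretely, I would first note that $\vec{y}=\vec{x}+\disp{\pi}$ together with the nonnegativity constraints $\vec{x}(i)\geq -\disp{\alpha}(i)$ for every prefix $\alpha$ and every coordinate $i$ (hence $\vec{x}\geq\vec{m}_\pi$ coordinatewise) are handled exactly as in \cref{claim:pathVASS}. These already yield the statement on coordinates $i\in\interval{2}{d}$ and the equality $\vec{y}=\vec{x}+\disp{\pi}$, so all that remains is to upgrade the first-coordinate inequality $\vec{x}(1)\geq\vec{m}_\pi(1)$ to the equality $\vec{x}(1)=\vec{m}_\pi(1)$.

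The crux is the following observation, which is where feasibility of $\pi$ enters. Let $\alpha$ be any prefix of $\pi$ ending immediately before a zero-test transition (one exists since $\pi$ contains a zero-test). I claim that $\disp{\alpha}(1)=-\vec{m}_\pi(1)$, i.e. the first-counter displacement attains its minimum exactly at the test. To see this, take a witness $p(\vec{x}_0)\xrightarrow{\pi}q(\vec{y}_0)$ of feasibility. In this run the first counter equals $\vec{x}_0(1)+\disp{\alpha}(1)$ when the transition $\ztest$ fires, and the semantics of $\ztest$ force this value to be $0$, so $\vec{x}_0(1)=-\disp{\alpha}(1)$. Nonnegativity of the first counter along every prefix $\alpha'$ gives $\vec{x}_0(1)\geq -\disp{\alpha'}(1)$, hence $\vec{x}_0(1)\geq\vec{m}_\pi(1)$, whereas the definition of $\vec{m}_\pi$ yields $-\disp{\alpha}(1)\leq\vec{m}_\pi(1)$. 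Combining these forces $\vec{x}_0(1)=\vec{m}_\pi(1)=-\disp{\alpha}(1)$, as claimed. In passing this also shows that all zero-test prefixes share the same first-coordinate displacement, which is precisely what makes a feasible path with zero-tests internally consistent.

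With this in hand both directions close. For $\Rightarrow$, a run $p(\vec{x})\xrightarrow{\pi}q(\vec{y})$ forces $\vec{x}(1)+\disp{\alpha}(1)=0$ at the test prefix $\alpha$, so $\vec{x}(1)=-\disp{\alpha}(1)=\vec{m}_\pi(1)$ by the observation, giving $\vec{x}\geq_1\vec{m}_\pi$ together with the coordinatewise bounds already noted and $\vec{y}=\vec{x}+\disp{\pi}$. For $\Leftarrow$, assuming $\vec{x}\geq_1\vec{m}_\pi$ and $\vec{y}=\vec{x}+\disp{\pi}$, the inequalities $\vec{x}(i)\geq\vec{m}_\pi(i)\geq-\disp{\alpha}(i)$ for all $i$ (using $\vec{x}(1)=\vec{m}_\pi(1)$ on the first coordinate) guarantee nonnegativity of every counter along every prefix, and at each test prefix $\alpha$ we get $\vec{x}(1)+\disp{\alpha}(1)=\vec{m}_\pi(1)+\disp{\alpha}(1)=0$ by the observation, so every $\ztest$ is enabled; hence $p(\vec{x})\xrightarrow{\pi}q(\vec{y})$. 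The only genuinely delicate point is the feasibility observation of the second paragraph; everything else is bookkeeping carried over from \cref{claim:pathVASS}.
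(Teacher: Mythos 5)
Your proof is correct and follows essentially the same route as the paper's: both hinge on the observation that feasibility forces the first-counter displacement at every zero-test prefix to equal $-\vec{m}_\pi(1)$, which upgrades the first-coordinate inequality to an equality. The only cosmetic difference is in the converse direction, where the paper projects onto the remaining coordinates and invokes \cref{claim:pathVASS} for the $d=1$ case, while you verify nonnegativity and zero-test enabledness along each prefix directly.
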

\begin{claimproof}
  Let us consider a feasible path $\pi$ containing a zero-test transition, from a state $p$ to a state $q$. The path $\pi$ can be decomposed into $\pi_0\delta_1\pi_1\cdots \delta_k\pi_k$ where $k\geq 1$, $\delta_1,\ldots,\delta_k$ are zero-test transitions and $\pi_0,\ldots,\pi_k$ are paths containing no zero-test transition.
  
  Assume first that $p(\vec{x})\xrightarrow{\pi}q(\vec{y})$. Notice that $\vec{y}=\vec{x}+\disp{\pi}$ and $\vec{x}(i)+\disp{\alpha}(i)\geq 0$ for every prefix $\alpha$ of $\pi$ and for every $i\in\interval{1}{d}$. It follows that $\vec{x}\geq \vec{m}_\pi$. Since $\pi_0\delta_1$ is a prefix of $\pi$, we derive that $\vec{x}(1)+\disp{\pi_0}(1)=0$. Moreover, as $\pi_0$ is a prefix of $\pi$, the definition of $\vec{m}_\pi$ shows that $\vec{m}_\pi\geq -\disp{\pi_0}$. We deduce that $\vec{x}(1)=-\disp{\pi_0}(1)=\vec{m}_\pi(1)$. We have proved that $\vec{x}\geq_1\vec{m}_\pi$.
  
  Conversely, assume that $\vec{x}\geq_1\vec{m}_\pi\wedge  \vec{y}=\vec{x}+\disp{\pi}$. Since $\pi$ is feasible, we deduce from the previous paragraph that $\vec{m}_\pi(1)=-\disp{\pi_0}(1)$. By projecting away the path $\pi$ on counters $i\in \interval{2}{d}$, the proof of $p(\vec{x})\xrightarrow{\pi}q(\vec{y})$ reduces from \cref{claim:pathVASS} to the special case $d=1$ which is immediate since $\vec{x}(1)=-\disp{\pi_0}(1)$ is the unique possible initial value for the counter that makes the path $\pi$ feasible.
\end{claimproof}

\pathinTVASS*
\begin{proof}
  The lemma immediately follows from \cref{claim:pathVASS,claim:pathT}.
\end{proof}

\begin{claim}
  \label{claim:cycleVASS}
  Let $\beta$ be a cycle containing no zero-test transition, on a state $q$. For every $\vec{x},\vec{y}\in\setN^d$ and $n\in\setN\setminus\{0\}$, we have:
  $$q(\vec{x})\xrightarrow{\beta^n}q(\vec{y}) ~~\Longleftrightarrow~~\vec{x}\geq\vec{m}_\beta\wedge \vec{y}\geq \vec{m}_{\overline{\beta}}\wedge \vec{y}=\vec{x}+n\disp{\beta}$$
\end{claim}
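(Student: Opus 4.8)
The plan is to reduce the statement to \cref{claim:pathVASS} applied to the path $\beta^n$, which is legitimate since $\beta$ contains no zero-test transition and hence neither does $\beta^n$. That claim gives $q(\vec{x}) \xrightarrow{\beta^n} q(\vec{y})$ if, and only if, $\vec{x} \geq \vec{m}_{\beta^n}$ and $\vec{y} = \vec{x} + \disp{\beta^n}$. Since $\disp{\beta^n} = n \disp{\beta}$, the displacement condition is already the desired $\vec{y} = \vec{x} + n\disp{\beta}$. It thus remains to show that, under this equality, the single constraint $\vec{x} \geq \vec{m}_{\beta^n}$ is equivalent to the conjunction $\vec{x} \geq \vec{m}_\beta \wedge \vec{y} \geq \vec{m}_{\overline{\beta}}$.

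The heart of the argument is an explicit computation of $\vec{m}_{\beta^n}$. Every prefix of $\beta^n$ can be written as $\beta^k \alpha'$ with $0 \leq k \leq n-1$ and $\alpha'$ a prefix of $\beta$, and its displacement on component $i$ is $k\disp{\beta}(i) + \disp{\alpha'}(i)$. Since the choices of $k$ and of $\alpha'$ are independent, the maximum over prefixes factors as $\vec{m}_{\beta^n}(i) = \max_{0 \leq k \leq n-1}(-k\disp{\beta}(i)) + \vec{m}_\beta(i)$. As $-k\disp{\beta}(i)$ is affine in $k$, its maximum over the range is attained at an endpoint, giving $\max_{0 \leq k \leq n-1}(-k\disp{\beta}(i)) = \max(0, -(n-1)\disp{\beta}(i))$. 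Combining this with \cref{lem:alphaalpha}, which states $\vec{m}_{\overline{\beta}} = \vec{m}_\beta + \disp{\beta}$, I would obtain the compact identity $\vec{m}_{\beta^n} = \max(\vec{m}_\beta, \; \vec{m}_{\overline{\beta}} - n\disp{\beta})$, taken componentwise.

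With this identity the equivalence is immediate: $\vec{x} \geq \vec{m}_{\beta^n}$ holds if, and only if, $\vec{x} \geq \vec{m}_\beta$ and $\vec{x} \geq \vec{m}_{\overline{\beta}} - n\disp{\beta}$, and the latter inequality rewrites as $\vec{x} + n\disp{\beta} \geq \vec{m}_{\overline{\beta}}$, that is, $\vec{y} \geq \vec{m}_{\overline{\beta}}$ after substituting $\vec{y} = \vec{x} + n\disp{\beta}$. I would also sanity-check the boundary case $n = 1$, where the identity collapses to $\vec{m}_\beta = \max(\vec{m}_\beta, \vec{m}_\beta)$ via \cref{lem:alphaalpha}, consistent with $\beta^1 = \beta$. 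The only delicate step I anticipate is the prefix enumeration together with the separation of the two maxima: one must argue carefully that letting $\alpha'$ range over all prefixes of $\beta$ independently of $k$ captures exactly the prefixes of $\beta^n$, so that the factorization of the maximum is valid. Everything after that point is routine substitution.
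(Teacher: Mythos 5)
Your proof is correct, and it takes a genuinely different route from the paper's. The paper also reduces to \cref{claim:pathVASS}, but applies it one iteration of $\beta$ at a time: for the converse direction it shows by a sign analysis on each $\disp{\beta}(i)$ that $\vec{x}+r\disp{\beta}\geq\vec{m}_\beta$ for all $r\in\interval{0}{n-1}$ (using \cref{lem:alphaalpha} to transfer the constraint $\vec{y}\geq\vec{m}_{\overline{\beta}}$ backwards when $\disp{\beta}(i)<0$), and then chains $n$ single-step applications of \cref{claim:pathVASS}; the forward direction is handled separately by extracting one $\beta$-step from $\vec{x}$ and one $\overline{\beta}$-step from $\vec{y}$. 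You instead apply \cref{claim:pathVASS} once to the whole path $\beta^n$ and compute $\vec{m}_{\beta^n}$ in closed form as $\max(\vec{m}_\beta,\;\vec{m}_{\overline{\beta}}-n\disp{\beta})$, which gives both directions of the equivalence simultaneously by pure rewriting. The delicate step you flag is fine: the prefixes of $\beta^n$ are exactly the words $\beta^k\alpha'$ with $0\leq k\leq n-1$ and $\alpha'$ a prefix of $\beta$ (the representation is not unique, e.g.\ $\beta^k\beta=\beta^{k+1}\varepsilon$, but both representatives have the same displacement, so the maximum over the product set equals the maximum over prefixes and factors as a sum of independent maxima). The same endpoint argument for the affine function $-k\disp{\beta}(i)$ is, in substance, the paper's case split on the sign of $\disp{\beta}(i)$, so the two proofs encode the same arithmetic; yours packages it more compactly and avoids the iteration over $r$, at the cost of the extra identity for $\vec{m}_{\beta^n}$, which could be worth stating as a small auxiliary fact.
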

\begin{claimproof}
  Let $\vec{x},\vec{y}\in\setN^d$ and $n\in\setN\setminus\{0\}$. Assume first that $q(\vec{x})\xrightarrow{\beta^n}q(\vec{y})$. It follows that $\vec{y}=\vec{x}+\disp{\beta^n}=\vec{x}+n\disp{\beta}$. Moreover, since $n\geq 1$, there exists $\vec{x}',\vec{y}'\in\setN^d$ such that $q(\vec{x})\xrightarrow{\beta}q(\vec{x}')$ and  $q(\vec{y})\xrightarrow{\overline{\beta}}q(\vec{y}')$. \cref{claim:pathVASS} shows that $\vec{x}\geq \vec{m}_{\beta}$ and $\vec{y}\geq \vec{m}_{\overline{\beta}}$. Conversely, assume that
$\vec{x}\geq\vec{m}_\beta$, $\vec{y}\geq \vec{m}_{\overline{\beta}}$, and $\vec{y}=\vec{x}+n\disp{\beta}$. Let us first prove that $\vec{x}(i)+r\disp{\beta}(i)\geq \vec{m}_{\beta}(i)$ for every $r\in\interval{0}{n-1}$ and for every $i\in\interval{1}{d}$. Observe that if $\disp{\beta}(i)\geq 0$ the inequality is immediate. So, let us assume that $\disp{\beta}(i)<0$. As $\vec{y}=\vec{x}+n\disp{\beta}$, we deduce that $\vec{x}+(r+1)\disp{\beta}=\vec{y}-(n-r-1)\disp{\beta}$. From $\disp{\beta}(i)<0$ and $\vec{y}(i)\geq \vec{m}_{\overline{\beta}}$ we get $\vec{x}(i)+(r+1)\disp{\beta}(i)\geq \vec{m}_{\overline{\beta}}(i)$. From \cref{lem:alphaalpha}, we derive $\vec{x}(i)+r\disp{\beta}(i)\geq \vec{m}_\beta(i)$. We have proved that $\vec{x}+r\disp{\beta}\geq \vec{m}_\beta$ for every $r\in\interval{0}{n-1}$. \cref{claim:pathVASS} shows that $q(\vec{x}+r\disp{\beta})\xrightarrow{\beta}q(\vec{x}+(r+1)\disp{\beta})$ for every $r\in\interval{0}{n-1}$. We have proved that $q(x)\xrightarrow{\beta^n}q(\vec{y})$.
\end{claimproof}

\begin{claim}
  \label{claim:cycleT}
  Let $\beta$ be a feasible cycle containing a zero-test transition, on a state $q$. For every $\vec{x},\vec{y}\in\setN^d$ and $n\in\setN\setminus\{0\}$, we have:
  $$q(\vec{x})\xrightarrow{\beta^n}q(\vec{y}) ~~\Longleftrightarrow~~\vec{x}\geq_1\vec{m}_\beta\wedge \vec{y}\geq_1 \vec{m}_{\overline{\beta}}\wedge \vec{y}=\vec{x}+n\disp{\beta}$$
\end{claim}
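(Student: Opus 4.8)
The plan is to establish \cref{claim:cycleT} as the zero-test counterpart of \cref{claim:cycleVASS}, following the same overall structure but replacing the componentwise order $\geq$ by the order $\geq_1$ and replacing \cref{claim:pathVASS} by its zero-test analogue \cref{claim:pathT}. The only genuinely new phenomenon is the behaviour of the first counter, which the zero-test pins down.

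For the forward direction, I would decompose $q(\vec{x}) \xrightarrow{\beta^n} q(\vec{y})$ into the chain of single-cycle steps $q(\vec{x}+r\disp{\beta}) \xrightarrow{\beta} q(\vec{x}+(r+1)\disp{\beta})$ for $r \in \interval{0}{n-1}$, which is legitimate since $\disp{\beta^n} = n\disp{\beta}$ and the configurations reached after $r$ copies of the cycle are exactly $q(\vec{x}+r\disp{\beta})$. Each such step is a run along the feasible path $\beta$, so \cref{claim:pathT} applies and yields $\vec{x}+r\disp{\beta} \geq_1 \vec{m}_\beta$. Taking $r = 0$ gives $\vec{x} \geq_1 \vec{m}_\beta$. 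Applying the same reasoning to the reversed path along the last step, i.e., to $q(\vec{y}) \xrightarrow{\overline{\beta}} q(\vec{x}+(n-1)\disp{\beta})$ (valid since $p(\vec{u})\xrightarrow{\beta}q(\vec{w})$ iff $q(\vec{w})\xrightarrow{\overline{\beta}}p(\vec{u})$, and $\overline{\beta}$ is feasible and contains a zero-test), gives $\vec{y} \geq_1 \vec{m}_{\overline{\beta}}$. Together with $\vec{y} = \vec{x} + n\disp{\beta}$, this establishes all three conjuncts, uniformly in $n \geq 1$.

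For the backward direction, suppose the right-hand side holds. The crux is to pin the first counter. From $\vec{x} \geq_1 \vec{m}_\beta$ and $\vec{y} \geq_1 \vec{m}_{\overline{\beta}}$ we read off $\vec{x}(1) = \vec{m}_\beta(1)$ and $\vec{y}(1) = \vec{m}_{\overline{\beta}}(1)$. Combining $\vec{y} = \vec{x} + n\disp{\beta}$ with \cref{lem:alphaalpha} (which gives $\vec{m}_{\overline{\beta}}(1) = \vec{m}_\beta(1) + \disp{\beta}(1)$) yields $(n-1)\disp{\beta}(1) = 0$, hence $r\disp{\beta}(1) = 0$ for every $r \in \interval{0}{n-1}$. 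Consequently $(\vec{x}+r\disp{\beta})(1) = \vec{m}_\beta(1)$ for all such $r$, so the first-counter requirement of $\geq_1$ is met at the start of every copy of $\beta$ (this is precisely what lets each zero-test fire). For the remaining components $i \in \interval{2}{d}$ there is no test, and I would reproduce the monotonicity argument of \cref{claim:cycleVASS}: a case split on the sign of $\disp{\beta}(i)$, using $\vec{x}(i) \geq \vec{m}_\beta(i)$ when $\disp{\beta}(i) \geq 0$, and using $\vec{y}(i) \geq \vec{m}_{\overline{\beta}}(i)$ together with \cref{lem:alphaalpha} when $\disp{\beta}(i) < 0$, to conclude $\vec{x}(i) + r\disp{\beta}(i) \geq \vec{m}_\beta(i)$. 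This gives $\vec{x}+r\disp{\beta} \geq_1 \vec{m}_\beta$ for every $r \in \interval{0}{n-1}$, whence \cref{claim:pathT} delivers $q(\vec{x}+r\disp{\beta}) \xrightarrow{\beta} q(\vec{x}+(r+1)\disp{\beta})$, and chaining these $n$ steps yields $q(\vec{x}) \xrightarrow{\beta^n} q(\vec{y})$.

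I expect the first-counter bookkeeping to be the main obstacle: unlike the pure VASS case where all counters are monotone, here the zero-test forces every iteration of $\beta$ to begin with first-counter value exactly $\vec{m}_\beta(1)$, which across two or more iterations is only consistent when $\disp{\beta}(1) = 0$. The subtlety is that $\disp{\beta}(1) = 0$ need not be assumed separately but is forced by the two $\geq_1$ constraints through \cref{lem:alphaalpha}; isolating the identity $(n-1)\disp{\beta}(1) = 0$ and observing that it already guarantees $r\disp{\beta}(1) = 0$ for all relevant $r$ — covering the degenerate case $n = 1$, where the statement collapses to \cref{claim:pathT} and the conjunct $\vec{y} \geq_1 \vec{m}_{\overline{\beta}}$ is redundant — is the part that needs care.
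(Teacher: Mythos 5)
Your proof is correct and follows essentially the same route as the paper's: both directions rest on \cref{claim:pathT} applied to each copy of $\beta$ (and to $\overline{\beta}$), and the backward direction hinges on the same key observation that $\vec{x}(1)=\vec{m}_\beta(1)$, $\vec{y}(1)=\vec{m}_{\overline{\beta}}(1)$ and \cref{lem:alphaalpha} force $(n-1)\disp{\beta}(1)=0$. The only difference is presentational: the paper projects away the first counter and invokes \cref{claim:cycleVASS} as a black box for components $2,\ldots,d$, reducing to the case $d=1$ (where $\disp{\beta}=0$ whenever $n\geq 2$ makes the chaining immediate), whereas you inline that monotonicity argument and verify $\vec{x}+r\disp{\beta}\succeq_1\vec{m}_\beta$ componentwise at every $r$ --- both are sound.
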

\begin{claimproof}
  By considering the $(d-1)$-VASS obtained by removing the first counter, the proof of this claim reduces to the special case $d=1$ thanks to \cref{claim:cycleVASS}. Let $x,y\in\setN$ and $n\in\setN\setminus\{0\}$. Assume first that $q(x)\xrightarrow{\beta^n}q(y)$. It follows that $y=x+n\disp{\beta}$. Moreover, since $n\geq 1$, there exists $x',y'\in\setN$ such that $q(x)\xrightarrow{\beta}q(x')$ and  $q(y)\xrightarrow{\overline{\beta}}q(y')$. \cref{claim:pathT} shows that $x= \vec{m}_{\beta}$ and $y= \vec{m}_{\overline{\beta}}$. Conversely, assume that $x=\vec{m}_\beta$, $y=\vec{m}_{\overline{\beta}}$, and $y=x+n\disp{\beta}$. As $x=\vec{m}_\beta$, $\vec{m}_{\overline{\beta}}=\vec{m}_\beta+\disp{\beta}$ from \cref{lem:alphaalpha}, we deduce that $(n-1)\disp{\beta}=0$. If $n=1$, \cref{claim:pathT} shows that $q(x)\xrightarrow{\beta}q(y)$, and if $n\geq 2$ from $(n-1)\disp{\beta}=0$ we get $\disp{\beta}=0$. In particular \cref{claim:pathT} shows that $q(x)\xrightarrow{\beta^n}q(x)=q(y)$.
\end{claimproof}

\cycleinTVASS*
\begin{proof}
  The lemma immediately follows from \cref{claim:cycleVASS,claim:cycleT}.
\end{proof}

\section{Complexity of Boundedness and Termination}
\label{appendix:sec:complexity-results}

We first focus on the boundedness problem. Let us recall that a $2$-TVASS is bounded from an initial configuration $p(\vec{x})$ if the set of configurations $q(\vec{y})$ reachable from $p(\vec{x})$ is finite. We show that the boundedness problem is decidable in polynomial space by first proving that the reachable configurations of bounded $2$-TVASS have a polynomial size.

A linear path scheme $L$ is said to be \emph{bounded} from a configuration $p(\vec{x})$ if the set of configurations $q(\vec{y})$ satisfying $p(\vec{x})\xrightarrow{\pi}q(\vec{y})$ where $\pi$ is a prefix of a path in $L$ is finite.
 \begin{lemma}\label{lem:boundedcycle}
   Assume that a linear path scheme $\beta^*$ is bounded from a configuration $q(\vec{x})$. Every configuration $q(\vec{y})$ such that $q(\vec{x})\xrightarrow{\beta^n}q(\vec{y})$ with $n\geq 2$ satisfies $\norm{\vec{y}}\leq (1+\norm{\disp{\beta}})\norm{\vec{x}}$.
 \end{lemma}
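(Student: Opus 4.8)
The plan is to reduce the statement to the single bound $n \leq \norm{\vec{x}}$. Since $q(\vec{x}) \xrightarrow{\beta^n} q(\vec{y})$ and $\beta$ is feasible, \cref{lem:cycle-in-TVASS} gives $\vec{y} = \vec{x} + n\disp{\beta}$ together with the executability conditions $\vec{x} \succeq_\beta \vec{m}_\beta$ and $\vec{y} \succeq_\beta \vec{m}_{\overline{\beta}}$. Reading $\vec{y} = \vec{x} + n\disp{\beta}$ component-wise yields $\vec{y}(i) \leq \norm{\vec{x}} + n\norm{\disp{\beta}}$ for every index $i$, so once I know $n \leq \norm{\vec{x}}$ I immediately get $\vec{y}(i) \leq \norm{\vec{x}} + \norm{\vec{x}}\norm{\disp{\beta}} = (1+\norm{\disp{\beta}})\norm{\vec{x}}$, hence the claim. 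The degenerate case $\disp{\beta} = \vec{0}$ is trivial, since then $\vec{y} = \vec{x}$; so I would assume $\disp{\beta} \neq \vec{0}$ and aim to extract a bound on $n$ from a strictly negative component of $\disp{\beta}$.

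The heart of the argument is to show that boundedness of $\beta^*$ from $q(\vec{x})$ forces $\disp{\beta}$ to have a strictly negative component whenever $\disp{\beta} \neq \vec{0}$. I would argue the contrapositive: if $\disp{\beta} \geq \vec{0}$ and $\disp{\beta} \neq \vec{0}$, then $q(\vec{x}) \xrightarrow{\beta^k} q(\vec{x} + k\disp{\beta})$ for every $k \in \setN$, so the infinitely many distinct configurations $q(\vec{x} + k\disp{\beta})$ are reachable along prefixes of paths in $\beta^*$, contradicting boundedness. To obtain this replay for all $k$, I would check the conditions of \cref{lem:cycle-in-TVASS}. When $\beta$ has no zero-test, $\succeq_\beta$ is $\geq$, and using \cref{lem:alphaalpha} one has $\vec{x} + k\disp{\beta} \geq \vec{m}_\beta + \disp{\beta} = \vec{m}_{\overline{\beta}}$ for all $k \geq 1$ (from $\vec{x} \geq \vec{m}_\beta$ and $\disp{\beta} \geq \vec{0}$). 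When $\beta$ contains a zero-test, $\succeq_\beta$ is $\geq_1$, and the delicate point is component $1$, which must match \emph{exactly}: since $n \geq 2$, \cref{lem:cycle-in-TVASS} and \cref{lem:alphaalpha} give $\vec{x}(1) = \vec{m}_\beta(1)$ and $\vec{y}(1) = \vec{m}_\beta(1) + \disp{\beta}(1)$, so $(n-1)\disp{\beta}(1) = 0$ and thus $\disp{\beta}(1) = 0$. With $\disp{\beta}(1) = 0$ the tested counter returns to its initial value after each iteration, the zero-tests are satisfied identically at every pass, and the $\geq_1$ conditions for $\vec{x} + k\disp{\beta}$ hold for all $k$ exactly as in the VASS case.

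I expect the zero-test case to be the main obstacle, precisely because $2$-TVASS are not monotone in general: one cannot replay a cycle from a pointwise larger configuration if the two configurations differ on the tested counter. The argument sidesteps this by first deducing $\disp{\beta}(1) = 0$ from the hypothesis $n \geq 2$, so that the configurations $\vec{x} + k\disp{\beta}$ differ from $\vec{x}$ only on the non-tested counters, restoring the usual monotone replay. Once a component $j$ with $\disp{\beta}(j) < 0$ is secured (for $\disp{\beta} \neq \vec{0}$), the bound on $n$ follows from $\vec{y}(j) = \vec{x}(j) + n\disp{\beta}(j) \geq 0$, which gives $n \leq n\,|\disp{\beta}(j)| \leq \vec{x}(j) \leq \norm{\vec{x}}$ because $|\disp{\beta}(j)| \geq 1$. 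Combined with the component-wise estimate of the first paragraph, this yields $\norm{\vec{y}} \leq (1+\norm{\disp{\beta}})\norm{\vec{x}}$, completing the proof.
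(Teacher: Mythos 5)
Your proof is correct and follows essentially the same route as the paper's: both deduce $\disp{\beta}(1)=0$ in the zero-test case from $n\geq 2$ via \cref{lem:cycle-in-TVASS}, rule out $\disp{\beta}\geq\vec{0}$ with $\disp{\beta}\neq\vec{0}$ by replaying the cycle against boundedness, and then extract $n\leq\norm{\vec{x}}$ from a negative component of $\disp{\beta}$. Your write-up is in fact slightly more explicit than the paper's about verifying the $\succeq_\beta$ conditions for the replay, but the argument is the same.
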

 \begin{proof}
   Assume that $q(\vec{x})\xrightarrow{\beta^n}q(\vec{y})$ with $n\geq 2$. It follows that $\beta$ is feasible. Moreover, since $n\geq 2$, \cref{lem:cycle-in-TVASS} shows that if $\beta$ contains a zero-test action, then $\disp{\beta}(1)=0$. If $\disp{\beta}\geq \vec{0}$, \cref{lem:cycle-in-TVASS} shows that $q(\vec{x})\xrightarrow{\beta^m}q(\vec{x}+m\disp{\beta})$ for every $m\in\setN$. Since $\beta^*$ is bounded from $q(\vec{x})$, the set $\{\vec{x}+m\disp{\beta} \mid m\in\setN\}$ is finite. We deduce that $\disp{\beta}=\vec{0}$. In that case, $\vec{y}=\vec{x}$ and we are done. So, we can assume that $\disp{\beta}\not\geq \vec{0}$. So there exists $i\in\{1,\ldots,d\}$ such that $\disp{\beta}(i)\leq -1$. From $\vec{x}+n\disp{\beta}=\vec{y}\geq\vec{0}$, we derive $\vec{x}(i)\geq n$. Hence $n\leq \norm{\vec{x}}$. It follows that $\norm{\vec{y}}\leq \norm{\vec{x}}(1+\norm{\disp{\beta}})$ and we are done. 
 \end{proof}

 \begin{lemma}\label{lem:boundedlps}
   Let $L$ be a linear path scheme bounded from a configuration $p(\vec{x})$. For every configuration $q(\vec{y})$ such that $p(\vec{x})\xrightarrow{L}q(\vec{y})$, we have:
   $$\norm{\vec{y}}\leq (\norm{\vec{x}}+|L|\norm{\Sigma})(1+|L|\norm{\Sigma})^{|L|_*}$$
   \end{lemma}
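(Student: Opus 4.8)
The plan is to follow the run witnessing $p(\vec{x}) \xrightarrow{L} q(\vec{y})$ block by block, turning each cycle $\beta_j^{n_j}$ into a \emph{multiplicative} growth factor and each fixed path $\alpha_j$ into an \emph{additive} contribution, and then to collect these into a single linear recurrence. Concretely, fix integers $n_1, \ldots, n_k$ (with $k = |L|_*$) and intermediate configurations so that
$$p(\vec{x}) \xrightarrow{\alpha_0} s_0(\vec{u}_0) \xrightarrow{\beta_1^{n_1}} r_1(\vec{v}_1) \xrightarrow{\alpha_1} s_1(\vec{u}_1) \cdots \xrightarrow{\beta_k^{n_k}} r_k(\vec{v}_k) \xrightarrow{\alpha_k} q(\vec{y}).$$
Since a fixed path changes every counter by at most its length times $\norm{\Sigma}$, I immediately get $\norm{\vec{u}_j} \leq \norm{\vec{v}_j} + |\alpha_j|\,\norm{\Sigma}$ for each $j$, together with $\norm{\vec{u}_0} \leq \norm{\vec{x}} + |\alpha_0|\,\norm{\Sigma}$ and $\vec{u}_k = \vec{y}$.

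For the cycle blocks I want the uniform bound $\norm{\vec{v}_j} \leq (1 + \norm{\disp{\beta_j}})\,\norm{\vec{u}_{j-1}} + |\beta_j|\,\norm{\Sigma}$, valid for every value of $n_j$. When $n_j \leq 1$ this is just the crude path bound, and when $n_j \geq 2$ it is exactly \cref{lem:boundedcycle}, provided I first argue that $\beta_j^*$ is bounded from $s_{j-1}(\vec{u}_{j-1})$. This boundedness transfer is the main obstacle: boundedness is assumed only for $L$ from $p(\vec{x})$, so I must localise it to each cycle. The key observation is that every prefix of a path in $\beta_j^*$, prepended with the fixed prefix $\alpha_0 \beta_1^{n_1} \cdots \beta_{j-1}^{n_{j-1}} \alpha_{j-1}$, is itself a prefix of a path in $L$ (complete it with $\alpha_j \beta_{j+1}^0 \alpha_{j+1} \cdots \beta_k^0 \alpha_k$). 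Hence the configurations reachable from $s_{j-1}(\vec{u}_{j-1})$ along prefixes of $\beta_j^*$ form a subset of the (finite) set reachable from $p(\vec{x})$ along prefixes of paths in $L$, so $\beta_j^*$ is indeed bounded from $s_{j-1}(\vec{u}_{j-1})$ and \cref{lem:boundedcycle} applies.

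Combining the two per-block inequalities gives $\norm{\vec{u}_j} \leq c_j \norm{\vec{u}_{j-1}} + a_j$ with $c_j \eqdef 1 + \norm{\disp{\beta_j}}$ and $a_j \eqdef (|\beta_j| + |\alpha_j|)\,\norm{\Sigma}$. Unrolling this recurrence yields
$$\norm{\vec{y}} \leq \Big(\prod_{j=1}^k c_j\Big)\norm{\vec{u}_0} + \sum_{j=1}^k \Big(\prod_{i=j+1}^k c_i\Big) a_j,$$
and since each $c_j \geq 1$ every partial product is at most $P \eqdef \prod_{j=1}^k c_j$, so the right-hand side is bounded by $P\big(\norm{\vec{u}_0} + \sum_{j=1}^k a_j\big)$. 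It then remains to bound the two factors. Using $\norm{\disp{\beta_j}} \leq |\beta_j|\,\norm{\Sigma} \leq |L|\,\norm{\Sigma}$ gives $P \leq (1 + |L|\,\norm{\Sigma})^{|L|_*}$, while the lengths telescope into $|L|$, yielding $\norm{\vec{u}_0} + \sum_{j=1}^k a_j \leq \norm{\vec{x}} + \norm{\Sigma}\big(|\alpha_0| + \sum_{j=1}^k (|\alpha_j| + |\beta_j|)\big) = \norm{\vec{x}} + |L|\,\norm{\Sigma}$. Multiplying the two estimates gives exactly the claimed inequality. The only genuinely delicate point is the localisation of boundedness to each cycle in the second paragraph; the remainder is the routine unrolling of a linear recurrence with coefficients at least $1$.
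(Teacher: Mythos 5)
Your proof is correct and follows essentially the same route as the paper's: both transfer boundedness from $L$ to each cycle $\beta_j^*$ at the relevant intermediate configuration, apply \cref{lem:boundedcycle} to get a multiplicative factor $1+\norm{\disp{\beta_j}}\leq 1+|L|\norm{\Sigma}$ per cycle and an additive $|\alpha_j|\norm{\Sigma}$ per path, and combine. The only (cosmetic) differences are that the paper proceeds by induction on $|L|_*$ and handles $n_j\in\{0,1\}$ by shrinking the scheme, whereas you unroll the recurrence directly and absorb those cases into the additive term; your explicit justification that $\beta_j^*$ is bounded from $s_{j-1}(\vec{u}_{j-1})$ spells out a step the paper only asserts.
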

   \begin{proof}
     We prove the lemma by induction on a natural number $k$ bounding $|L|_*$. The rank $0$ is trivial since $|L|_*=0$ implies that $L$ is a single path. Let us assume the rank $k-1$ proved for some $k\geq 1$ and let $L=\alpha_0\beta_1^*\alpha_1\ldots \beta_k^*\alpha_k$ be a linear path scheme bounded from a configuration $p(\vec{x})$. Assume that $p(\vec{x})\xrightarrow{\pi}q(\vec{y})$ for some path $\pi$ in $L$. There exists $n_1,\ldots,n_k\in\setN$ such that:
     $$\pi=\alpha_0\beta_1^{n_1}\alpha_1\ldots \beta_k^{n_k}\alpha_k$$
     Notice that if for some $j\in\{1,\ldots,k\}$ we have $n_j\in\{0,1\}$, then the linear path scheme $L'$ obtained from $L$ be replacing $\alpha_{j-1}\beta_j^*\alpha_j$ by $\alpha_{j-1}\alpha_j$ or $\alpha_{j-1}\beta_j\alpha_j$ is a linear path scheme bounded from $p(\vec{x})$ such that $p(\vec{x})\xrightarrow{L'}q(\vec{y})$ and we are done by induction since $|L'|_*=k-1$ and $|L'|\leq|L|$.
     So, we can assume that $n_1,\ldots,n_k\geq 2$. Let us introduce the linear path scheme $L'=\alpha_0\beta_1^*\alpha_1\ldots \beta_{k-1}^*\alpha_{k-1}$ and observe that there exist configurations $r(\vec{a})$ and $r(\vec{b})$ such that:
     $$p(\vec{x})\xrightarrow{L'}r(\vec{a})\xrightarrow{\beta_k^{n_k}}r(\vec{b})\xrightarrow{\alpha_k}q(\vec{y})$$
     Since $L'$ is bounded from $p(\vec{x})$, we deduce by induction that
     $$\norm{\vec{a}}\leq (\norm{\vec{x}}+|L'|\norm{\Sigma})(1+|L'|\norm{\Sigma})^{k-1}$$
     Since $L$ is bounded from $p(\vec{x})$, observe that $\beta_k^*$ is bounded from $r(\vec{a})$. \cref{lem:boundedcycle} shows that $\norm{\vec{b}}\leq (1+|L|\norm{\Sigma})\norm{\vec{a}}$. Moreover, from $\vec{y}=\vec{b}+\disp{\alpha_k}$, we get $\norm{\vec{y}}\leq \norm{\vec{b}}+|\alpha_k|\norm{\Sigma}$. We have proved the induction.
  \end{proof}

  We deduce a polynomial bound on the size of reachable configurations.
  \begin{corollary}\label{cor:boundedTVASS}
    For every $2$-TVASS $\vass$ bounded from an initial configuration $p(\vec{x})$, the reachable configurations $q(\vec{y})$ from $p(\vec{x})$ satisfy:
    $$\norm{\vec{y}}\leq (1+\norm{\vec{x}})(|Q|+\norm{\Sigma})^{O(|Q|^3)}$$
  \end{corollary}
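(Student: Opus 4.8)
The plan is to obtain \cref{cor:boundedTVASS} as a direct consequence of the succinct flattenability result \cref{cor:lps} and the norm bound \cref{lem:boundedlps} for bounded linear path schemes. First I would dispose of the degenerate case $\Sigma \subseteq \{\vec{0}, \ztest\}$: then no addition changes any counter, so every reachable configuration $q(\vec{y})$ has $\vec{y} = \vec{x}$ and the claimed bound holds trivially. Hence I may assume $\norm{\Sigma} \geq 1$, which ensures that $N \eqdef |Q| + \norm{\Sigma} \geq 2$; this lower bound on $N$ is what lets the $O$-notation in the exponent absorb the various additive constants.

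Now fix a configuration $q(\vec{y})$ reachable from $p(\vec{x})$, so $p(\vec{x}) \xrightarrow{*} q(\vec{y})$. By \cref{cor:lps} there is a linear path scheme $L$ with $p(\vec{x}) \xrightarrow{L} q(\vec{y})$, $|L| \leq N^{c}$ and $|L|_* \leq c|Q|^3$ for a suitable constant $c \geq 1$. The key observation---and the step I expect to require the most care---is that $L$ is \emph{bounded} from $p(\vec{x})$ in the sense of \cref{lem:boundedlps}. Indeed, any configuration $r(\vec{z})$ reached along a prefix of a path in $L$ satisfies $p(\vec{x}) \xrightarrow{*} r(\vec{z})$, hence lies in the reachability set of $\vass$ from $p(\vec{x})$; since $\vass$ is bounded from $p(\vec{x})$, this set is finite, and therefore so is the set of configurations reached via prefixes of $L$. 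Thus $L$ is bounded from $p(\vec{x})$.

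With boundedness of $L$ established, I would invoke \cref{lem:boundedlps} to get
$$\norm{\vec{y}} \leq (\norm{\vec{x}} + |L|\norm{\Sigma})(1 + |L|\norm{\Sigma})^{|L|_*}.$$
It then remains to substitute the bounds on $|L|$ and $|L|_*$ and simplify. Using $|L|\norm{\Sigma} \leq N^{c+1}$ together with $N \geq 2$, the leading factor is at most $(1 + \norm{\vec{x}})N^{c+1}$ (as $N^{c+1} \geq 1$), and the exponential factor is at most $(2N^{c+1})^{c|Q|^3} \leq N^{(c+2)c|Q|^3}$. Multiplying the two and noting that $c+1 + (c+2)c|Q|^3 = O(|Q|^3)$ (because $|Q| \geq 1$) yields $\norm{\vec{y}} \leq (1 + \norm{\vec{x}})(|Q| + \norm{\Sigma})^{O(|Q|^3)}$, as claimed. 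The only genuinely non-routine ingredient is the transfer of boundedness from $\vass$ to the particular linear path scheme $L$; everything else is the mechanical substitution of the polynomial bounds from \cref{cor:lps} into \cref{lem:boundedlps}.
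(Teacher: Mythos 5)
Your proposal is correct and follows essentially the same route as the paper's own proof: dispose of the degenerate case $\Sigma \subseteq \{\vec{0}, \ztest\}$, apply \cref{cor:lps} to get a small linear path scheme $L$, observe that $L$ inherits boundedness from $\vass$ (a step the paper asserts in one line and you justify explicitly, correctly, via the inclusion of prefix-reachable configurations in the reachability set), and then substitute the bounds on $|L|$ and $|L|_*$ into \cref{lem:boundedlps}. The arithmetic simplification also matches the paper's, up to the choice of explicit constant in the exponent.
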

  \begin{proof}
    Let $c\geq 1$ be a constant satisfying \cref{cor:lps}, i.e., such that $O(1)\leq c$ and $O(|Q|^3)\leq c|Q|^3$.
    Let us consider a $2$-TVASS $\vass=(Q,\Sigma,\Delta)$, and let $N=|Q|+\norm{\Sigma}$. Let us consider a configuration $q(\vec{y})$ such that $p(\vec{x})\xrightarrow{*}q(\vec{y})$.
    The case where $\Sigma \subseteq \{\vec{0}, \ztest\}$ is trivial (as $\vec{x} = \vec{y}$ in that case),
    so we assume that $\norm{\Sigma} \geq 1$, hence $N \geq 2$, for the remainder of the proof.
    \cref{cor:lps} shows that there exists a linear path scheme $L$ with $|L|\leq N^c$ and $|L|_*\leq c|Q|^3$ such that $p(\vec{x})\xrightarrow{L}q(\vec{y})$. Since $\vass$ is bounded from $p(\vec{x})$ it follows that $L$ is bounded from $p(\vec{x})$. From \cref{lem:boundedlps}, we derive that $\norm{\vec{y}}$ is bounded by:
    $$
    (\norm{\vec{x}}+|L|\norm{\Sigma})(1+|L|\norm{\Sigma})^{|L|_*}
    \leq
    (\norm{\vec{x}}+N^{c+1})(1+N^{c+1})^{c|Q|^3}
    \leq
    (1+\norm{\vec{x}}) N^{5c^2|Q|^3}
    $$
    This concludes the proof of the lemma.
  \end{proof}

  It follows that a $2$-TVASS is not bounded from an initial configuration $p(\vec{x})$ if there exists a reachable configuration $q(\vec{y})$ that exceeds the bound $B$ introduced in \cref{cor:boundedTVASS}. In that case, there exists a run such that all intermediate configurations are bounded by $B$ except the last one that exceeds $B$. Note that this last configuration is bounded by $B+\norm{\Sigma}$. We deduce that a bounded polynomial-space exploration of the reachability set of a $2$-TVASS provides a way to decide the boundedness problem. We have proved the following theorem.
 \begin{theorem}\label{thm:boundedness}
   The boundedness problem for $2$-TVASS is PSPACE-complete.
 \end{theorem}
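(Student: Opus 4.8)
The plan is to prove the two directions of PSPACE-completeness separately. Membership is essentially immediate from the size bound already established, so the substantive work is to turn that bound into a polynomial-space decision procedure; hardness is the more delicate point and I will address it last.

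For PSPACE-membership, fix the bound $B \eqdef (1+\norm{\vec{x}})(|Q|+\norm{\Sigma})^{O(|Q|^3)}$ supplied by \cref{cor:boundedTVASS}, so that whenever $\vass$ is bounded from $p(\vec{x})$, every reachable configuration $q(\vec{y})$ satisfies $\norm{\vec{y}} \leq B$. I would first observe that $\vass$ is \emph{unbounded} from $p(\vec{x})$ if, and only if, some configuration of norm strictly greater than $B$ is reachable: the forward implication is the contrapositive of \cref{cor:boundedTVASS}, and the converse holds because only finitely many configurations have norm at most $B$, so an infinite reachable set must contain one of norm exceeding $B$. The key refinement is that such a configuration, if it exists, is reachable by a run all of whose configurations have norm at most $B$ except the last: taking the first configuration along a witnessing run whose norm exceeds $B$, all earlier ones have norm at most $B$, and since a single step changes the norm by at most $\norm{\Sigma}$, this last configuration has norm at most $B+\norm{\Sigma}$. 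As $\log(B+\norm{\Sigma})$ is polynomial in the size of the input, such configurations are describable in polynomial space, so one can nondeterministically guess a run step by step, storing only the current configuration and enforcing the norm bound $B+\norm{\Sigma}$, and accept as soon as a configuration of norm exceeding $B$ is reached. Membership in PSPACE then follows from Savitch's theorem.

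For PSPACE-hardness I would reduce from a reachability problem for $2$-VASS, which is PSPACE-hard, exploiting that every $2$-VASS is a $2$-TVASS. The natural reduction takes a designated control-state $q$, appends the self-loop $(q,(0,1),q)$ incrementing the non-tested second counter, and argues that the enriched system is unbounded from $p(\vec{x})$ if, and only if, $q$ is reachable from $p(\vec{x})$: once $q$ is reached the self-loop pumps the second counter without bound. The step I expect to be the main obstacle is that, for this equivalence to hold, the original system must itself be bounded from $p(\vec{x})$ (otherwise it could be unbounded even when $q$ is unreachable), and a $2$-TVASS cannot test the \emph{second} counter, so the reduction can only detect reachability of a control-state, not of a fully specified configuration $q(\vec{y})$. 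The plan is therefore to invoke the $2$-VASS hardness construction in the form in which it simulates a space-bounded computation: there the counters encode bounded data, so the reachable set is finite, and the accepting control-state is reachable precisely when the simulated computation accepts. Appending the self-loop at that accepting state then makes unboundedness equivalent to acceptance. Verifying that the cited construction indeed produces a bounded system reaching a designated control-state with unconstrained counter values is the point that requires care, and it is what makes the reduction compatible with the intrinsic inability of $2$-TVASS to test the second counter.
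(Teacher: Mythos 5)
Your PSPACE-membership argument is essentially the paper's: both take the bound $B$ from \cref{cor:boundedTVASS}, observe that unboundedness is equivalent to reaching a configuration of norm exceeding $B$, note that the first such configuration along a witnessing run is preceded only by configurations of norm at most $B$ and itself has norm at most $B+\norm{\Sigma}$, and conclude with a polynomial-space bounded exploration of the reachability set. Where you diverge is on hardness, which the paper does not argue explicitly: it implicitly inherits the lower bound from $2$-VASS, for which boundedness is already known to be PSPACE-hard in the very reference \cite{BFGHM-lics15} (and every $2$-VASS is a $2$-TVASS), so no reduction needs to be built. Your reduction from control-state reachability via an appended self-loop $(q,(0,1),q)$ is a legitimate alternative, and the two caveats you isolate --- that the source system must itself be bounded, and that only control-state (not configuration) reachability can be detected this way --- are exactly the right ones; they are discharged by the standard PSPACE-hardness constructions, which simulate space-bounded computations and hence keep the counters bounded while making acceptance equivalent to reaching a designated state with unconstrained counter values. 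What your route buys is self-containedness; what the paper's (implicit) route buys is that none of these verification obligations arise, since the hardness of boundedness for the subclass is a black box.
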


 \medskip
 
 Let us now focus on the termination problem. Recall that a $2$-TVASS is not terminating from an initial configuration $p(\vec{x})$ if there exists an infinite run from $p(\vec{x})$, i.e., an infinite sequence $(q_n(\vec{x}_n)_{n\in\setN}$ of configurations such that $q_0(\vec{x}_0)=p(\vec{x})$ and $q_n(\vec{x}_n) \rightarrow q_{n+1}(\vec{x}_{n+1})$ for all $n \in \setN$. Since a $2$-TVASS is finitely branching, we observe that if a $2$-TVASS is not bounded from an initial configuration $p(\vec{x})$ then it is not terminating from $p(\vec{x})$. It follows that the termination problem reduces to the special bounded case. In that case, \cref{cor:boundedTVASS} shows that the size of reachable configurations are polynomially bounded. It follows that non-termination is equivalent to the existence of a reachable configuration $q(\vec{y})$ from $p(\vec{x})$ such that $q(\vec{y}) \rightarrow \rcomp \xrightarrow{*} q(\vec{y})$. Since such a property can be decided in polynomial space, we deduce the following result.
 \begin{theorem}\label{thm:termination}
   The termination problem for $2$-TVASS is PSPACE-complete.
 \end{theorem}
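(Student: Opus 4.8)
The plan is to prove PSPACE-membership, the matching PSPACE-hardness being inherited from the $2$-VASS case by standard reductions. For membership, I would reduce non-termination to the \emph{bounded} case, which is exactly the situation already controlled by \cref{thm:boundedness} and \cref{cor:boundedTVASS}.

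The first ingredient is that a $2$-TVASS is finitely branching. Consequently, if $\vass$ is \emph{not} bounded from $p(\vec{x})$, then the unfolding of its reachable configurations forms an infinite, finitely-branching tree, so by König's lemma it admits an infinite branch, i.e.\ an infinite run; hence $\vass$ does not terminate from $p(\vec{x})$. This already disposes of the unbounded case, and by \cref{thm:boundedness} boundedness can be tested in polynomial space. So it remains only to decide non-termination under the assumption that $\vass$ is bounded from $p(\vec{x})$.

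In that regime, \cref{cor:boundedTVASS} bounds the norm of every reachable configuration, so all reachable configurations have polynomial size under the binary encoding. I would then use the fact that a bounded infinite run visits only finitely many configurations and must therefore revisit one of them: non-termination is thus equivalent to the existence of a reachable configuration $q(\vec{y})$ lying on a nonempty cycle, i.e.\ $q(\vec{y}) \rightarrow \rcomp \xrightarrow{*} q(\vec{y})$. Since $q(\vec{y})$ has polynomial size, guessing it and verifying both $p(\vec{x}) \xrightarrow{*} q(\vec{y})$ and $q(\vec{y}) \rightarrow \rcomp \xrightarrow{*} q(\vec{y})$ can be carried out in polynomial space, using the PSPACE reachability procedure underlying \cref{lem:shortpath}; by Savitch's theorem this nondeterministic polynomial-space search stays in PSPACE.

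The genuinely delicate points are the two reductions rather than any new combinatorics: first, arguing that unboundedness \emph{implies} non-termination without circularity, which is precisely where finite branching and König's lemma are needed; and second, establishing the cycle characterisation in the bounded case, whose completeness rests on a pigeonhole argument over the finite reachable set while its soundness is immediate. Everything quantitative is already provided by \cref{cor:boundedTVASS} and the reachability machinery, so no further estimates are required.
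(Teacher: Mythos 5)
Your proposal is correct and follows essentially the same route as the paper: König's lemma on the finitely branching unfolding shows unboundedness implies non-termination, and in the bounded case \cref{cor:boundedTVASS} polynomially bounds all reachable configurations so that non-termination reduces to finding a reachable configuration $q(\vec{y})$ with $q(\vec{y}) \rightarrow \rcomp \xrightarrow{*} q(\vec{y})$, checkable in polynomial space. No substantive differences to report.
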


\end{document}